\tikzstyle{block} = [draw, fill=white, rectangle, 
\tikzstyle{bigblock} = [draw, fill=white, rectangle, 
\tikzstyle{Bigblock} = [draw, fill=white, rectangle, 
\tikzstyle{input} = [coordinate]
\tikzstyle{output} = [coordinate]
\tikzstyle{pinstyle} = [pin edge={to-,thin,black}]
\pgfplotsset{compat = newest}
\theoremstyle{definition}
\newtheorem{theorem}{Theorem}[section]
\newtheorem{corollary}[theorem]{Corollary}
\newtheorem{proposition}[theorem]{Proposition}
\newtheorem{definition}[theorem]{Definition}
\newtheorem{theorem*}{Theorem}
\theoremstyle{remark}
\newtheorem*{rk}{Remark}
\DeclareMathOperator{\Poi}{\text{Poi}}
\DeclareMathOperator{\Ber}{\text{Ber}}
\DeclareMathOperator{\maxi}{\text{maximize}}
\DeclareMathOperator{\st}{\text{subject to}}
\DeclarePairedDelimiter\ceil{\lceil}{\rceil}
\DeclarePairedDelimiterX\set[1]\lbrace\rbrace{#1}
\title{\bfseries Broadcast Channel Coding: Algorithmic Aspects and Non-Signaling Assistance}
\author{Omar Fawzi\footnote{Univ Lyon, ENS Lyon, UCBL, CNRS, Inria,  LIP, F-69342, Lyon Cedex 07, France. \href{mailto:omar.fawzi@ens-lyon.fr}{\texttt{omar.fawzi@ens-lyon.fr}}} \qquad Paul Fermé\footnote{Univ Lyon, ENS Lyon, UCBL, CNRS, Inria, LIP, F-69342, Lyon Cedex 07, France. \href{mailto:paul.ferme@ens-lyon.fr}{\texttt{paul.ferme@ens-lyon.fr}}}
}
\date{}
\begin{document}
\maketitle

\abstract{We address the problem of coding for classical broadcast channels, which entails maximizing the success probability that can be achieved by sending a fixed number of messages over a broadcast channel. For point-to-point channels, Barman and Fawzi found in~\cite{BF18} a $(1-e^{-1})$-approximation algorithm running in polynomial time, and showed that it is \textrm{NP}-hard to achieve a strictly better approximation ratio. Furthermore, these algorithmic results were at the core of the limitations they established on the power of non-signaling assistance for point-to-point channels. It is natural to ask if similar results hold for broadcast channels, exploiting links between approximation algorithms of the channel coding problem and the non-signaling assisted capacity region.
  
  In this work, we make several contributions on algorithmic aspects and non-signaling assisted capacity regions of broadcast channels. For the class of deterministic broadcast channels, we describe a $(1-e^{-1})^2$-approximation algorithm running in polynomial time, and we show that the capacity region for that class is the same with or without non-signaling assistance. Finally, we show that in the value query model, we cannot achieve a better approximation ratio than $\Omega\left(\frac{1}{\sqrt{m}}\right)$ in polynomial time for the general broadcast channel coding problem, with $m$ the size of one of the outputs of the channel.}

\section{Introduction}
Broadcast channels, introduced by Cover in~\cite{Cover72}, describe the simple network communication setting where one sender aims to transmit individual messages to two receivers. Contrary to point-to-point channels~\cite{Shannon48} or multiple-access channels~\cite{Liao73,Ahlswede73}, the capacity region of broadcast channels is known only for particular classes such as the degraded~\cite{Bergmans73,Gallager74,AK75}, deterministic~\cite{Marton77,Pinsker78} and semi-deterministic~\cite{GIP80}. Only inner bounds~\cite{Cover75,Meulen75,Marton79} and outer bounds~\cite{Sato78,Marton79,NG07,GN20} on the capacity region are known in the general setting.

On the one hand, from the point of view of quantum information, it is natural to ask whether additional resources, such as quantum entanglement or more generally non-signaling correlations between the parties, changes the capacity region. A non-signaling correlation is a multipartite input-output box shared between parties that, as the name suggests, cannot by itself be used to send information between parties. However, non-signaling correlations such as the ones generated by measurements of entangled quantum particles, can provide an advantage for various information processing tasks and nonlocal games. The study of such correlations has given rise to the quantum information area known as nonlocality~\cite{BCPSW14}. For example, in the context of channel coding, there exists classical point-to-point channels for which quantum entanglement between the sender and the receiver can increase the optimal success probability for sending classical information with a single use of the channel~\cite{CLMW10,PLMKR11,BF18}. There are even such channels for which entanglement assistance increases the zero-error capacity~\cite{CLMW10}. However, for classical point-to-point channels, entanglement~\cite{BBCJPW93,BSST99} and even more generally non-signaling correlations~\cite{Matthews12} do not change the capacity of the channel.

In the network setting, the behavior is different. Quek and Shor showed in~\cite{QS17} the existence of two-sender two-receiver interference channels with gaps between their classical, quantum-entanglement assisted and non-signaling assisted capacity regions. Following this result, Leditzky et al.~\cite{LALS20} showed that quantum entanglement shared between the two senders of a multiple access channel can strictly enlarge the capacity region; see~\cite{SLSS22,SLVG23,PDB23} for further results on the effect of entangled transmitters for the multiple access channel. A general investigation of non-signaling resources on multiple-access channel coding was done in~\cite{FF22, FF23}, where it was notably proved that non-signaling advantage occurs even for a simple textbook multiple-access channel: the binary adder channel. However, the influence of nonlocal resources on broadcast channels has been comparably less studied. We only know that quantum entanglement shared between decoders does not change the capacity region~\cite{PDB21}. In a different context, quantum illumination~\cite{Llo08} shows that entanglement between a signal photon and an idler photon can enhance the performance of sensing.

On the other hand, from an algorithmic point of view, an important question is the complexity of the channel coding problem, which entails maximizing the success probability that can be achieved by sending a fixed number of messages over a channel. However, as solving exactly this problem is \textrm{NP}-hard, a natural question that arises is its approximability. For point-to-point channels, Barman and Fawzi found in~\cite{BF18} a $(1-e^{-1})$-approximation algorithm running in polynomial time. They showed that it is \textrm{NP}-hard to approximate the channel coding problem in polynomial time for any strictly better ratio. For $\ell$-list-decoding, where the decoder is allowed to output a list of $\ell$ guesses, a polynomial-time approximation algorithm achieving a $1-\frac{\ell^{\ell}e^{-\ell}}{\ell!}$ ratio was found in~\cite{BFGG20}, and it was shown to be \text{NP}-hard to do better in \cite{BFF21}. For multiple-access channel coding, the complexity of the problem can be linked to the bipartite densest subgraph problem~\cite{FKP01, Ferme23}, which cannot be approximated within any constant ratio under a complexity hypothesis on random $k$-SAT formulas~\cite{AAMMW11}. However, the approximability of broadcast channel coding has not been addressed in the literature.

In the point-to-point scenario studied in~\cite{BF18}, the existence of a constant-ratio approximation algorithm is linked to the equality of the capacity regions with and without non-signaling assistance. Indeed, giving non-signaling assistance to the channel coding problem turns it into a linear program, thus computable in polynomial time. In fact, it is equal to its natural linear relaxation, which is a common strategy towards approximating an integer linear program. Showing that this approximation strategy guarantees a constant ratio is the key ingredient in proving the equality of the capacity regions with and without non-signaling assistance. 
This raises the following questions on broadcast channels: Does the capacity region of the broadcast channel change when non-signaling resources between parties are allowed? What is the best approximability ratio of the broadcast channel coding problem? How those two questions are related?

\paragraph{Contributions} 
As a first result, we prove that the sum success probabilities of the broadcast channel coding problem are the same with and without non-signaling assistance between decoders; see Theorem~\ref{theo:NSdecoders}. This strengthens a result by~\cite{PDB21} establishing that entanglement between the decoders does not change the capacity region.

The main focus of this paper is to study the influence of sharing a non-signaling resource between the three parties. Our main result shows that for the class of deterministic broadcast channels, non-signaling resources shared between the three parties does not change the capacity region; see Theorem~\ref{theo:NSdet} and Corollary~\ref{cor:NSdet}. In order to prove this result, we consider the algorithmic problem of optimal channel coding for a deterministic broadcast channel. For this problem, we describe a $(1-e^{-1})^2$-approximation algorithm running in polynomial time. This is achieved through a graph interpretation of the problem, where one aims at partitioning a bipartite graph into $k_1$ and $k_2$ parts, such that the resulting quotient graph is the densest possible; see Proposition~\ref{prop:BCCisDQG} and Theorem~\ref{theo:DQGapprox}. To prove our result on the limitations of non-signaling assistance for deterministic broadcast channels, we use the same ideas as the ones involved in the analysis of the approximation algorithm.

As far as hardness is concerned, we consider the subproblem of broadcast channel coding where the number of messages one decoder is responsible of is maximum. This subproblem can be interpreted as a social welfare maximization problem. In the theory of fair division~\cite{BT96,Moulin03}, social welfare maximization entails partitioning a set of goods among agents in order to maximize the sum of their utilities. The social welfare problem has been extensively studied through a black box approach~\cite{BN05}, which led to a precise analysis of achievable approximation ratio as well as hardness results~\cite{DS06,MSV08}, depending on the class of utility functions considered and the type of black box access to them. We refine the hardness result for the class of fractionally sub-additive utility functions to the subclass coming from the broadcast channel coding subproblem interpretation. Specifically, we show that in the value query model, we cannot achieve a better approximation ratio than $\Omega\left(\frac{1}{\sqrt{m}}\right)$ in polynomial time, with $m$ the size of one of the outputs of the channel: see Theorem \ref{theo:VQhardnessBC}. This gives some evidence that the broadcast channel coding problem might be hard to approximate. Following the previous discussion on the links between approximation algorithms and non-signaling capacity regions, this hardness evidence is a first step towards showing that sharing a non-signaling resource between the three parties of a broadcast channel can enlarge its capacity region.

\paragraph{Organization} In Section~\ref{section:prelim}, we introduce some basic definitions as well as useful notions that will be used throughout this work. In Section~\ref{section:BCCpb}, we define precisely the different versions of the broadcast channel coding problem depending on the choice of objective value, and show that they all lead to the same capacity region. In Section~\ref{section:NSBC}, we define the different non-signaling assisted versions of the broadcast channel coding problem. In particular, we show that the sum success probabilities with and without non-signaling assistance shared between decoders are the same, and that it implies that the related capacity regions are the same. In Section~\ref{section:ApproxDetBC}, we address both algorithmic aspects and capacity considerations of deterministic broadcast channels. Specifically, we describe a $(1-e^{-1})^2$-approximation algorithm running in polynomial time for that class, and we show that the capacity region for that class is the same with or without non-signaling assistance. Finally, in Section~\ref{section:HardnessBC}, we show that in the value query model, we cannot achieve a better approximation ratio than $\Omega\left(\frac{1}{\sqrt{m}}\right)$ in polynomial time for the general broadcast channel coding problem, with $m$ the size of one of the outputs of the channel.

\section{Preliminaries}
\label{section:prelim}
\subsection{Broadcast Channels}
Formally, a broadcast channel is given by a conditional probability distribution on input $\mathcal{X}$ and two outputs $\mathcal{Y}_1$ and $\mathcal{Y}_2$, so $W := \left(W(y_1,y_2|x)\right)_{y_1 \in \mathcal{Y}_1, y_2 \in \mathcal{Y}_2, x \in \mathcal{X}}$, with $W(y_1y_2|x) \geq 0$ and such that $\sum_{y_1 \in \mathcal{Y}_1, y_2 \in \mathcal{Y}_2} W(y_1y_2|x) = 1$. We define its marginals $W_1$ and $W_2$ respectively by $W_1(y_1|x) := \sum_{y_2 \in \mathcal{Y}_2} W(y_1y_2|x)$ and $W_2(y_2|x) := \sum_{y_1 \in \mathcal{Y}_1} W(y_1y_2|x)$. We will denote such a broadcast channel by $W : \mathcal{X} \rightarrow \mathcal{Y}_1 \times \mathcal{Y}_2$. The tensor product of two broadcast channels $W: \mathcal{X} \rightarrow \mathcal{Y}_1 \times \mathcal{Y}_2$ and $W': \mathcal{X}' \rightarrow \mathcal{Y}'_1 \times \mathcal{Y}'_2$ is denoted by $W \otimes W' : \mathcal{X} \times \mathcal{X}' \rightarrow (\mathcal{Y}_1 \times \mathcal{Y}_1') \times (\mathcal{Y}_2 \times \mathcal{Y}_2')$ and defined by $(W \otimes W')(y_1y_1'y_2y_2'|xx') := W(y_1y_2|x) \cdot W'(y_1'y_2'|x')$. We define $W^{\otimes n}(y_1^ny_2^n|x^n) := \prod_{i=1}^nW(y_{1,i}y_{2,i}|x_i)$, for $y_1^n := y_{1,1} \ldots y_{1,n} \in \mathcal{Y}_1^n$ and $y_2^n := y_{2,1} \ldots y_{2,n} \in \mathcal{Y}_2^n$ and $x^n := x_1 \ldots x_n \in \mathcal{X}^n$. We will use the notation $[k]:=\{1,\ldots,k\}$.

\subsection{Capacity Regions}
Given a notion of success probability $\mathrm{S}(W,k_1,k_2)$, that is to say the probability of correctly encoding and decoding $k_1$ and $k_2$ messages for the channel $W$, we can define the related capacity region. 
\begin{definition}[Capacity Region ${\mathcal{C}[\mathrm{S}](W)}$ for a success probability $\mathrm{S}(W,k_1,k_2)$]
  \label{defi:generalCapacityRegion}
  A rate pair $(R_1,R_2)$ is $\mathrm{S}$-achievable (for the channel $W$) if:
  \[ \underset{n \rightarrow +\infty}{\lim} \mathrm{S}(W^{\otimes n},\ceil{2^{R_1n}},\ceil{2^{R_2n}}) = 1 \ . \]
  We define the $\mathrm{S}$-capacity region $\mathcal{C}[\mathrm{S}](W)$ as the closure of the set of all achievable rate pairs (for the channel $W$).
\end{definition}


\subsection{Negatively Associated Random Variables}
We present a weaker notion of independence for random variables which is called negative association as introduced in~\cite{JP83}, for which the Chernoff-Hoeffding bounds still hold.

\begin{definition}
   \label{defi:NA}
Random variables $X_1,\ldots,X_n$ are said to be negatively associated if for every pair of disjoints subsets $I,J$ of $[n]$ and (coordinate-wise) increasing functions $f,g$, we have:
    \[ \mathbb{E}[f(\{X_i : i \in I\})) \cdot g(\{X_i : i \in J\})] \leq  \mathbb{E}[f(\{X_i : i \in I\})] \cdot \mathbb{E}[g(\{X_i : i \in J\})] \ .\]
\end{definition}

\begin{proposition}[Property $\text{P}_1$ of~\cite{JP83}]
  \label{prop:NA_P1}
  A pair of random variable $X,Y$ is negatively associated if and only if:
  \[ \forall x \in \mathcal{X},  \forall y \in \mathcal{Y}, P_{XY}(x,y) \leq P_X(x)P_Y(y) \ .\]
\end{proposition}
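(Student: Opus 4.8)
The plan is to establish the two implications separately, using at the outset the simplification that for a pair $(X,Y)$ the defining property of negative association becomes particularly clean: the only way to partition $\{1,2\}$ into two nonempty disjoint index sets is into $\{1\}$ and $\{2\}$, so $(X,Y)$ is negatively associated precisely when $\mathbb{E}[f(X)g(Y)] \leq \mathbb{E}[f(X)]\,\mathbb{E}[g(Y)]$ — equivalently $\mathrm{Cov}(f(X),g(Y)) \leq 0$ — for all increasing $f\colon \mathcal{X}\to\mathbb{R}$ and $g\colon\mathcal{Y}\to\mathbb{R}$. I would fix total orders on the finite alphabets $\mathcal{X}$ and $\mathcal{Y}$ and read $P_X,P_Y,P_{XY}$ as the corresponding marginal and joint cumulative distribution functions, so that the displayed inequality is exactly the classical negative quadrant dependence condition.

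For the forward implication, I would just apply the definition of negative association to the two increasing threshold indicators $s\mapsto\mathbbm{1}[s>x]$ and $t\mapsto\mathbbm{1}[t>y]$, obtaining $\mathbb{P}[X>x,\,Y>y]\leq\mathbb{P}[X>x]\,\mathbb{P}[Y>y]$ for every $x,y$, and then rewrite both sides via inclusion–exclusion on the complementary events to reach $P_{XY}(x,y)\leq P_X(x)P_Y(y)$.

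The converse is the only direction with content. I would first note that the hypothesis $P_{XY}(x,y)\leq P_X(x)P_Y(y)$ for all $x,y$ is, again by inclusion–exclusion, equivalent to $\mathrm{Cov}\big(\mathbbm{1}[X\geq x],\mathbbm{1}[Y\geq y]\big)\leq 0$ for all $x,y$. Then, for arbitrary increasing $f,g$, writing $\mathcal{X}=\{x_1<\dots<x_p\}$ and $\mathcal{Y}=\{y_1<\dots<y_q\}$, I would telescope along the orders to get the pointwise decompositions $f(X)=f(x_1)+\sum_{i=2}^{p}\big(f(x_i)-f(x_{i-1})\big)\mathbbm{1}[X\geq x_i]$ and likewise for $g(Y)$; the increments are nonnegative exactly because $f,g$ are increasing. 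Expanding $\mathrm{Cov}(f(X),g(Y))$ by bilinearity (the constant terms drop out) then writes it as a sum of nonnegative coefficients times the covariances $\mathrm{Cov}(\mathbbm{1}[X\geq x_i],\mathbbm{1}[Y\geq y_j])$, each $\leq 0$ by the reformulated hypothesis, so the whole sum is $\leq 0$, which is negative association of $(X,Y)$.

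I do not expect a real obstacle here — this is a classical fact (it is essentially in Lehmann's work on dependence notions) — and the only steps needing care are the inclusion–exclusion bookkeeping that matches the pointwise hypothesis on distribution functions with covariances of threshold indicators, and the decomposition of an arbitrary increasing function into a nonnegative combination of threshold indicators. It is worth recording that the finiteness of $\mathcal{X}$ and $\mathcal{Y}$ is what lets the last step be a finite sum; for real-valued $X,Y$ one would instead invoke Hoeffding's covariance identity $\mathrm{Cov}(f(X),g(Y))=\iint\big(\mathbb{P}[X\leq s,\,Y\leq t]-\mathbb{P}[X\leq s]\,\mathbb{P}[Y\leq t]\big)\,df(s)\,dg(t)$ together with a routine approximation argument, but that generality is not needed for the present paper.
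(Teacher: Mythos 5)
Your proof is correct, and the argument is complete: the reduction of negative association for a pair to $\mathrm{Cov}(f(X),g(Y))\leq 0$, the threshold indicators plus inclusion--exclusion for the easy direction, and the telescoping decomposition of an increasing function into a nonnegative combination of indicators $\mathbbm{1}[X\geq x_i]$ for the converse are exactly the discrete form of the Lehmann/Hoeffding argument. The paper itself offers no proof here --- it simply cites Property $\mathrm{P}_1$ of~\cite{JP83}, where the equivalence with negative quadrant dependence is stated --- so there is no competing approach to compare against. One point you handle correctly but that deserves emphasis: the statement is only true under your reading of $P_{XY}(x,y)$ as the joint cumulative distribution function $\mathbb{P}[X\leq x, Y\leq y]$ with respect to fixed total orders on the alphabets (i.e.\ as negative quadrant dependence); if one instead read $P_{XY}$ as the joint probability mass function, the claimed equivalence would fail in both directions, so making that interpretation explicit, as you do, is essential rather than cosmetic.
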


\begin{proposition}[Property $\text{P}_4$ of~\cite{JP83}]
  \label{prop:NA_P4}
  A subset of two or more negatively associated random variables is negatively associated.
\end{proposition}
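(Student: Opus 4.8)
The plan is to argue directly from Definition~\ref{defi:NA}, since Property $\mathrm{P}_4$ is essentially a bookkeeping statement about the quantifiers in that definition. Let $X_1, \ldots, X_n$ be negatively associated and fix a subset $S \subseteq [n]$ with $|S| \geq 2$; the goal is to check that the subfamily $(X_i)_{i \in S}$ again satisfies the defining inequality.

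First I would take an arbitrary pair of disjoint subsets $I, J \subseteq S$ and arbitrary coordinate-wise increasing functions $f$ of $(X_i)_{i \in I}$ and $g$ of $(X_i)_{i \in J}$. The only observation needed is that $I$ and $J$ are, in particular, a pair of disjoint subsets of $[n]$, and that $f$ and $g$ remain coordinate-wise increasing when viewed as functions of the relevant coordinates of $(X_1, \ldots, X_n)$. Therefore the negative-association inequality for $(X_1, \ldots, X_n)$ applies verbatim to this choice of $I, J, f, g$ and gives
\[ \mathbb{E}[f(\{X_i : i \in I\}) \cdot g(\{X_i : i \in J\})] \leq \mathbb{E}[f(\{X_i : i \in I\})] \cdot \mathbb{E}[g(\{X_i : i \in J\})] \ . \]
Since $I, J, f, g$ were arbitrary, $(X_i)_{i \in S}$ is negatively associated, which is the claim.

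I do not expect any genuine obstacle: the content is a pure unwinding of the definition, the point being that the quantifier over disjoint subsets in Definition~\ref{defi:NA} already ranges over a superset of the pairs relevant to the subfamily, and that a test function for the subfamily is automatically an admissible test function for the whole family (same monotonicity, simply ignoring the extra coordinates). Alternatively, one may simply invoke Property $\mathrm{P}_4$ of~\cite{JP83} without reproving it.
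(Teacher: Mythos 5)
Your proof is correct: the paper itself states this as a cited property of~\cite{JP83} without proof, and your direct unwinding of Definition~\ref{defi:NA} — observing that disjoint subsets of $S$ are disjoint subsets of $[n]$ and that test functions for the subfamily are admissible test functions for the full family — is exactly the right (and complete) argument.
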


\begin{proposition}[Property $\text{P}_5$ of~\cite{JP83}]
  \label{prop:NA_P5}
  A set of independent random variables is negatively associated.
\end{proposition}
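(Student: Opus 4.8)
The plan is to observe that independent random variables satisfy the inequality of Definition~\ref{defi:NA} with equality, so the conclusion is immediate. Fix a pair of disjoint subsets $I,J \subseteq [n]$ and (coordinate-wise) increasing functions $f,g$. Since $X_1,\ldots,X_n$ are mutually independent and $I \cap J = \emptyset$, the random vectors $(X_i)_{i \in I}$ and $(X_j)_{j \in J}$ are independent of each other; this is the standard fact that disjoint sub-families of a mutually independent family are independent, which follows from the product form of the joint distribution (e.g., via a $\pi$-$\lambda$ argument).

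Consequently $f(\{X_i : i \in I\})$ and $g(\{X_j : j \in J\})$, being measurable functions of independent random vectors, are themselves independent real-valued random variables. Hence the expectation of their product factorizes,
\[ \mathbb{E}[f(\{X_i : i \in I\}) \cdot g(\{X_j : j \in J\})] = \mathbb{E}[f(\{X_i : i \in I\})] \cdot \mathbb{E}[g(\{X_j : j \in J\})] , \]
and in particular the left-hand side is $\leq$ the right-hand side. Since $I$, $J$, $f$ and $g$ were arbitrary, this is exactly the condition of Definition~\ref{defi:NA}, so $X_1,\ldots,X_n$ are negatively associated.

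There is essentially no real obstacle here: note that the monotonicity of $f$ and $g$ is not used at all, and the only inputs are two routine measure-theoretic facts — that disjoint parts of a mutually independent family are independent, and that functions of independent objects are independent. The single point worth stating explicitly, so as to remain within the implicit hypotheses of Definition~\ref{defi:NA}, is that the expectations in question are assumed to be well defined; granting that, the argument above is complete.
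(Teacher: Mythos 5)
Your proof is correct: independence of disjoint sub-families gives equality in the defining inequality of Definition~\ref{defi:NA}, which is exactly the standard argument (the paper itself simply cites~\cite{JP83} for this property without reproducing a proof). Your observation that monotonicity of $f$ and $g$ plays no role here is also accurate.
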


\begin{proposition}[Property $\text{P}_6$ of~\cite{JP83}]
  \label{prop:NA_P6}
  Increasing functions defined on disjoint subsets of a set of negatively associated random variables are negatively associated.
\end{proposition}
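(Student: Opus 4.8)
The plan is to reduce the statement directly to the defining inequality of negative association, by observing that composing the given block functions with further increasing functions on disjoint blocks again produces increasing functions supported on disjoint blocks of the original variables.

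First I would fix notation. Suppose $X_1,\dots,X_n$ are negatively associated, let $A_1,\dots,A_k$ be pairwise disjoint subsets of $[n]$, and for each $m\in[k]$ let $h_m$ be a coordinate-wise increasing function of the family $X_{A_m} := (X_i)_{i\in A_m}$; set $Y_m := h_m(X_{A_m})$. To prove that $Y_1,\dots,Y_k$ are negatively associated, I would fix arbitrary disjoint $I,J\subseteq[k]$ and coordinate-wise increasing functions $\phi$ of $(Y_m)_{m\in I}$ and $\psi$ of $(Y_m)_{m\in J}$, and aim to show $\mathbb{E}\big[\phi((Y_m)_{m\in I})\,\psi((Y_m)_{m\in J})\big]\le \mathbb{E}\big[\phi((Y_m)_{m\in I})\big]\,\mathbb{E}\big[\psi((Y_m)_{m\in J})\big]$.

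The key step is to view $\phi((Y_m)_{m\in I})$ as a function of the underlying variables $(X_i)_{i\in A_I}$, where $A_I := \bigcup_{m\in I}A_m$, and to check it is coordinate-wise increasing in them: raising a single $X_i$ with $i\in A_{m_0}$ for some $m_0\in I$ weakly increases $h_{m_0}(X_{A_{m_0}})$ and leaves $h_m(X_{A_m})$ unchanged for the other $m\in I$, hence weakly increases $\phi$ by its monotonicity. The same argument applies to $\psi((Y_m)_{m\in J})$, which is a coordinate-wise increasing function of $(X_i)_{i\in A_J}$ with $A_J := \bigcup_{m\in J}A_m$. Since $I\cap J=\emptyset$ and the $A_m$ are pairwise disjoint, $A_I$ and $A_J$ are disjoint subsets of $[n]$. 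Applying the definition of negative association of $X_1,\dots,X_n$ to the disjoint index sets $A_I,A_J$ and these two increasing functions yields exactly the desired inequality, and since $I,J,\phi,\psi$ were arbitrary the claim follows.

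I do not expect a genuine obstacle here: the only point needing a little care is verifying that the composition of $\phi$ with the block functions $h_m$ is truly coordinate-wise monotone in the original $X_i$'s, together with handling degenerate cases such as an empty $A_m$ (where $h_m$ is constant, which does not affect monotonicity or the inequality). Measurability of the composed functions is routine.
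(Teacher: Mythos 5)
Your argument is correct and is the standard proof of Property $\text{P}_6$: the paper itself states this proposition by citing~\cite{JP83} without reproducing a proof, and your reduction—composing the block functions with $\phi,\psi$, checking coordinate-wise monotonicity of the compositions, and applying Definition~\ref{defi:NA} to the disjoint unions $A_I,A_J$—is exactly the argument in that reference. No gaps.
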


\begin{proposition}[Property $\text{P}_7$ of~\cite{JP83}]
    \label{prop:NA_P7}
    The union of independent sets of negatively associated random variables is negatively associated.
\end{proposition}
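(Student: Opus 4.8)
The plan is to reduce the statement to the case of two independent blocks and then induct on the number of blocks. Interpret the hypothesis as follows: we are given collections $\{X_i : i \in A_1\},\dots,\{X_i : i \in A_k\}$ over pairwise disjoint index sets, the random vectors $(X_i)_{i \in A_1},\dots,(X_i)_{i \in A_k}$ are mutually independent, and each collection $\{X_i : i \in A_\ell\}$ is negatively associated; the goal is that $\{X_i : i \in A_1 \cup \dots \cup A_k\}$ is negatively associated. The crucial special case is $k=2$: if $\{X_i : i \in A\}$ and $\{Y_j : j \in B\}$ are each negatively associated and the two vectors are independent, then $\{X_i : i \in A\} \cup \{Y_j : j \in B\}$ is negatively associated. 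Granting this, the general case follows by induction on $k$, since $\bigcup_{\ell=1}^{k-1}\{X_i : i \in A_\ell\}$ is negatively associated by the induction hypothesis, is independent of $\{X_i : i \in A_k\}$ by mutual independence of the blocks, and the latter is negatively associated, so the two-block case applies.

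To prove the two-block case, fix disjoint $I,J \subseteq A \cup B$ and coordinatewise increasing functions $f$ of $(x_i)_{i \in I}$ and $g$ of $(x_j)_{j \in J}$. Split $I = I_A \sqcup I_B$ and $J = J_A \sqcup J_B$ according to membership in $A$ or $B$; since $I$ and $J$ are disjoint, $I_A$ and $J_A$ are disjoint in $A$, and $I_B$ and $J_B$ are disjoint in $B$. Condition on the whole block $(X_i)_{i \in A} = x$. By independence the conditional law of $(Y_j)_{j \in B}$ equals its unconditional law, hence is still negatively associated, and $y \mapsto f(x_{I_A},y_{I_B})$ and $y \mapsto g(x_{J_A},y_{J_B})$ are increasing functions of the disjoint index sets $I_B$ and $J_B$, so
\[
\mathbb{E}\!\left[fg \,\middle|\, (X_i)_{i \in A} = x\right] \;\leq\; \tilde f(x_{I_A})\,\tilde g(x_{J_A}),
\]
where $\tilde f(x_{I_A}) := \mathbb{E}\!\left[f\big(x_{I_A},(Y_j)_{j \in I_B}\big)\right]$ and $\tilde g(x_{J_A}) := \mathbb{E}\!\left[g\big(x_{J_A},(Y_j)_{j \in J_B}\big)\right]$. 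Since $f$ and $g$ are coordinatewise increasing and $(Y_j)_{j \in B}$ has a fixed law, $\tilde f$ and $\tilde g$ are coordinatewise increasing functions of the disjoint subsets $I_A$ and $J_A$ of $A$. Taking expectation over $(X_i)_{i \in A}$, using negative association of $\{X_i : i \in A\}$, and then Fubini together with independence,
\[
\mathbb{E}[fg] \;\leq\; \mathbb{E}\!\left[\tilde f(X_{I_A})\,\tilde g(X_{J_A})\right] \;\leq\; \mathbb{E}\!\left[\tilde f(X_{I_A})\right]\mathbb{E}\!\left[\tilde g(X_{J_A})\right] \;=\; \mathbb{E}[f]\,\mathbb{E}[g],
\]
which is exactly the defining inequality of negative association for the union.

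The argument is essentially bookkeeping once this structure is in place, and I expect the only delicate points to be: (i) checking that partial expectations over an independent block preserve coordinatewise monotonicity, so that $\tilde f$ and $\tilde g$ are legitimate test functions for the negative association of $\{X_i : i \in A\}$; (ii) handling the degenerate cases where one of $I_A,J_A,I_B,J_B$ is empty, in which case the corresponding function is a constant and all the inequalities above remain valid; and (iii) verifying that the split of $I$ and $J$ into their $A$- and $B$-parts genuinely preserves disjointness within each block. None of these is a serious obstacle: the whole proof is a direct unwinding of the definition of negative association combined with the independence of the blocks.
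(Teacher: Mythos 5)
Your proof is correct. The paper does not prove this statement itself — it simply cites it as Property $\text{P}_7$ of~\cite{JP83} — and your argument (reduce to two independent blocks, condition on one block, use negative association within each block plus the fact that partial expectation over the independent block preserves coordinatewise monotonicity, then induct) is exactly the standard proof of that property.
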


\begin{definition}[Permutation Distribution]
  \label{defi:perm}
  Let $x = (x_1,\ldots,x_k) \in \mathbb{R}^k$. A permutation distribution is the joint distribution of the vector $X = (X_1,\ldots,X_k)$ which takes as values all $k!$ permutations of $x$ with equal probabilities, each being $\frac{1}{k!}$.
\end{definition}

\begin{proposition}[Theorem 2.11 of~\cite{JP83}]
  \label{prop:permNA}
   A permutation distribution is negatively associated.
\end{proposition}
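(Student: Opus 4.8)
The plan is to establish the defining inequality of negative association directly: for disjoint $I,J\subseteq[k]$ and coordinatewise non-decreasing $f,g$, writing $X_I:=(X_i)_{i\in I}$, to show $\mathbb{E}[f(X_I)\,g(X_J)]\le\mathbb{E}[f(X_I)]\,\mathbb{E}[g(X_J)]$. The idea is to condition on how the entries of $x$ get split between the two blocks $I$ and $J$; this turns the left-hand side into a correlation over a lattice of subsets of $[k]$, which can then be controlled by the FKG inequality.

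A few harmless normalizations come first. If $I$ or $J$ is empty the inequality is trivial, so assume both are nonempty; since a coordinatewise non-decreasing function stays non-decreasing when extended to ignore dummy coordinates, we may enlarge $I$ so that $I\sqcup J=[k]$, and then relabel coordinates (which leaves the permutation distribution unchanged, it being exchangeable) so that $I=\{1,\dots,p\}$ and $J=\{p+1,\dots,k\}$ with $1\le p\le k-1$. Finally, since the permutation distribution depends only on the multiset of entries of $x$, assume $x_1\le\dots\le x_k$. Now write $X_i=x_{\sigma(i)}$ for $\sigma$ uniform on the symmetric group $S_k$, and set $A:=\sigma(\{1,\dots,p\})$, a uniformly random element of $\binom{[k]}{p}$. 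Conditioned on $A$, the tuples $(X_i)_{i\le p}$ and $(X_i)_{i>p}$ are independent permutation distributions of the sub-multisets $(x_a)_{a\in A}$ and $(x_a)_{a\notin A}$ respectively. Hence, putting $F(A):=\mathbb{E}[f(X_1,\dots,X_p)\mid A]$ and $G(A):=\mathbb{E}[g(X_{p+1},\dots,X_k)\mid A]$, conditional independence yields $\mathbb{E}[f(X_I)g(X_J)\mid A]=F(A)G(A)$, so it remains to prove $\mathbb{E}_A[F(A)G(A)]\le\mathbb{E}_A[F(A)]\,\mathbb{E}_A[G(A)]$.

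Equip $\binom{[k]}{p}$ with the Gale order: $A\preceq A'$ iff the $i$-th smallest element of $A$ is at most that of $A'$ for every $i$. Coordinatewise minimum and maximum of the sorted tuples are again strictly increasing, hence valid subsets, so this poset is a distributive lattice (order-isomorphic to the lattice of Young diagrams inside a $p\times(k-p)$ box). For $A\preceq A'$, applying one common uniform permutation of $[p]$ to the sorted tuples $(x_a)_{a\in A}$ and $(x_a)_{a\in A'}$ couples the two conditional distributions with coordinatewise domination, so $F$ is non-decreasing on this lattice; since complementation reverses the Gale order, the same argument applied to $(x_a)_{a\notin A}$ shows $G$ is non-increasing. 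The uniform measure on the lattice is trivially log-supermodular, so the FKG inequality applied to the non-decreasing functions $F$ and $-G$ gives $\mathbb{E}_A[F\cdot(-G)]\ge\mathbb{E}_A[F]\,\mathbb{E}_A[-G]$, which is exactly the desired inequality. (When $p=1$ the lattice is a chain and this last step is just Chebyshev's sum inequality.)

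The normalizations and the coupling argument for monotonicity of $F$ and $G$ are routine; the step deserving the most care is recognizing the Gale order on $\binom{[k]}{p}$ as a distributive lattice so that FKG legitimately applies — this is where the combinatorial structure of the permutation distribution is genuinely used. One could instead try a pure induction on $k$ by conditioning on whether $k\in A$ (the reduced $F$ and $G$ stay monotone in opposite directions, which forces the relevant cross terms to have the right sign), but this still needs the fact that conditioning a uniform $p$-subset on containing its largest possible element stochastically increases it in the Gale order, which is itself most cleanly an instance of FKG on the same lattice; so invoking FKG once is the economical route.
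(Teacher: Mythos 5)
Your proof is correct. Note, however, that the paper does not prove this statement at all: it is imported verbatim as Theorem~2.11 of~\cite{JP83}, so there is no in-paper argument to compare against; what you have written is a genuine self-contained proof of the cited result. The argument is sound at every step: the reduction to $I\sqcup J=[k]$ with sorted $x$ is harmless; conditioning on the set $A$ of values assigned to the first block correctly factorizes $\mathbb{E}[fg\mid A]=F(A)G(A)$ by the conditional independence of the two within-block arrangements; the Gale order on $\binom{[k]}{p}$ is indeed a distributive lattice (isomorphic to the lattice of Young diagrams in a $p\times(k-p)$ box, i.e.\ order ideals of a product of two chains), the common-permutation coupling does show $F$ is non-decreasing and, via the order-reversal of complementation, $G$ is non-increasing, and the uniform measure satisfies the FKG lattice condition with equality, so the final application of FKG to $F$ and $-G$ is legitimate. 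For comparison, the original proof in~\cite{JP83} reaches the same conclusion by a more elementary inductive route built on their closure properties ($\text{P}_4$--$\text{P}_7$) rather than by invoking FKG; your approach trades that induction for a single appeal to a standard correlation inequality, at the cost of having to verify the distributive-lattice structure, which you correctly identify as the load-bearing combinatorial fact.
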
 

\begin{proposition}[Chernoff-Hoeffding bound]
  \label{prop:chernoff2}
  Let $X_1, \ldots, X_n$ be negatively associated Bernouilli random variables of parameter $p$. Then for $0 < \varepsilon \leq \frac{1}{2}$:
  \[ \mathbb{P}\left(\frac{1}{n}\sum_{i=1}^n X_i > (1+\varepsilon)p \right) \leq e^{-\frac{pn\varepsilon^2}{4}} \ . \]
\end{proposition}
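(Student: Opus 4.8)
The plan is to run the classical Chernoff argument, invoking negative association at exactly the one place where one would normally use independence, namely to bound the moment generating function of the sum by the product of the individual moment generating functions.

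First I would fix $t > 0$ and apply the exponential Markov inequality: the event $\{\frac1n\sum_i X_i > (1+\varepsilon)p\}$ equals $\{e^{t\sum_i X_i} > e^{t(1+\varepsilon)pn}\}$, so
\[
  \mathbb{P}\!\left(\tfrac1n\sum_{i=1}^n X_i > (1+\varepsilon)p\right)
  \le e^{-t(1+\varepsilon)pn}\,\mathbb{E}\!\left[\prod_{i=1}^n e^{tX_i}\right].
\]
Next I would prove $\mathbb{E}\big[\prod_{i=1}^n e^{tX_i}\big] \le \prod_{i=1}^n \mathbb{E}\big[e^{tX_i}\big]$. This is the only step using negative association: each $X_i \in \{0,1\}$, so $x \mapsto e^{tx}$ is a bounded coordinate-wise increasing function for $t > 0$, and applying Definition~\ref{defi:NA} with $I = \{1,\dots,n-1\}$, $J = \{n\}$, $f(x_1,\dots,x_{n-1}) = \prod_{i=1}^{n-1} e^{tx_i}$ and $g(x_n) = e^{tx_n}$ gives $\mathbb{E}\big[\prod_{i=1}^n e^{tX_i}\big] \le \mathbb{E}\big[\prod_{i=1}^{n-1} e^{tX_i}\big]\cdot\mathbb{E}\big[e^{tX_n}\big]$; since $X_1,\dots,X_{n-1}$ are again negatively associated by Proposition~\ref{prop:NA_P4}, an induction on $n$ closes this step. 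Each factor is $\mathbb{E}[e^{tX_i}] = 1 - p + p e^t \le \exp\!\big(p(e^t - 1)\big)$, so
\[
  \mathbb{P}\!\left(\tfrac1n\sum_{i=1}^n X_i > (1+\varepsilon)p\right)
  \le \exp\!\big(pn(e^t - 1) - t(1+\varepsilon)pn\big).
\]

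It then remains to optimize over $t$ and simplify. Choosing $t = \ln(1+\varepsilon) > 0$ makes the right-hand side $\exp\!\big(-pn\,\psi(\varepsilon)\big)$ with $\psi(\varepsilon) := (1+\varepsilon)\ln(1+\varepsilon) - \varepsilon$, and the proof concludes with the elementary estimate $\psi(\varepsilon) \ge \varepsilon^2/4$ on $0 < \varepsilon \le \tfrac12$: indeed $\psi(0) = 0$, $\psi'(\varepsilon) = \ln(1+\varepsilon)$ so $\psi'(0) = 0$, and $\psi''(\varepsilon) = \tfrac1{1+\varepsilon} \ge \tfrac23$ on this range, whence $\psi(\varepsilon) \ge \tfrac13\varepsilon^2 \ge \tfrac14\varepsilon^2$. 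The only genuinely non-routine ingredient is the moment-generating-function bound via negative association (including the small check that $e^{tX_i}$ is a legitimate bounded increasing function, which is immediate); the rest is the textbook Chernoff computation together with a one-line calculus inequality.
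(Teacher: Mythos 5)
Your proof is correct and follows essentially the same route as the paper's: the classical Chernoff argument with negative association invoked exactly once, to replace independence in the factorization of the moment generating function (a step the paper delegates to the references rather than spelling out, as you do with the explicit induction via Proposition~\ref{prop:NA_P4}). The only cosmetic difference is at the end: the paper keeps the relative-entropy exponent $D\left((1+\varepsilon)p\,||\,p\right)$ and asserts $D\left((1+\varepsilon)p\,||\,p\right)\geq \frac{\varepsilon^2 p}{4}$, while you pass to the slightly weaker Poisson-form exponent $p\left((1+\varepsilon)\ln(1+\varepsilon)-\varepsilon\right)$ and verify the same $\varepsilon^2/4$ bound by a second-derivative estimate, which is equally sufficient.
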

\begin{proof}
  Usual proofs of the Chernoff-Hoeffding bound work in the same way with negatively associated variables as pointed out by~\cite{DR98}. So, one obtain as in the original proof (Theorem 1 of~\cite{Hoeffding63}) that:
  \[ \mathbb{P}\left(\frac{1}{n}\sum_{i=1}^n X_i > (1+\varepsilon)p \right) \leq e^{-D\left((1+\varepsilon)p||p\right)n} \ ,\]
    with $D\left(x||y\right) := x\ln\left(\frac{x}{y}\right) + (1-x)\ln\left(\frac{1-x}{1-y}\right)$ the Kullback–Leibler divergence between Bernoulli distributed random variables with parameters $x$ and $y$. As $D\left((1+\varepsilon)p||p\right) \geq \frac{\varepsilon^2p}{4}$ for $0 < \varepsilon < \frac{1}{2}$, we recover the expected bound.
\end{proof}

\section{Broadcast Channel Coding}
\label{section:BCCpb}
\subsection{Broadcast Channels}
The coding problem for a broadcast channel $W : \mathcal{X} \rightarrow \mathcal{Y}_1 \times \mathcal{Y}_2$ can be stated in the following way. We want to encode a pair of messages belonging to $[k_1] \times [k_2]$ into $\mathcal{X}$. The pair is given as input to $W$, which results in two random outputs in $\mathcal{Y}_1$ and $\mathcal{Y}_2$. From the output in $\mathcal{Y}_1$ (resp. $\mathcal{Y}_2$), we want to decode back the original message in $[k_1]$ (resp. $[k_2]$). We will call $e : [k_1] \times [k_2] \rightarrow \mathcal{X}$ the encoder, $d_1 : \mathcal{Y}_1 \rightarrow [k_1]$ the first decoder and $d_2 : \mathcal{Y}_2 \rightarrow [k_2]$ the second decoder. The scenario is depicted in Figure~\ref{fig:BCcoding}.

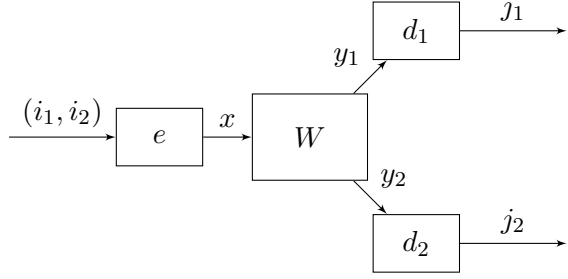
\begin{figure}[!h]
\begin{center}
  \begin{tikzpicture}[auto, node distance=2cm, >=latex']
    \node [input, name=i1i2] {};
    \node [block, right of=i1i2] (e) {$e$};
    \node [bigblock, right of=e] (W) {$W$};
    \node [block, above right of=W] (d1) {$d_1$};
    \node [block, below right of=W] (d2) {$d_2$};

    \draw [->] (e) -- node[name=x] {$x$} (W);
    \draw [->] (W) -- node[name=y1] {$y_1$} (d1);
    \draw [->] (W) -- node[name=y2] {$y_2$} (d2);
    \node [output, right of=d1] (j1) {};
    \node [output, right of=d2] (j2) {};

    \draw [draw,->] (i1i2) -- node {$(i_1,i_2)$} (e);
    \draw [draw,->] (d1) -- node {$j_1$} (j1);
    \draw [draw,->] (d2) -- node {$j_2$} (j2);
  \end{tikzpicture}
\end{center}
\caption{Coding for a broadcast channel $W$.}
\label{fig:BCcoding}
\end{figure}

We will call $\mathrm{p}_1(W,e,d_1)$ (resp. $\mathrm{p}_2(W,e,d_2)$) the probability of successfully decoding the first (resp. second) message, i.e. that $j_1 = i_1$ (resp. $j_2 = i_2$), given that the encoder is $e$ and the decoder is $d_1$ (resp. $d_2$). We will also consider $\mathrm{p}(W,e,d_1,d_2)$, the probability of successfully decoding both messages, i.e. that $j_1 = i_1$ and  $j_2 = i_2$, given that the encoder is $e$ and the decoders are $d_1,d_2$.

We aim to find the best encoder and decoders according to some figure of merit. However, to do so, we need a one-dimensional real-valued objective to optimize. This leads to two different quantities.

\subsection{The Sum Success Probability $\mathrm{S}_{\text{sum}}(W,k_1,k_2)$}
We will focus first on maximizing $\frac{\mathrm{p}_1(W,e,d_1)+\mathrm{p}_2(W,e,d_2)}{2}$ over all encoders $e$ and decoders $d_1,d_2$. We will call $\mathrm{S}_{\text{sum}}(W,k_1,k_2)$ the resulting maximum sum probability of successfully encoding and decoding the messages through $W$, given that the input pair of messages is taken uniformly in $[k_1] \times [k_2]$. $\mathrm{S}_{\text{sum}}(W,k_1,k_2)$ is the solution of the following optimization program:
\begin{equation}
  \begin{aligned}
    &&\underset{e,d_1,d_2}{\maxi} &&& \frac{1}{k_1k_2}\sum_{i_1,i_2,x,y_1,y_2} W(y_1y_2|x)e(x|i_1i_2)\frac{d_1(i_1|y_1) + d_2(i_2|y_2)}{2}\\
    &&\st &&& \sum_{x \in \mathcal{X}} e(x|i_1i_2) = 1, \forall i_1 \in [k_1], i_2 \in [k_2]\\
    &&&&& \sum_{j_1 \in [k_1]} d_1(j_1|y_1) = 1, \forall y_1 \in \mathcal{Y}_1\\
    &&&&& \sum_{j_2 \in [k_2]} d_2(j_2|y_2) = 1, \forall y_2 \in \mathcal{Y}_2\\
    &&&&& e(x|i_1i_2), d_1(j_1|y_1), d_2(j_2|y_2) \geq 0
  \end{aligned}
\end{equation}

\begin{proof}
One should note that we allow in fact non-deterministic encoders and decoders for generality reasons, although the value of the program is not changed as it is convex. Besides that remark, let us name $I_1,I_2,J_1,J_2,X,Y_1,Y_2$ the random variables corresponding to $i_1,i_2,j_1,j_2,x,y_1,y_2$ in the coding and decoding process. Then, given $e,d_1,d_2$ and $W$, the objective value of the previous program comes from:
\begin{equation}
  \begin{aligned}
    &\mathrm{p}_1(W,e,d_1) = \mathbb{P}\left(J_1 = I_1\right) = \frac{1}{k_1k_2}\sum_{i_1,i_2} \mathbb{P}\left(J_1 = i_1|I_1=i_1,I_2=i_2\right)\\
    &= \frac{1}{k_1k_2}\sum_{i_1,i_2,x}e(x|i_1i_2) \mathbb{P}\left(J_1 = i_1|i_1,i_2,X=x\right)\\
    &= \frac{1}{k_1k_2}\sum_{i_1,i_2,x,y_1,y_2}W(y_1y_2|x)e(x|i_1i_2) \mathbb{P}\left(J_1 = i_1|i_1,i_2,x,Y_1=y_1,Y_2=y_2\right)\\
    &= \frac{1}{k_1k_2}\sum_{i_1,i_2,x,y_1,y_2}W(y_1y_2|x)e(x|i_1i_2)d_1(i_1|y_1) \ ,
  \end{aligned}
\end{equation}
and symmetrically for $\mathrm{p}_2(W,e,d_2)$, which leads to the announced objective value.
\end{proof}

One can rewrite this optimization program in a more convenient way, proving that $\mathrm{S}_{\text{sum}}(W,k_1,k_2)$ depends only on the marginals of $W$:
\begin{proposition}
  \begin{equation}
  \begin{aligned}
    \mathrm{S}_{\text{sum}}(W,k_1,k_2) = &&\underset{e,d_1,d_2}{\maxi} &&& \frac{1}{2k_1k_2}\sum_{i_1,x,y_1} W_1(y_1|x)d_1(i_1|y_1)\sum_{i_2} e(x|i_1i_2)\\
    &&+&&& \frac{1}{2k_1k_2}\sum_{i_2,x,y_2} W_2(y_2|x)d_2(i_2|y_2)\sum_{i_1} e(x|i_1i_2)\\
    &&\st &&& \sum_{x \in \mathcal{X}} e(x|i_1i_2) = 1, \forall i_1 \in [k_1], i_2 \in [k_2]\\
    &&&&& \sum_{j_1 \in [k_1]} d_1(j_1|y_1) = 1, \forall y_1 \in \mathcal{Y}_1\\
    &&&&& \sum_{j_2 \in [k_2]} d_2(j_2|y_2) = 1, \forall y_2 \in \mathcal{Y}_2\\
    &&&&& e(x|i_1i_2), d_1(j_1|y_1), d_2(j_2|y_2) \geq 0
  \end{aligned}
  \end{equation}
\end{proposition}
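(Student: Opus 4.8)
The plan is to show that the two maximization programs are literally the same optimization problem written in two ways: they have identical feasible sets and their objective functions agree term by term for every feasible $(e,d_1,d_2)$. As a bonus, the rewritten objective only references $W_1$ and $W_2$, which is exactly the claimed consequence that $\mathrm{S}_{\text{sum}}(W,k_1,k_2)$ depends on $W$ only through its marginals.

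Concretely, I would start from the objective of the defining program, which after clearing the factor $\frac12$ reads $\frac{1}{2k_1k_2}\sum_{i_1,i_2,x,y_1,y_2} W(y_1y_2|x)\,e(x|i_1i_2)\bigl(d_1(i_1|y_1)+d_2(i_2|y_2)\bigr)$, and split it according to the two summands $d_1(i_1|y_1)$ and $d_2(i_2|y_2)$. In the first sum the factors $e(x|i_1i_2)$ and $d_1(i_1|y_1)$ do not involve the index $y_2$, so the inner summation collapses via $\sum_{y_2 \in \mathcal{Y}_2} W(y_1y_2|x) = W_1(y_1|x)$, leaving $\frac{1}{2k_1k_2}\sum_{i_1,i_2,x,y_1} W_1(y_1|x)\,e(x|i_1i_2)\,d_1(i_1|y_1)$; regrouping the sum over $i_2$, which touches only $e(x|i_1i_2)$, produces exactly the first line of the claimed expression, $\frac{1}{2k_1k_2}\sum_{i_1,x,y_1} W_1(y_1|x)\,d_1(i_1|y_1)\sum_{i_2} e(x|i_1i_2)$. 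The second sum is handled symmetrically: $e(x|i_1i_2)$ and $d_2(i_2|y_2)$ are independent of $y_1$, so $\sum_{y_1 \in \mathcal{Y}_1} W(y_1y_2|x) = W_2(y_2|x)$, and pulling out the sum over $i_1$ gives the second line.

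Finally I would note that the constraint block is unchanged — stochasticity of $e(\cdot|i_1i_2)$, $d_1(\cdot|y_1)$, $d_2(\cdot|y_2)$ together with nonnegativity — so the two programs have the same value. There is no genuine obstacle here: the whole argument is a single marginalization, the only insight being that the $d_1$-contribution to the success probability is blind to the output $y_2$ and the $d_2$-contribution is blind to $y_1$. The point worth emphasizing in the write-up is the structural conclusion this makes transparent, namely that the new objective, and hence $\mathrm{S}_{\text{sum}}(W,k_1,k_2)$, factors through the pair $(W_1,W_2)$.
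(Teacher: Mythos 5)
Your proposal is correct and is exactly the argument the paper intends: the paper's proof is the one-line remark that the identity ``follows from the definitions'' of $W_1$ and $W_2$, and your write-up simply carries out that marginalization explicitly (splitting the objective, summing out $y_2$ in the $d_1$-term and $y_1$ in the $d_2$-term, and observing the constraints are unchanged). No discrepancy.
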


\begin{proof}
  It follows from the definitions $W_1(y_1|x) := \sum_{y_2} W(y_1y_2|x)$ and $W_2(y_2|x) := \sum_{y_1} W(y_1y_2|x)$.
\end{proof}

\subsection{The Joint Success Probability $\mathrm{S}(W,k_1,k_2)$}
We will now focus on maximizing $\mathrm{p}(W,e,d_1,d_2)$ over all encoders $e$ and decoders $d_1,d_2$. We will call $\mathrm{S}(W,k_1,k_2)$ the resulting maximum probability of successfully encoding and decoding the messages through $W$, given that the input pair of messages is taken uniformly in $[k_1] \times [k_2]$. $\mathrm{S}(W,k_1,k_2)$ is the solution of the following optimization program:
\begin{equation}
  \begin{aligned}
    \mathrm{S}(W,k_1,k_2) := &&\underset{e,d_1,d_2}{\maxi} &&& \frac{1}{k_1k_2}\sum_{i_1,i_2,x,y_1,y_2} W(y_1y_2|x)e(x|i_1i_2)d_1(i_1|y_1)d_2(i_2|y_2)\\
    &&\st &&& \sum_{x \in \mathcal{X}} e(x|i_1i_2) = 1, \forall i_1 \in [k_1], i_2 \in [k_2]\\
    &&&&& \sum_{j_1 \in [k_1]} d_1(j_1|y_1) = 1, \forall y_1 \in \mathcal{Y}_1\\
    &&&&& \sum_{j_2 \in [k_2]} d_2(j_2|y_2) = 1, \forall y_2 \in \mathcal{Y}_2\\
    &&&&& e(x|i_1i_2), d_1(j_1|y_1), d_2(j_2|y_2) \geq 0
  \end{aligned}
\end{equation}

The proof is the same as in the sum probability scenario. We define the (resp. sum) capacity region using Definition~\ref{defi:generalCapacityRegion} by $\mathcal{C}(W):=\mathcal{C}[\mathrm{S}](W)$ (resp. $\mathcal{C}_{\text{sum}}(W):=\mathcal{C}[{\mathrm{S}}_{\text{sum}}](W)$).

The objective values of those two optimization programs are not the same, but $\mathrm{S}(W,k_1,k_2)$ and $\mathrm{S}_{\text{sum}}(W,k_1,k_2)$ still characterize the same capacity region~\cite{Willems90}: 


\begin{proposition}
  \label{prop:capacitysum}
  For any broadcast channel $W$, $\mathcal{C}(W) = \mathcal{C}_{\text{sum}}(W)$.
\end{proposition}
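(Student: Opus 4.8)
The plan is to show the two inclusions $\mathcal{C}(W) \subseteq \mathcal{C}_{\text{sum}}(W)$ and $\mathcal{C}_{\text{sum}}(W) \subseteq \mathcal{C}(W)$ by comparing the two success probabilities directly on the $n$-fold channel $W^{\otimes n}$. The easy direction is $\mathcal{C}(W) \subseteq \mathcal{C}_{\text{sum}}(W)$: for any fixed encoder $e$ and decoders $d_1,d_2$, the event $\{J_1=I_1 \text{ and } J_2=I_2\}$ is contained in each of $\{J_1=I_1\}$ and $\{J_2=I_2\}$, so $\mathrm{p}(W,e,d_1,d_2) \le \mathrm{p}_1(W,e,d_1)$ and $\mathrm{p}(W,e,d_1,d_2) \le \mathrm{p}_2(W,e,d_2)$, whence $\mathrm{p}(W,e,d_1,d_2) \le \frac{\mathrm{p}_1+\mathrm{p}_2}{2}$. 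Taking the max over $e,d_1,d_2$ gives $\mathrm{S}(W,k_1,k_2) \le \mathrm{S}_{\text{sum}}(W,k_1,k_2)$ for all $k_1,k_2$; applying this to $W^{\otimes n}$ with $k_i = \lceil 2^{R_i n}\rceil$ shows that if $(R_1,R_2)$ is $\mathrm{S}$-achievable it is $\mathrm{S}_{\text{sum}}$-achievable, and closure preserves the inclusion.

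The substantive direction is $\mathcal{C}_{\text{sum}}(W) \subseteq \mathcal{C}(W)$. Here I would start from a rate pair $(R_1,R_2)$ in the interior of $\mathcal{C}_{\text{sum}}(W)$, so that $\mathrm{S}_{\text{sum}}(W^{\otimes n}, \lceil 2^{R_1 n}\rceil, \lceil 2^{R_2 n}\rceil) \to 1$, and argue that a slightly smaller rate pair is $\mathrm{S}$-achievable. The standard trick (this is the content of \cite{Willems90}): from a code for $W^{\otimes n}$ whose \emph{average} of the two marginal success probabilities is $1-\delta_n$ with $\delta_n \to 0$, both $\mathrm{p}_1$ and $\mathrm{p}_2$ individually are at least $1-2\delta_n$. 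Now take a large block of $N = tn$ uses, viewed as $t$ independent copies of the length-$n$ code, and run the length-$n$ encoder/decoders on each block independently. On block $j$ let $A_j$ be the indicator that decoder $1$ errs and $B_j$ the indicator that decoder $2$ errs; then $\mathbb{E}[A_j], \mathbb{E}[B_j] \le 2\delta_n$. One then performs an expurgation / typicality argument on the message set: by restricting to the sub-codebook of message pairs for which the \emph{block-wise} error fractions $\frac1t\sum_j A_j$ and $\frac1t\sum_j B_j$ are both small, one keeps a $1-o(1)$ fraction of the $\lceil 2^{R_1 N}\rceil \times \lceil 2^{R_2 N}\rceil$ pairs, and on the retained pairs a further layer of coding across the $t$ blocks (an outer code correcting a small fraction of block erasures, e.g. a Reed–Solomon or just a second round of the capacity-achieving point-to-point argument) drives the joint error to $0$ while losing only a vanishing rate. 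The cleanest packaging is probably: show $\mathrm{S}(W^{\otimes n}, k_1, k_2) \ge 1 - g(\mathrm{S}_{\text{sum}}(W^{\otimes n}, k_1', k_2'))$ for suitable $k_i \le k_i'$ with $k_i'/k_i \to 1$ on the exponential scale, where $g \to 0$ as its argument $\to 1$.

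The main obstacle is making the rate loss in the expurgation step genuinely negligible: one must argue that throwing away the bad message pairs and adding the outer error-correction layer costs only $o(1)$ in each rate $R_i$, uniformly enough that the limit in Definition~\ref{defi:generalCapacityRegion} still evaluates to $1$ at the perturbed rates — and then invoke the closure in the definition of the capacity region to absorb the perturbation and recover $(R_1,R_2) \in \mathcal{C}(W)$. Everything else (the union bound / Markov inequality controlling the block error fractions, the independence across the $t$ blocks, and the trivial direction above) is routine. Since this equivalence is classical, it is reasonable to cite \cite{Willems90} for the details and only sketch the two-block argument here.
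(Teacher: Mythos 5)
Your easy direction is exactly the paper's: $\mathrm{p}(W,e,d_1,d_2)\leq\frac{\mathrm{p}_1+\mathrm{p}_2}{2}$ for every code, hence $\mathrm{S}\leq\mathrm{S}_{\text{sum}}$ at every $(k_1,k_2)$. For the substantive direction, however, you have overlooked an elementary observation that makes the entire blocking/expurgation machinery unnecessary, and it is in fact the paper's whole proof: by the union bound, for any fixed code the joint error satisfies
\[
1-\mathrm{p}(W,e,d_1,d_2)=\mathbb{P}\left(J_1\neq I_1\text{ or }J_2\neq I_2\right)\leq(1-\mathrm{p}_1)+(1-\mathrm{p}_2)=2\left(1-\tfrac{\mathrm{p}_1+\mathrm{p}_2}{2}\right),
\]
i.e.\ $1-xy\leq(1-x)+(1-y)$ for $x,y\in[0,1]$. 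Combined with the easy inequality this gives $\mathrm{E}_{\text{sum}}\leq\mathrm{E}\leq 2\mathrm{E}_{\text{sum}}$ for the optimal codes at the \emph{same} $(k_1,k_2)$, so one error vanishes iff the other does, with no rate perturbation and no appeal to the closure in Definition~\ref{defi:generalCapacityRegion}. You actually reach the doorstep of this argument yourself: having derived $\mathrm{p}_1,\mathrm{p}_2\geq 1-2\delta_n$ from average success $1-\delta_n$, the union bound immediately yields joint success at least $1-4\delta_n$, and you are done.

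As written, your second direction is therefore not a complete proof but a sketch of a much heavier classical route (block repetition, expurgation of bad message pairs, an outer erasure code, and a closure argument to absorb the rate loss), with the key quantitative steps deferred to the citation. That route can presumably be made rigorous, but it solves a problem that does not arise here: the two figures of merit already agree up to a factor of $2$ at the single-shot level, so no asymptotic construction is needed. I would replace the entire second half of your argument with the one-line union bound above.
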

\begin{proof}
  Let us focus on error probabilities rather than success ones. Call them respectively $\mathrm{E}(W,k_1,k_2) := 1-\mathrm{S}(W,k_1,k_2)$ and $\mathrm{E}_{\text{sum}}(W,k_1,k_2) := 1-\mathrm{S}_{\text{sum}}(W,k_1,k_2)$. Let us fix a solution $e,d_1,d_2$ of the optimization program computing $\mathrm{S}(W,k_1,k_2)$. Let us remark first that:
  \[ \sum_{i_1,i_2,x,y_1,y_2} W(y_1y_2|x)e(x|i_1i_2) = k_1k_2\ , \]
  thus, the value of the maximum error for those encoder and decoders is:
\begin{equation}
  \begin{aligned}
    &\mathrm{E}(W,k_1,k_2,e,d_1,d_2) := 1 -  \frac{1}{k_1k_2}\left(\sum_{i_1,i_2,x,y_1,y_2} W(y_1y_2|x)e(x|i_1i_2)d_1(i_1|y_1)d_2(i_2|y_2)\right)\\
    &=\frac{1}{k_1k_2}\sum_{i_1,i_2,x,y_1,y_2} W(y_1y_2|x)e(x|i_1i_2)\\
    &-\frac{1}{k_1k_2}\sum_{i_1,i_2,x,y_1,y_2} W(y_1y_2|x)e(x|i_1i_2)d_1(i_1|y_1)d_2(i_2|y_2)\\
    &=\frac{1}{k_1k_2}\left(\sum_{i_1,i_2,x,y_1,y_2} W(y_1y_2|x)e(x|i_1i_2)\left[1-d_1(i_1|y_1)d_2(i_2|y_2)\right]\right) \ .\\
  \end{aligned}
\end{equation}

Similarly, the value of the sum error $\mathrm{E}_{\text{sum}}(W,k_1,k_2,e,d_1,d_2)$ is equal to:
\begin{equation}
  \begin{aligned}
    &1 -  \frac{1}{k_1k_2}\left(\sum_{i_1,i_2,x,y_1,y_2} W(y_1y_2|x)e(x|i_1i_2)\frac{d_1(i_1|y_1)+d_2(i_2|y_2)}{2}\right)\\
    &= \frac{1}{k_1k_2}\left(\sum_{i_1,i_2,x,y_1,y_2} W(y_1y_2)e(x|i_1i_2)\left[1-\frac{d_1(i_1|y_1)+d_2(i_2|y_2)}{2}\right]\right) \ .
  \end{aligned}
\end{equation}

However, for $x,y \in [0,1]$, we have that:
\[1-xy \geq \max\left(1-x,1-y\right) \geq 1-\frac{x+y}{2} \ , \]
and:
\[1-xy \leq (1-x) + (1-y) = 2  \left(1-\frac{x+y}{2}\right) \ . \]

This means that, for all $e,d_1,d_2$, we have:
\[ \mathrm{E}_{\text{sum}}(W,k_1,k_2,e,d_1,d_2) \leq \mathrm{E}(W,k_1,k_2,e,d_1,d_2) \leq 2\mathrm{E}_{\text{sum}}(W,k_1,k_2,e,d_1,d_2) \ ,\]
so, maximizing over all $e,d_1,d_2$, we get:
\[ \mathrm{E}_{\text{sum}}(W,k_1,k_2) \leq \mathrm{E}(W,k_1,k_2) \leq 2\mathrm{E}_{\text{sum}}(W,k_1,k_2) \ .\]

Thus, up to a multiplicative factor $2$, the error is the same. In particular, when one of those errors tends to zero, the other one tends to zero as well. This implies that the capacity regions are the same.
\end{proof}

\section{Non-Signaling Assistance}
\label{section:NSBC}
In this section, we will consider the broadcast channel coding problem with additional resources, in order to determine how these resources affect its success probabilities as well as the capacity regions that can be defined from them.

\subsection{Non-Signaling Assistance Between Decoders}
Here, we consider the case where the receivers are given non-signaling assistance. This resource, which is a theoretical but easier to manipulate generalization of quantum entanglement, can be characterized as follows.  A non-signaling box $d(j_1j_2|y_1y_2)$ is any joint conditional probability distribution such that the marginal from one party is independent of the other party's input, i.e. we have:
\begin{equation}
  \begin{aligned}
    &&\forall j_1,y_1,y_2,y'_2, &&&\sum_{j_2} d(j_1j_2|y_1y_2) = \sum_{j_1} d(j_1j_2|y_1y'_2) \ ,\\
    &&\forall j_2,y_1,y_2,y'_1, &&&\sum_{j_1} d(j_1j_2|y_1y_2) = \sum_{j_1} d(j_1j_2|y'_1y_2) \ .
  \end{aligned}
\end{equation}

Thus, when receivers are given non-signaling assistance, the product $d_1(j_1|y_1)d_2(j_2|y_2)$ is replaced by the non-signaling box $d(j_1j_2|y_1y_2)$. Thus, we define the joint and sum success probabilities $\mathrm{S}^{\mathrm{NS}_{\text{dec}}}(W,k_1,k_2)$ (resp. $\mathrm{S}_{\text{sum}}^{\mathrm{NS}_{\text{dec}}}(W,k_1,k_2)$) by:
\begin{equation}
  \begin{aligned}
    &&\underset{e,d_1,d_2}{\maxi} &&& \frac{1}{k_1k_2}\sum_{i_1,i_2,x,y_1,y_2} W(y_1y_2|x)e(x|i_1i_2)d(i_1i_2|y_1y_2)\\
    \Big(\text{resp. }&&\underset{e,d_1,d_2}{\maxi} &&& \frac{1}{2k_1k_2}\sum_{i_1,i_2,x,y_1,y_2} W(y_1y_2|x)e(x|i_1i_2)\sum_{j_2}d(i_1j_2|y_1y_2)\\
    &&+&&& \frac{1}{2k_1k_2}\sum_{i_1,i_2,x,y_1,y_2} W(y_1y_2|x)e(x|i_1i_2)\sum_{j_1}d(j_1i_2|y_1y_2)\Big)\\
    &&\st &&& \sum_x e(x|i_1i_2) = 1\\
    &&&&& \sum_{j_2} d(j_1j_2|y_1y_2) = \sum_{j_1} d(j_1j_2|y_1y'_2)\\
    &&&&& \sum_{j_1} d(j_1j_2|y_1y_2) = \sum_{j_1} d(j_1j_2|y'_1y_2)\\
    &&&&& \sum_{j_1,j_2} d(j_1j_2|y_1y_2) = 1\\
    &&&&& e(x|i_1i_2), d(j_1j_2|y_1y_2) \geq 0\\
  \end{aligned}
\end{equation}

The (resp. sum) capacity region with non-signaling assistance between decoders is defined using Definition~\ref{defi:generalCapacityRegion} by $\mathcal{C}^{\mathrm{NS}_{\text{dec}}}(W) := \mathcal{C}[\mathrm{S}^{\mathrm{NS}_{\text{dec}}}](W)$ (resp. $\mathcal{C}^{\mathrm{NS}_{\text{dec}}}_{\text{sum}}(W) := \mathcal{C}[\mathrm{S}_{\text{sum}}^{\mathrm{NS}_{\text{dec}}}](W)$).



We will now show that sum and joint capacity regions with non-signaling assistance between decoders are the same.

\begin{proposition}
  \label{prop:NSdecoderssum}
  For any broadcast channel $W$, $\mathcal{C}_{\text{sum}}^{\mathrm{NS}_{\text{dec}}}(W)= \mathcal{C}^{\mathrm{NS}_{\text{dec}}}(W)$.
\end{proposition}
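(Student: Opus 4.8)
The plan is to follow the route of the proof of Proposition~\ref{prop:capacitysum}: work with error probabilities $\mathrm{E}^{\mathrm{NS}_{\text{dec}}} := 1 - \mathrm{S}^{\mathrm{NS}_{\text{dec}}}$ and $\mathrm{E}_{\text{sum}}^{\mathrm{NS}_{\text{dec}}} := 1 - \mathrm{S}_{\text{sum}}^{\mathrm{NS}_{\text{dec}}}$ and show they differ by at most a multiplicative factor~$2$. First I would fix a feasible pair $(e,d)$ for the two programs — which have the \emph{same} feasible set, namely $e$ a stochastic encoder and $d$ a non-signaling box — and write, exactly as in the point-to-point computation and using $\sum_{i_1,i_2,x,y_1,y_2} W(y_1y_2|x)e(x|i_1i_2) = k_1k_2$, the per-strategy errors $\mathrm{E}^{\mathrm{NS}_{\text{dec}}}(W,k_1,k_2,e,d) = \frac{1}{k_1k_2}\sum_{i_1,i_2,x,y_1,y_2} W(y_1y_2|x)e(x|i_1i_2)\left(1 - d(i_1i_2|y_1y_2)\right)$ and $\mathrm{E}_{\text{sum}}^{\mathrm{NS}_{\text{dec}}}(W,k_1,k_2,e,d) = \frac{1}{k_1k_2}\sum_{i_1,i_2,x,y_1,y_2} W(y_1y_2|x)e(x|i_1i_2)\left(1 - \tfrac{1}{2}\big(\sum_{j_2}d(i_1j_2|y_1y_2) + \sum_{j_1}d(j_1i_2|y_1y_2)\big)\right)$.

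Next comes the core step. For fixed $(i_1,i_2,y_1,y_2)$, set $a := d(i_1i_2|y_1y_2)$, $b_1 := \sum_{j_2}d(i_1j_2|y_1y_2)$ and $b_2 := \sum_{j_1}d(j_1i_2|y_1y_2)$; under the probability distribution $d(\cdot\,\cdot\,|y_1y_2)$ on $[k_1]\times[k_2]$, these are the probability that both decoders are correct and the two marginal probabilities that each decoder is correct. Since $a$ is a single nonnegative summand of both $b_1$ and $b_2$, we get $a \le \min(b_1,b_2) \le \tfrac{b_1+b_2}{2}$, and the union bound (equivalently $\mathbb{P}(A\cap B)\ge\mathbb{P}(A)+\mathbb{P}(B)-1$) gives $a \ge b_1+b_2-1$. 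These are precisely the analogues of $1-xy \ge 1-\tfrac{x+y}{2}$ and $1-xy \le (1-x)+(1-y)$ used in Proposition~\ref{prop:capacitysum}, but now valid for an arbitrary (in particular non-signaling) joint box rather than a product of decoders. They yield, for every index tuple, $1 - \tfrac{b_1+b_2}{2} \le 1 - a \le 2\left(1 - \tfrac{b_1+b_2}{2}\right)$.

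Then I would multiply these pointwise inequalities by the nonnegative weights $\tfrac{1}{k_1k_2}W(y_1y_2|x)e(x|i_1i_2)$ and sum, obtaining $\mathrm{E}_{\text{sum}}^{\mathrm{NS}_{\text{dec}}}(W,k_1,k_2,e,d) \le \mathrm{E}^{\mathrm{NS}_{\text{dec}}}(W,k_1,k_2,e,d) \le 2\,\mathrm{E}_{\text{sum}}^{\mathrm{NS}_{\text{dec}}}(W,k_1,k_2,e,d)$ for all feasible $(e,d)$. Minimizing over the common feasible set (a pointwise sandwich $f \le g \le 2f$ passes to $\min f \le \min g \le 2\min f$) gives $\mathrm{E}_{\text{sum}}^{\mathrm{NS}_{\text{dec}}}(W,k_1,k_2) \le \mathrm{E}^{\mathrm{NS}_{\text{dec}}}(W,k_1,k_2) \le 2\,\mathrm{E}_{\text{sum}}^{\mathrm{NS}_{\text{dec}}}(W,k_1,k_2)$. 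Applying this to $W^{\otimes n}$ with $k_1 = \ceil{2^{R_1n}}$, $k_2 = \ceil{2^{R_2n}}$ and letting $n\to\infty$, one error vanishes iff the other does, so a rate pair is $\mathrm{S}^{\mathrm{NS}_{\text{dec}}}$-achievable iff it is $\mathrm{S}_{\text{sum}}^{\mathrm{NS}_{\text{dec}}}$-achievable, hence $\mathcal{C}^{\mathrm{NS}_{\text{dec}}}(W) = \mathcal{C}_{\text{sum}}^{\mathrm{NS}_{\text{dec}}}(W)$.

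The only genuinely new point — and the modest obstacle — is observing that the factor-$2$ sandwiching of Proposition~\ref{prop:capacitysum} survives when the product $d_1(i_1|y_1)d_2(i_2|y_2)$ is replaced by a general box $d(i_1i_2|y_1y_2)$: the lower estimate $a\le\min(b_1,b_2)$ is immediate, and the upper one $a\ge b_1+b_2-1$ is exactly the union bound; the non-signaling constraints play no role here beyond keeping the two feasible sets identical. Everything else is bookkeeping identical to the unassisted case.
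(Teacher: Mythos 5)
Your proof is correct and follows essentially the same route as the paper's: pass to error probabilities, establish the pointwise factor-$2$ sandwich via $a\le\min(b_1,b_2)$ and $a\ge b_1+b_2-1$ (the paper writes this last inequality as an explicit decomposition over the set of pairs with exactly one correct coordinate, which is the same union bound), and transfer the sandwich to the optima and then to the capacity regions.
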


\begin{proof}
  Given an encoder $e$ and a non-signaling decoding box $d$, the maximum success probability of encoding and decoding correctly with those is given by:
\[ \mathrm{S}^{\mathrm{NS}_{\text{dec}}}(W,k_1,k_2,e,d) := \frac{1}{k_1k_2}\sum_{i_1,i_2,x,y_1,y_2} W(y_1y_2|x)e(x|i_1i_2)d(i_1i_2|y_1y_2) \ . \]

This should be compared to the sum success probability $\mathrm{S}_{\text{sum}}^{\mathrm{NS}_{\text{dec}}}(W,k_1,k_2,e,d)$ of encoding and decoding correctly with those:
\[ \frac{1}{k_1k_2}\sum_{i_1,i_2,x,y_1,y_2} W(y_1y_2|x)e(x|i_1i_2)\left[\frac{\sum_{j_2}d(i_1j_2|y_1y_2)+\sum_{j_1}d(j_1i_2|y_1y_2)}{2}\right] \ . \]

Similarly to what was done in Proposition~\ref{prop:capacitysum}, we focus on error probabilities rather than success probabilities. This leads again to:
\[ \mathrm{E}^{\mathrm{NS}_{\text{dec}}}(W,k_1,k_2,e,d) = \frac{1}{k_1k_2}\sum_{i_1,i_2,x,y_1,y_2} W(y_1y_2|x)e(x|i_1i_2)\left[1-d(i_1i_2|y_1y_2)\right] \ , \]
and $\mathrm{E}_{\text{sum}}^{\mathrm{NS}_{\text{dec}}}(W,k_1,k_2,e,d)$ equal to:
\[ \frac{1}{k_1k_2}\sum_{i_1,i_2,x,y_1,y_2} W(y_1y_2|x)e(x|i_1i_2)\left[\frac{1-\sum_{j_2}d(i_1j_2|y_1y_2)}{2}+\frac{1-\sum_{j_1}d(j_1i_2|y_1y_2)}{2}\right] \ . \]

But we have that:
\begin{equation}
  \begin{aligned}
    1-d(i_1i_2|y_1y_2) &\geq \max\left(1-\sum_{j_2}d(i_1j_2|y_1y_2),1-\sum_{j_1}d(j_1i_2|y_1y_2)\right)\\
    &\geq \frac{1-\sum_{j_2}d(i_1j_2|y_1y_2)}{2}+\frac{1-\sum_{j_1}d(j_1i_2|y_1y_2)}{2} \ ,
  \end{aligned}
\end{equation}
since $d(j_1j_2|y_1y_2) \in [0,1]$, and we have that: 
\begin{equation}
  \begin{aligned}
    &1-\sum_{j_2}d(i_1j_2|y_1y_2)+1-\sum_{j_1}d(j_1i_2|y_1y_2)\\
    &= 1-d(i_1i_2|y_1y_2) + 1-\sum_{(j_1,j_2) \in S}d(j_1j_2|y_1y_2)\\
    &\geq 1-d(i_1i_2|y_1y_2) \ ,
  \end{aligned}
\end{equation}
with $S := \{(i_1,j_2): j_2 \in [k_2]-\{i_2\}\} \sqcup \{(j_1,i_2): j_1 \in [k_1]-\{i_1\}\}$.

Thus, this implies that:
\[\mathrm{E}_{\text{sum}}^{\mathrm{NS}_{\text{dec}}}(W,k_1,k_2,e,d) \leq \mathrm{E}^{\mathrm{NS}_{\text{dec}}}(W,k_1,k_2,e,d) \leq 2\mathrm{E}_{\text{sum}}^{\mathrm{NS}_{\text{dec}}}(W,k_1,k_2,e,d) \ ,\]
and by maximizing over all $e$ and $d$:
\[\mathrm{E}_{\text{sum}}^{\mathrm{NS}_{\text{dec}}}(W,k_1,k_2) \leq \mathrm{E}^{\mathrm{NS}_{\text{dec}}}(W,k_1,k_2) \leq 2\mathrm{E}_{\text{sum}}^{\mathrm{NS}_{\text{dec}}}(W,k_1,k_2) \ .\]
As before, this implies that the capacity regions are the same.
\end{proof}

We will now prove that sum success probabilities of the broadcast channel coding problem are the same with and without non-signaling assistance between decoders. In particular, this implies that non-signaling resources shared between decoders does not change the capacity region. Note that, after the publication of~\cite{PDB21}, Pereg et al. added a remark to the arXiv version of their paper that their result stating that entanglement shared between decoders does not change the capacity of a broadcast channel could be generalized to non-signaling assistance. The theorem below strengthens this result showing that non-signaling assistance between the decoders cannot increase the sum success probability even in the one-shot setting and for arbitrary broadcast channels. In particular, this result shows that non-signaling assistance between the decoders does not increase the zero-error capacities nor does it change the error exponents. This is in contrast to correlations between the sender and the receiver in the point-to-point setting that can enhance the zero-error capacity~\cite{CLMW10}. The reason behind this result is very simple: as the quantity $\mathrm{S}_{\text{sum}}(W,k_1,k_2)$ only depends on the marginal channels $W_1$ and $W_2$, correlations between the receivers do not have any effect.

\begin{theorem}
   \label{theo:NSdecoders}
   For any broadcast channel $W$ and $k_1, k_2$, we have  $\mathrm{S}_{\text{sum}}(W,k_1,k_2)=\mathrm{S}_{\text{sum}}^{\mathrm{NS}_{\text{dec}}}(W,k_1,k_2)$. As a consequence, $\mathcal{C}(W) = \mathcal{C}^{\mathrm{NS}_{\text{dec}}}(W)$.
\end{theorem}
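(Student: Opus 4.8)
The plan is to show the nontrivial inequality $\mathrm{S}_{\text{sum}}^{\mathrm{NS}_{\text{dec}}}(W,k_1,k_2) \leq \mathrm{S}_{\text{sum}}(W,k_1,k_2)$; the reverse inequality is immediate since an unassisted strategy is a special case of a non-signaling one (take $d(j_1j_2|y_1y_2) = d_1(j_1|y_1)d_2(j_2|y_2)$). The key observation, already flagged in the paragraph before the theorem, is that $\mathrm{S}_{\text{sum}}$ depends only on the marginal channels $W_1$ and $W_2$: indeed, rewriting the sum objective, the first term involves only $\sum_{j_2} d(i_1 j_2 | y_1 y_2)$ and the second only $\sum_{j_1} d(j_1 i_2 | y_1 y_2)$. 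By the non-signaling constraints, $\sum_{j_2} d(i_1 j_2 | y_1 y_2)$ does not depend on $y_2$, so it defines a valid conditional distribution $\tilde d_1(i_1 | y_1) := \sum_{j_2} d(i_1 j_2 | y_1 y_2)$ (well-defined, nonnegative, summing to $1$ over $i_1$); symmetrically $\tilde d_2(i_2 | y_2) := \sum_{j_1} d(j_1 i_2 | y_1 y_2)$ is a valid conditional distribution independent of $y_1$.

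First I would substitute these marginal decoders into the non-signaling sum objective. After the substitution, the objective becomes exactly
\[
\frac{1}{2k_1k_2}\sum_{i_1,i_2,x,y_1,y_2} W(y_1y_2|x) e(x|i_1i_2) \tilde d_1(i_1|y_1)
+ \frac{1}{2k_1k_2}\sum_{i_1,i_2,x,y_1,y_2} W(y_1y_2|x) e(x|i_1i_2) \tilde d_2(i_2|y_2),
\]
and summing $W(y_1 y_2 | x)$ over the unused output variable turns this into the marginal-channel expression from the Proposition following the definition of $\mathrm{S}_{\text{sum}}$, with $(e, \tilde d_1, \tilde d_2)$ a feasible triple for the unassisted program. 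Hence the non-signaling value is attained by an unassisted strategy, giving $\mathrm{S}_{\text{sum}}^{\mathrm{NS}_{\text{dec}}}(W,k_1,k_2) \leq \mathrm{S}_{\text{sum}}(W,k_1,k_2)$, and thus equality.

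The consequence $\mathcal{C}(W) = \mathcal{C}^{\mathrm{NS}_{\text{dec}}}(W)$ then follows by chaining the equalities already available: $\mathcal{C}^{\mathrm{NS}_{\text{dec}}}(W) = \mathcal{C}_{\text{sum}}^{\mathrm{NS}_{\text{dec}}}(W)$ by Proposition~\ref{prop:NSdecoderssum}, and $\mathcal{C}_{\text{sum}}^{\mathrm{NS}_{\text{dec}}}(W) = \mathcal{C}_{\text{sum}}(W)$ because the underlying success probabilities $\mathrm{S}_{\text{sum}}^{\mathrm{NS}_{\text{dec}}}$ and $\mathrm{S}_{\text{sum}}$ coincide at every blocklength $n$ (applying the just-proved identity to $W^{\otimes n}$), and finally $\mathcal{C}_{\text{sum}}(W) = \mathcal{C}(W)$ by Proposition~\ref{prop:capacitysum}. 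There is no real obstacle here — the only point requiring care is verifying that the non-signaling constraints are exactly what is needed to make $\tilde d_1, \tilde d_2$ well-defined (independent of the other receiver's output), which is precisely their purpose; the rest is bookkeeping with marginals.
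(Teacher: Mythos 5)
Your proposal is correct and follows essentially the same route as the paper's proof: extract the marginal decoders $\tilde d_1,\tilde d_2$ from the non-signaling box (well-defined precisely by the non-signaling constraints), observe that the sum objective depends only on these marginals, and chain Propositions~\ref{prop:NSdecoderssum} and~\ref{prop:capacitysum} for the capacity-region consequence. No gaps.
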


\begin{proof}
In the sum scenario, since the objective function does not depend on the product $d_1(j_1|y_1)d_2(j_2|y_2)$ but only on the marginals $d_1(j_1|y_1)$ and $d_2(j_2|y_2)$, the non-signaling box won't give additional decoding power. Indeed, for any encoder $e$ and non-signaling decoding box $d$, we have that:
\begin{equation}
  \begin{aligned}
    \mathrm{S}_{\text{sum}}^{\mathrm{NS}_{\text{dec}}}(W,k_1,k_2,e,d) &:= \frac{1}{2k_1k_2}\sum_{i_1,x,y_1} W_1(y_1|x)\left(\sum_{j_2}d(i_1j_2|y_1y_2)\right)\sum_{i_2} e(x|i_1i_2)\\
    &+ \frac{1}{2k_1k_2} \sum_{i_2,x,y_2} W_2(y_2|x)\left(\sum_{j_1}d(j_1i_2|y_1y_2)\right)\sum_{i_1} e(x|i_1i_2) \ .
  \end{aligned}
\end{equation}

Thus, by choosing $d_1(j_1|y_1) := \sum_{j_2}d(j_1j_2|y_1y_2)$ and $d_2(j_2|y_2) := \sum_{j_1}d(j_1j_2|y_1y_2)$, which are well-defined since $d$ is a non-signaling box, we have $\mathrm{S}_{\text{sum}}(W,k_1,k_2,e,d_1,d_2) = \mathrm{S}_{\text{sum}}^{\mathrm{NS}_{\text{dec}}}(W,k_1,k_2,e,d)$. By optimizing over all $e$ and $d$, $\mathrm{S}_{\text{sum}}^{\mathrm{NS}_{\text{dec}}}(W,k_1,k_2) \leq \mathrm{S}_{\text{sum}}(W,k_1,k_2)$. Since the inequality is obvious in the other direction, as $d(j_1j_2|y_1y_2) := d_1(j_1|y_1)d_2(j_2|y_2)$ is always a non-signaling box, we have that $\mathrm{S}_{\text{sum}}(W,k_1,k_2)=\mathrm{S}_{\text{sum}}^{\mathrm{NS}_{\text{dec}}}(W,k_1,k_2)$. This implies in particular that the capacity regions are the same, i.e. $\mathcal{C}_{\text{sum}}(W)=\mathcal{C}_{\text{sum}}^{\mathrm{NS}_{\text{dec}}}(W)$

Finally, since $\mathcal{C}(W) = \mathcal{C}_{\text{sum}}(W)$ by Proposition~\ref{prop:capacitysum} and $\mathcal{C}_{\text{sum}}^{\mathrm{NS}_{\text{dec}}}(W)= \mathcal{C}^{\mathrm{NS}_{\text{dec}}}(W)$ by Proposition~\ref{prop:NSdecoderssum}, we get that $\mathcal{C}(W) = \mathcal{C}^{\mathrm{NS}_{\text{dec}}}(W)$.
\end{proof}

\subsection{Full Non-Signaling Assistance}
In this section, we will consider the case where the sender and the receivers are given non-signaling assistance. This means that a three-party non-signaling box $P(xj_1j_2|(i_1i_2)y_1y_2)$ will replace the product $e(x|i_1i_2)d_1(j_1|y_1)d_2(j_2|y_2)$ in the previous objective values. A joint conditional probability $P(xj_1j_2|(i_1i_2)y_1y_2)$ is a non-signaling box if the marginal from any two parties is independent of the remaining party's input:
\begin{equation}
  \begin{aligned}
    &&\forall j_1,j_2,i_1,i_2,y_1,y_2,i_1',i_2', &&&\sum_{x} P(xj_1j_2|(i_1i_2)y_1y_2) = \sum_{x} P(xj_1j_2|(i_1'i_2')y_1y_2) \ ,\\
    &&\forall x,j_2,i_1,i_2,y_1,y_2,y_1', &&&\sum_{j_1} P(xj_1j_2|(i_1i_2)y_1y_2) = \sum_{j_1} P(xj_1j_2|(i_1i_2)y_1'y_2) \ ,\\
    &&\forall x,j_1,i_1,i_2,y_1,y_2,y_2', &&&\sum_{j_2} P(xj_1j_2|(i_1i_2)y_1y_2) = \sum_{j_2} P(xj_1j_2|(i_1i_2)y_1y_2') \ .
  \end{aligned}
\end{equation}

The scenario is depicted in Figure~\ref{fig:BCNScoding}.

\begin{figure}[!h]
\begin{center}
  \begin{tikzpicture}[auto, node distance=2cm,>=latex']
    \node [input, name=i1] {};
    \node [input, name=i2] {};
    \node [Bigblock, below of=i2] (P) {$\ \ \ e\ \ \ \ \ \ d_1\ \ \ \ \ \ d_2\ \ \ $};

    \draw (P.120) -- (P.240);
    \draw (P.60) -- (P.300);

    \draw [->] (P.90) -- node {$j_1$} +(0pt,1cm);
    \draw [->] (P.50) -- node {$j_2$} +(0pt,1cm);
    \coordinate (xbis) at ($ (P.130) + (0pt,1cm) $);
    \draw [-] (P.130) -- (xbis);
    \coordinate (y1) at ($ (P.270)+(0pt,-1.5cm) $);
    \draw [<-] (P.270) -- (y1);
    \coordinate (y2) at ($ (P.310)+(0pt,-2cm) $);
    \draw [<-] (P.310) -- (y2);
    \draw [<-] (P.230) -- node (S) {} +(0pt,-1cm);
    \coordinate (T) at ($ (S)+(1.2cm, 0pt) $);
    \node[left of=T] {$(i_1,i_2)$};

    \node [left of=P] (A) {};
    \node [bigblock, left of=A] (W) {$W$};
    \coordinate (y1bis) at ($ (y1)+(-4.38cm,0pt) $);
    \coordinate (y2bis) at ($ (y2)+(-4.4cm,0pt) $);
    \draw [-] (y1) -- node {$y_1$} (y1bis);
    \draw [-] (y2) -- node {$y_2$} (y2bis);
    \draw [-] (y1bis) -- (W.234);
    \draw [-] (y2bis) -- (W.303);
    \coordinate (x) at ($ (W.north)+(0pt,1.35cm) $) ;
    \draw[<-] (W.north) -- (x);
    \draw (x) -- node {$x$} (xbis);    
  \end{tikzpicture}
  \ \ \ \ 
  \begin{tikzpicture}[auto, node distance=2cm,>=latex']
    \node [input, name=i1] {};
    \node [input, name=i2] {};
    \node [Bigblock, below of=i2] (P) {$P(xj_1j_2|(i_1i_2)y_1y_2)$};

    \draw[dotted] (P.120) -- (P.240);
    \draw[dotted] (P.60) -- (P.300);

    \draw [->] (P.90) -- node {$j_1$} +(0pt,1cm);
    \draw [->] (P.50) -- node {$j_2$} +(0pt,1cm);
    \coordinate (xbis) at ($ (P.130) + (0pt,1cm) $);
    \draw [-] (P.130) -- (xbis);
    \coordinate (y1) at ($ (P.270)+(0pt,-1.5cm) $);
    \draw [<-] (P.270) -- (y1);
    \coordinate (y2) at ($ (P.310)+(0pt,-2cm) $);
    \draw [<-] (P.310) -- (y2);
    \draw [<-] (P.230) -- node (S) {} +(0pt,-1cm);
    \coordinate (T) at ($ (S)+(1.2cm, 0pt) $);
    \node[left of=T] {$(i_1,i_2)$};

    \node [left of=P] (A) {};
    \node [bigblock, left of=A] (W) {$W$};
    \coordinate (y1bis) at ($ (y1)+(-4.38cm,0pt) $);
    \coordinate (y2bis) at ($ (y2)+(-4.4cm,0pt) $);
    \draw [-] (y1) -- node {$y_1$} (y1bis);
    \draw [-] (y2) -- node {$y_2$} (y2bis);
    \draw [-] (y1bis) -- (W.234);
    \draw [-] (y2bis) -- (W.303);
    \coordinate (x) at ($ (W.north)+(0pt,1.35cm) $) ;
    \draw[<-] (W.north) -- (x);
    \draw (x) -- node {$x$} (xbis);    
  \end{tikzpicture}
\end{center}
\caption{A non-signaling box $P$ replacing $e,d_1,d_2$ in the coding problem for the broadcast channel $W$.}
\label{fig:BCNScoding}
\end{figure}
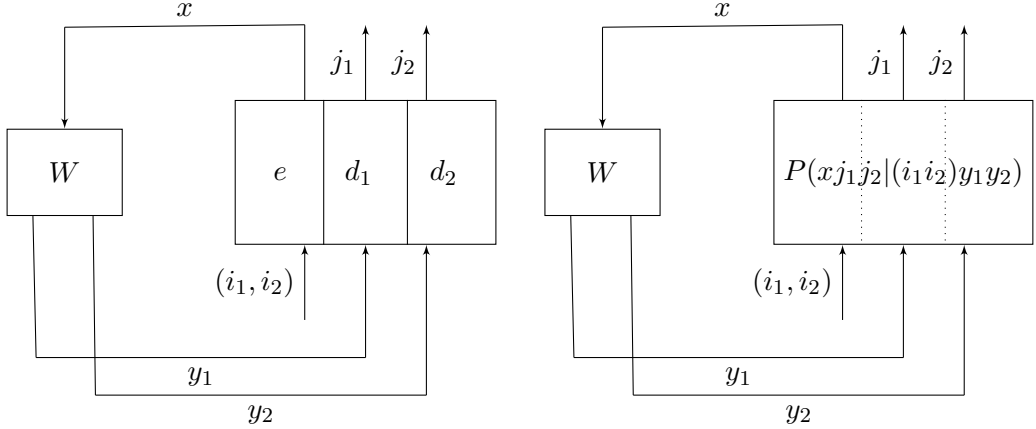

The cyclicity of Figure~\ref{fig:BCNScoding} is at first sight counter-intuitive. Note first that $P$ being a non-signaling box is completely independent of $W$: in particular, the variables $y_1,y_2$ do not need to follow any laws in the definition of $P$ being a non-signaling box. Therefore, the remaining ambiguity is the apparent need to encode and decode at the same time. However, since $P$ is a non-signaling box, we won't need to do both at the same time, although the global correlation between the sender and the receivers will be characterized only by $P(xj_1j_2|(i_1i_2)y_1y_2)$; see \cite{FF23} for a detailed discussion on that matter, the same paradox occurring for multiple-access channels and broadcast channels.


We will call the maximum sum success probability $\mathrm{S}_{\text{sum}}^{\mathrm{NS}}(W,k_1,k_2)$, which is given by the following linear program, where the constraints translate precisely the fact that $P$ is a non-signaling box:
\begin{equation}
  \begin{aligned}
    \mathrm{S}_{\text{sum}}^{\mathrm{NS}}(W,k_1,k_2) := &&\underset{P}{\maxi} &&& \frac{1}{2k_1k_2}\sum_{i_1,x,y_1} W_1(y_1|x)\sum_{i_2,j_2} P(xi_1j_2|(i_1i_2)y_1y_2)\\
    &&+&&& \frac{1}{2k_1k_2}\sum_{i_2,x,y_2} W_2(y_2|x) \sum_{i_1,j_1} P(xj_1i_2|(i_1i_2)y_1y_2)\\
    &&\st &&& \sum_{x} P(xj_1j_2|(i_1i_2)y_1y_2) = \sum_{x} P(xj_1j_2|(i_1'i_2')y_1y_2)\\
    &&&&& \sum_{j_1} P(xj_1j_2|(i_1i_2)y_1y_2) = \sum_{j_1} P(xj_1j_2|(i_1i_2)y_1'y_2)\\
    &&&&& \sum_{j_2} P(xj_1j_2|(i_1i_2)y_1y_2) = \sum_{j_2} P(xj_1j_2|(i_1i_2)y_1y_2')\\
    &&&&& \sum_{x,j_1,j_2} P(xj_1j_2|(i_1i_2)y_1y_2) = 1\\
    &&&&& P(xj_1j_2|(i_1i_2)y_1y_2) \geq 0
  \end{aligned}
\end{equation}

Since it is given as a linear program, the complexity of computing $\mathrm{S}_{\text{sum}}^{\mathrm{NS}}(W,k_1,k_2)$ is polynomial in the number of variables and constraints (see for instance Section 7.1 of~\cite{GM07}), which is a polynomial in $|\mathcal{X}|,|\mathcal{Y}_1|,|\mathcal{Y}_2|, k_1$ and $k_2$.

Similarly, we define the maximum joint success probability $\mathrm{S}^{\mathrm{NS}}(W,k_1,k_2)$ in the following way:
\begin{equation}
  \begin{aligned}
    \mathrm{S}^{\mathrm{NS}}(W,k_1,k_2) := &&\underset{P}{\maxi} &&& \frac{1}{k_1k_2}\sum_{i_1,i_2,x,y_1,y_2} W(y_1y_2|x)P(xi_1i_2|(i_1i_2)y_1y_2)\\
    &&\st &&& \sum_{x} P(xj_1j_2|(i_1i_2)y_1y_2) = \sum_{x} P(xj_1j_2|(i_1'i_2')y_1y_2)\\
    &&&&& \sum_{j_1} P(xj_1j_2|(i_1i_2)y_1y_2) = \sum_{j_1} P(xj_1j_2|(i_1i_2)y_1'y_2)\\
    &&&&& \sum_{j_2} P(xj_1j_2|(i_1i_2)y_1y_2) = \sum_{j_2} P(xj_1j_2|(i_1i_2)y_1y_2')\\
    &&&&& \sum_{x,j_1,j_2} P(xj_1j_2|(i_1i_2)y_1y_2) = 1\\
    &&&&& P(xj_1j_2|(i_1i_2)y_1y_2) \geq 0
  \end{aligned}
\end{equation}

We can rewrite both these programs in more convenient and smaller linear programs:

\begin{proposition}
\begin{equation}
  \begin{aligned}
    \mathrm{S}_{\text{sum}}^{\mathrm{NS}}(W,k_1,k_2) = &&\underset{p,r,r^1,r^2}{\maxi} &&& \frac{1}{2k_1k_2}\left(\sum_{x,y_1} W_1(y_1|x)r^1_{x,y_1} + \sum_{x,y_2} W_2(y_2|x)r^2_{x,y_2}\right)\\
    &&\st &&& \sum_{x} r_{x,y_1,y_2} = 1\\
    &&&&&\sum_{x} r^1_{x,y_1} = k_2\\
    &&&&& \sum_{x} r^2_{x,y_2} = k_1\\
    &&&&& \sum_{x} p_x = k_1k_2\\
    &&&&& 0 \leq r_{x,y_1,y_2} \leq r^1_{x,y_1}, r^2_{x,y_2} \leq p_x\\
    &&&&& p_x - r^1_{x,y_1} - r^2_{x,y_2} + r_{x,y_1,y_2} \geq 0\\
  \end{aligned}
\end{equation}

\begin{equation}
  \begin{aligned}
    \mathrm{S}^{\mathrm{NS}}(W,k_1,k_2) = &&\underset{p,r,r^1,r^2}{\maxi} &&& \frac{1}{k_1k_2}\sum_{x,y_1,y_2} W(y_1y_2|x)r_{x,y_1,y_2}\\
    &&\st &&& \sum_{x} r_{x,y_1,y_2} = 1\\
    &&&&&\sum_{x} r^1_{x,y_1} = k_2\\
    &&&&& \sum_{x} r^2_{x,y_2} = k_1\\
    &&&&& \sum_{x} p_x = k_1k_2\\
    &&&&& 0 \leq r_{x,y_1,y_2} \leq r^1_{x,y_1}, r^2_{x,y_2} \leq p_x\\
    &&&&& p_x - r^1_{x,y_1} - r^2_{x,y_2} + r_{x,y_1,y_2} \geq 0\\
  \end{aligned}
\end{equation}
\end{proposition}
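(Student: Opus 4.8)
The plan is to establish both equalities at once by building an explicit, value-preserving correspondence, in both directions, between feasible non-signaling boxes $P$ of the large program and feasible tuples $(p,r,r^1,r^2)$ of the reduced program. The starting observation is a symmetrization: the objective and every constraint of the large program are invariant under the action of $S_{k_1}\times S_{k_2}$ (permutations of $[k_1]$, resp.\ of $[k_2]$) that simultaneously relabels the message $i_1$ and the guess $j_1$ by $\sigma_1$ and $i_2,j_2$ by $\sigma_2$. Since the feasible set is a polytope and the objective is linear, averaging any feasible $P$ over this finite group yields a feasible $P$ of the same objective value; hence one may assume that $P$ is invariant under this action, i.e.\ that $P(xj_1j_2|(i_1i_2)y_1y_2)$ depends on $(i_1,j_1)$ and on $(i_2,j_2)$ only through the indicators $[j_1=i_1]$ and $[j_2=i_2]$. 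I would check invariance of the objective via the change of variables $i_1'=\sigma_1(i_1)$, $i_2'=\sigma_2(i_2)$, $j_2'=\sigma_2(j_2)$, which preserves the ``decode correctly'' pattern appearing in it.

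For the first inequality, starting from such an invariant feasible $P$ I would set $r_{x,y_1,y_2}:=\sum_{i_1,i_2}P(xi_1i_2|(i_1i_2)y_1y_2)$, $r^1_{x,y_1}:=\sum_{i_1,i_2,j_2}P(xi_1j_2|(i_1i_2)y_1y_2)$, $r^2_{x,y_2}:=\sum_{i_1,i_2,j_1}P(xj_1i_2|(i_1i_2)y_1y_2)$ and $p_x:=\sum_{i_1,i_2,j_1,j_2}P(xj_1j_2|(i_1i_2)y_1y_2)$. The non-signaling equations make these well defined (summing out $j_1$ removes $y_1$-dependence, summing out $j_2$ removes $y_2$-dependence). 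The normalizations $\sum_x r^1_{x,y_1}=k_2$, $\sum_x r^2_{x,y_2}=k_1$ and $\sum_x p_x=k_1k_2$ follow from the fact that $\sum_x P(\cdot|(i_1i_2)y_1y_2)$ is a probability distribution on $(j_1,j_2)$ that is independent of $(i_1,i_2)$; the remaining normalization $\sum_x r_{x,y_1,y_2}=1$ is precisely where invariance is used, since after symmetrization that distribution on $(j_1,j_2)$ is uniform, so its diagonal mass equals $k_1k_2\cdot\frac{1}{k_1k_2}=1$. The inequalities $0\le r_{x,y_1,y_2}\le r^1_{x,y_1},r^2_{x,y_2}\le p_x$ come from discarding nonnegative terms, and $p_x-r^1_{x,y_1}-r^2_{x,y_2}+r_{x,y_1,y_2}=\sum_{i_1,i_2}\sum_{j_1\neq i_1,\,j_2\neq i_2}P(xj_1j_2|(i_1i_2)y_1y_2)\ge 0$ by inclusion-exclusion. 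Finally the large objective rewrites, using $\sum_{i_1}\sum_{i_2,j_2}P(xi_1j_2|(i_1i_2)y_1y_2)=r^1_{x,y_1}$ and its analogue for $r^2$, as the reduced objective in each of the two cases; this yields $\mathrm{S}_{\text{sum}}^{\mathrm{NS}}(W,k_1,k_2)\le(\text{reduced value})$ and $\mathrm{S}^{\mathrm{NS}}(W,k_1,k_2)\le(\text{reduced value})$.

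For the reverse inequality, given a feasible $(p,r,r^1,r^2)$ (and assuming $k_1,k_2\ge 2$, the degenerate cases being similar and simpler), I would define $P(xj_1j_2|(i_1i_2)y_1y_2)$ by four cases according to $([j_1=i_1],[j_2=i_2])$, equal respectively to $\frac{r_{x,y_1,y_2}}{k_1k_2}$, $\frac{r^1_{x,y_1}-r_{x,y_1,y_2}}{k_1k_2(k_2-1)}$, $\frac{r^2_{x,y_2}-r_{x,y_1,y_2}}{k_1k_2(k_1-1)}$ and $\frac{p_x-r^1_{x,y_1}-r^2_{x,y_2}+r_{x,y_1,y_2}}{k_1k_2(k_1-1)(k_2-1)}$. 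Nonnegativity is exactly the chain of reduced inequalities; $\sum_{x,j_1,j_2}P=\frac{1}{k_1k_2}\sum_x p_x=1$ follows from there being $1$, $k_2-1$, $k_1-1$ and $(k_1-1)(k_2-1)$ pairs $(j_1,j_2)$ of each type. The three non-signaling identities are then one short computation each: $\sum_x P(xj_1j_2|(i_1i_2)y_1y_2)=\frac{1}{k_1k_2}$ in all four cases (hence it does not depend on $(i_1,i_2)$); $\sum_{j_1}P$ equals $\frac{r^2_{x,y_2}}{k_1k_2}$ if $j_2=i_2$ and $\frac{p_x-r^2_{x,y_2}}{k_1k_2(k_2-1)}$ otherwise (hence it does not depend on $y_1$); and symmetrically $\sum_{j_2}P$ does not depend on $y_2$. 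A last computation, $\sum_{i_2,j_2}P(xi_1j_2|(i_1i_2)y_1y_2)=\frac{r^1_{x,y_1}}{k_1}$ and $\sum_{i_1,i_2}P(xi_1i_2|(i_1i_2)y_1y_2)=r_{x,y_1,y_2}$, identifies the large objective of this $P$ with the reduced one, giving the matching inequality and hence equality in both programs.

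The only genuinely conceptual step is the symmetrization: without it the induced tuple satisfies only $\sum_x r_{x,y_1,y_2}\le 1$, so the reduction would break, whereas the symmetrized box makes that distribution uniform and restores the equality. Everything else is bookkeeping, the fiddliest part being the verification of the three non-signaling identities for the piecewise-defined $P$ in the converse direction; both reduced programs are treated by the very same dictionary, only the linear functional being optimized ($r^1,r^2$ in the sum version, $r$ in the joint version) changing.
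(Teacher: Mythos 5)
Your proof is correct and follows essentially the same route as the paper: the same dictionary $r_{x,y_1,y_2}=\sum_{i_1,i_2}P(xi_1i_2|(i_1i_2)y_1y_2)$, $r^1$, $r^2$, $p$ in one direction, and the same explicit four-case reconstruction of $P$ in the other, with the feasibility and objective checks you describe all going through.

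The one substantive remark concerns your symmetrization step and, more importantly, the reason you give for it. You claim that without averaging over $S_{k_1}\times S_{k_2}$ one would only get $\sum_x r_{x,y_1,y_2}\le 1$. That is not so: the equality holds for \emph{every} feasible $P$. Indeed, by the first non-signaling constraint the quantity $Q(j_1j_2|y_1y_2):=\sum_x P(xj_1j_2|(i_1i_2)y_1y_2)$ does not depend on $(i_1,i_2)$, and it is a probability distribution on $[k_1]\times[k_2]$ by the normalization constraint. Hence
\[
\sum_x r_{x,y_1,y_2}=\sum_{i_1,i_2}\sum_x P(xi_1i_2|(i_1i_2)y_1y_2)=\sum_{i_1,i_2}Q(i_1i_2|y_1y_2)=\sum_{j_1,j_2}Q(j_1j_2|y_1y_2)=1,
\]
since summing the value of $Q$ at $(j_1,j_2)=(i_1,i_2)$ over all message pairs is just summing $Q$ over its whole domain — there is no genuine ``diagonal restriction'' here precisely because $Q$ is independent of the conditioning. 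So the symmetrization is harmless (the group action preserves feasibility and the objective, and in fact leaves the extracted tuple $(p,r,r^1,r^2)$ unchanged) but entirely redundant; the paper does without it. Everything else, including your explicit handling of the degenerate cases $k_1=1$ or $k_2=1$ where the reconstruction formulas have vanishing denominators, is fine.
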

\begin{proof}
  One can check that given a solution of the original program, the following choice of variables is a valid solution of the second program achieving the same objective value:
\begin{equation}
  \begin{aligned}
    &r_{x,y_1,y_2} := \sum_{i_1,i_2} P(xi_1i_2|(i_1i_2)y_1y_2)\ ,\\
    &r^1_{x,y_1} := \sum_{j_2,i_1,i_2} P(xi_1j_2|(i_1i_2)y_1y_2)\ ,\\
    &r^2_{x,y_2} := \sum_{j_1,i_1,i_2} P(xj_1i_2|(i_1i_2)y_1y_2)\ ,\\
    &p_x := \sum_{j_1,j_2,i_1,i_2} P(xj_1j_2|(i_1i_2)y_1y_2) \ .\\
  \end{aligned}
\end{equation}

For the other direction, given those variables, a non-signaling probability distribution $P(xj_1j_2|(i_1i_2)y_1y_2)$ is given by, for $j_1 \not= i_1$ and $j_2 \not= i_2$:
\begin{equation}
  \begin{aligned}
    &P(xi_1i_2|(i_1i_2)y_1y_2) = \frac{r_{x,y_1,y_2}}{k_1k_2}  \ ,\\
    &P(xj_1i_2|(i_1i_2)y_1y_2) = \frac{r^2_{x,y_2} - r_{x,y_1,y_2}}{k_1k_2(k_1-1)}  \ ,\\
    &P(xi_1j_2|(i_1i_2)y_1y_2) = \frac{r^1_{x,y_1} - r_{x,y_1,y_2}}{k_1k_2(k_2-1)} \ ,\\
    &P(xj_1j_2|(i_1i_2)y_1y_2) = \frac{p_{x} -  r^1_{x,y_1} - r^2_{x,y_2} + r_{x,y_1,y_2}}{k_1k_2(k_1-1)(k_2-1)} \ .\\
  \end{aligned}
\end{equation}
\end{proof}

As before, we define the (resp. sum) capacity region with non-signaling assistance using Definition~\ref{defi:generalCapacityRegion} by $\mathcal{C}^{\mathrm{NS}}(W) :=  \mathcal{C}[\mathrm{S}^{\mathrm{NS}}](W)$ (resp. $\mathcal{C}^{\mathrm{NS}}_{\text{sum}}(W) := \mathcal{C}[\mathrm{S}^{\mathrm{NS}}_{\text{sum}}](W)$).


\begin{proposition}
    For any broadcast channel $W$, $\mathcal{C}^{\mathrm{NS}}(W) = \mathcal{C}^{\mathrm{NS}}_{\text{sum}}(W)$.
\end{proposition}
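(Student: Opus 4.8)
The plan is to follow the same route as Propositions~\ref{prop:capacitysum} and~\ref{prop:NSdecoderssum}: work with error probabilities instead of success probabilities and show that, for \emph{every} feasible non-signaling box $P$, the joint error and the sum error differ by at most a multiplicative factor $2$; optimizing over $P$ then transfers this to $\mathrm{S}^{\mathrm{NS}}$ and $\mathrm{S}^{\mathrm{NS}}_{\text{sum}}$, and the equality of capacity regions follows from Definition~\ref{defi:generalCapacityRegion}. Concretely, for a fixed feasible $P$ let $\mathrm{S}^{\mathrm{NS}}(W,k_1,k_2,P)$ and $\mathrm{S}^{\mathrm{NS}}_{\text{sum}}(W,k_1,k_2,P)$ denote the two objective values at $P$, where one uses the non-signaling constraints to rewrite the sum objective with $W(y_1y_2|x)$ in place of $W_1,W_2$ (this is legitimate since $\sum_{j_2}P(xi_1j_2|(i_1i_2)y_1y_2)$ is independent of $y_2$ and $\sum_{j_1}P(xj_1i_2|(i_1i_2)y_1y_2)$ is independent of $y_1$), and set $\mathrm{E}^{\mathrm{NS}}(\cdot,P) := 1 - \mathrm{S}^{\mathrm{NS}}(\cdot,P)$ and $\mathrm{E}^{\mathrm{NS}}_{\text{sum}}(\cdot,P) := 1 - \mathrm{S}^{\mathrm{NS}}_{\text{sum}}(\cdot,P)$.

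First I would establish the normalization identity $\frac{1}{k_1k_2}\sum_{i_1,i_2,x,y_1,y_2}W(y_1y_2|x)\sum_{j_1,j_2}P(xj_1j_2|(i_1i_2)y_1y_2) = 1$. This holds because, combining the second and third non-signaling constraints, the marginal $\sum_{j_1,j_2}P(xj_1j_2|(i_1i_2)y_1y_2)$ is independent of both $y_1$ and $y_2$, hence defines a probability distribution $q(x\mid i_1,i_2)$ on $\mathcal X$ (using $\sum_{x,j_1,j_2}P = 1$); summing $\sum_{y_1,y_2}W(y_1y_2|x) = 1$ and then over $i_1,i_2$ gives $k_1k_2$. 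Using this identity, and writing for a fixed tuple $(x,i_1,i_2,y_1,y_2)$ (whose indices I suppress) $T := \sum_{j_1,j_2}P(xj_1j_2|(i_1i_2)y_1y_2)$, $A := P(xi_1i_2|(i_1i_2)y_1y_2)$, $B := \sum_{j_2}P(xi_1j_2|(i_1i_2)y_1y_2)$ and $C := \sum_{j_1}P(xj_1i_2|(i_1i_2)y_1y_2)$, both errors take the ``total minus success'' form:
\[ \mathrm{E}^{\mathrm{NS}}(W,k_1,k_2,P) = \tfrac{1}{k_1k_2}\textstyle\sum_{i_1,i_2,x,y_1,y_2}W(y_1y_2|x)(T - A), \quad \mathrm{E}^{\mathrm{NS}}_{\text{sum}}(W,k_1,k_2,P) = \tfrac{1}{k_1k_2}\textstyle\sum_{i_1,i_2,x,y_1,y_2}W(y_1y_2|x)\big(T - \tfrac{B+C}{2}\big). \]

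Then, for each fixed tuple, I would prove the elementary sandwich $T - \tfrac{B+C}{2} \le T - A \le 2\big(T - \tfrac{B+C}{2}\big)$. The left inequality is $A \le \tfrac{B+C}{2}$, true since $A \le B$ and $A \le C$ (the sums defining $B$ and $C$ are over nonnegative terms and contain the term $A$). The right inequality is $B + C - A \le T$, which is inclusion--exclusion: $B + C - A = \sum_{j_1 = i_1 \text{ or } j_2 = i_2}P(xj_1j_2|(i_1i_2)y_1y_2)$ is a partial sum of the nonnegative terms that add up to $T$. Multiplying by $W(y_1y_2|x) \ge 0$ and summing gives $\mathrm{E}^{\mathrm{NS}}_{\text{sum}}(W,k_1,k_2,P) \le \mathrm{E}^{\mathrm{NS}}(W,k_1,k_2,P) \le 2\,\mathrm{E}^{\mathrm{NS}}_{\text{sum}}(W,k_1,k_2,P)$; since the feasible set of $P$ is the same in the two linear programs, optimizing yields $\mathrm{E}^{\mathrm{NS}}_{\text{sum}}(W,k_1,k_2) \le \mathrm{E}^{\mathrm{NS}}(W,k_1,k_2) \le 2\,\mathrm{E}^{\mathrm{NS}}_{\text{sum}}(W,k_1,k_2)$. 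Hence one of these errors tends to $0$ along $W^{\otimes n}$ iff the other does, so by Definition~\ref{defi:generalCapacityRegion} the $\mathrm{S}^{\mathrm{NS}}$- and $\mathrm{S}^{\mathrm{NS}}_{\text{sum}}$-achievable rate pairs coincide, i.e.\ $\mathcal{C}^{\mathrm{NS}}(W) = \mathcal{C}^{\mathrm{NS}}_{\text{sum}}(W)$. I do not anticipate a real obstacle: the argument is essentially a transcription of Propositions~\ref{prop:capacitysum} and~\ref{prop:NSdecoderssum}, the only step requiring a little care being the normalization identity, i.e.\ checking from the non-signaling constraints that $\sum_{j_1,j_2}P$ does not depend on $y_1,y_2$, since this is what lets one put both objective values in the ``total minus success'' form to which the per-tuple inequalities apply.
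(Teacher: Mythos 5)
Your proof is correct and is essentially the paper's argument in different clothing: your per-tuple inequalities $A \le \tfrac{B+C}{2}$ and $B+C-A \le T$ are exactly the constraints $r_{x,y_1,y_2} \le r^1_{x,y_1}, r^2_{x,y_2}$ and $p_x - r^1_{x,y_1} - r^2_{x,y_2} + r_{x,y_1,y_2} \ge 0$ of the paper's reduced linear program, and your normalization identity is the constraint $\sum_x p_x = k_1k_2$. The paper phrases the conclusion as $2\,\mathrm{S}^{\mathrm{NS}}_{\text{sum}} - 1 \le \mathrm{S}^{\mathrm{NS}} \le \mathrm{S}^{\mathrm{NS}}_{\text{sum}}$ in success form rather than your multiplicative sandwich on errors, but these are equivalent and both yield the equality of capacity regions.
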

\begin{proof}
Let us show that:
\[ 2 \mathrm{S}_{\text{sum}}^{\mathrm{NS}}(W,k_1,k_2)-1 \leq  \mathrm{S}^{\mathrm{NS}}(W,k_1,k_2) \leq \mathrm{S}_{\text{sum}}^{\mathrm{NS}}(W,k_1,k_2) \ . \]
This will imply in particular that:
\[ \underset{n \rightarrow +\infty}{\lim} \mathrm{S}^{\mathrm{NS}}(W^{\otimes n},\ceil{2^{R_1n}},\ceil{2^{R_2n}}) = 1 \iff \underset{n \rightarrow +\infty}{\lim} \mathrm{S}^{\mathrm{NS}}_{\text{sum}}(W^{\otimes n},\ceil{2^{R_1n}},\ceil{2^{R_2n}}) = 1 \ ,\]
thus define the same capacity region.

Let us consider an optimal solution $p_x,r_{x,y_1,y_2},r^1_{x,y_1},r^2_{x,y_2}$ of the program computing $\mathrm{S}_{\text{sum}}^{\mathrm{NS}}(W,k_1,k_2)$. We have:
\[ \mathrm{S}_{\text{sum}}^{\mathrm{NS}}(W,k_1,k_2) = \frac{1}{k_1k_2}\left(\sum_{x,y_1,y_2} W(y_1y_2|x)\frac{r^1_{x,y_1} + r^2_{x,y_2}}{2}\right) \ . \]

  However $r^1_{x,y_1} + r^2_{x,y_2} \leq p_x + r_{x,y_1,y_2}$ so we get that:
  \begin{equation}
    \begin{aligned}
      \mathrm{S}_{\text{sum}}^{\mathrm{NS}}(W,k_1,k_2) &\leq \frac{1}{2k_1k_2}\left(\sum_{x,y_1,y_2} W(y_1y_2|x)\left( p_x + r_{x,y_1,y_2}\right)\right)\\
      &= \frac{1}{2} + \frac{1}{2}\left[\frac{1}{k_1k_2}\left(\sum_{x,y_1,y_2} W(y_1y_2|x)r_{x,y_1,y_2}\right)\right]\\
      &\leq \frac{1}{2} + \frac{1}{2}\mathrm{S}^{\mathrm{NS}}(W,k_1,k_2) \ ,
    \end{aligned}
  \end{equation}
  since $p_x,r_{x,y_1,y_2},r^1_{x,y_1},r^2_{x,y_2}$ is a valid solution of the program computing $\mathrm{S}^{\mathrm{NS}}(W,k_1,k_2)$.

  On the other hand, consider now $p_x,r_{x,y_1,y_2},r^1_{x,y_1},r^2_{x,y_2}$ an optimal solution of the program computing $\mathrm{S}^{\mathrm{NS}}(W,k_1,k_2)$. We have that $r_{x,y_1,y_2} \leq r^1_{x,y_1},r^2_{x,y_2}$ so we have that $r_{x,y_1,y_2} \leq \frac{r^1_{x,y_1}+r^2_{x,y_2}}{2}$ and thus:
  \begin{equation}
    \begin{aligned}
      \mathrm{S}^{\mathrm{NS}}(W,k_1,k_2) &= \frac{1}{k_1k_2}\left(\sum_{x,y_1,y_2} W(y_1y_2|x)r_{x,y_1,y_2}\right)\\
      &\leq \frac{1}{k_1k_2}\left(\sum_{x,y_1,y_2} W(y_1y_2|x)\frac{r^1_{x,y_1}+r^2_{x,y_2}}{2}\right)\\
      &\leq \mathrm{S}_{\text{sum}}^{\mathrm{NS}}(W,k_1,k_2) \ ,
    \end{aligned}
  \end{equation}
  since $p_x,r_{x,y_1,y_2},r^1_{x,y_1},r^2_{x,y_2}$ is a valid solution of the program computing $\mathrm{S}^{\mathrm{NS}}_{\text{sum}}(W,k_1,k_2)$. This prove the inequalities $ 2 \mathrm{S}_{\text{sum}}^{\mathrm{NS}}(W,k_1,k_2)-1 \leq  \mathrm{S}^{\mathrm{NS}}(W,k_1,k_2) \leq \mathrm{S}_{\text{sum}}^{\mathrm{NS}}(W,k_1,k_2)$, and thus concludes the proof.
\end{proof}

\section{Approximation Algorithm for Deterministic Broadcast Channel Coding}
\label{section:ApproxDetBC}
In this section, we will address the question of the approximability of $\mathrm{S}(W,k_1,k_2)$, in the restricted scenario of a deterministic broadcast channel $W$. Specifically, we study the problem of finding a code  $e : [k_1] \times [k_2] \rightarrow \mathcal{X}$, $d_1 : \mathcal{Y}_1 \rightarrow [k_1]$,  $d_2 : \mathcal{Y}_2 \rightarrow [k_2]$ that maximizes the program computing $\mathrm{S}(W,k_1,k_2)$. Note that the restriction to deterministic codes does not affect the value of the objective of the program which is convex, and that the problem is as hard as finding any code maximizing the program computing $\mathrm{S}(W,k_1,k_2)$, as a deterministic code with a better or equal value can be retrieved easily from any code.

We say that $W$ is deterministic if $\forall x,y_1,y_2, W(y_1y_2|x) \in \{0,1\}$. We can then define $(W_1(x),W_2(x))$ as the only pair $(y_1,y_2)$ such that $W(y_1y_2|x)=1$, which exists uniquely as $W$ is a conditional probability distribution. Thus, the deterministic broadcast channel coding problem can be defined in the following way:

\begin{definition}[\textsc{DetBCC}]
  Given a deterministic channel $W$ and integers $k_1$ and $k_2$, the deterministic broadcast channel coding problem, which we call \textsc{DetBCC}, entails maximizing
  \[ \mathrm{S}(W,k_1,k_2,e,d_1,d_2) := \frac{1}{k_1k_2}\sum_{i_1,i_2}\mathbbm{1}_{d_1(W_1(e(i_1i_2)))=i_1}\mathbbm{1}_{d_2(W_2(e(i_1i_2)))=i_2}\]
  over all functions $e : [k_1] \times [k_2] \rightarrow \mathcal{X}$, $d_1 : \mathcal{Y}_1 \rightarrow [k_1]$,  $d_2 : \mathcal{Y}_2 \rightarrow [k_2]$.
\end{definition}

\subsection{Reformulation as a Bipartite Graph Problem}
We will reformulate \textsc{DetBCC} as a bipartite graph problem. But first, let us introduce some notations:

\begin{definition}[Graph notations]
  Consider a bipartite graph $G = (V_1 \sqcup V_2, E \subseteq V_1 \times V_2)$:
  \begin{enumerate}
    \item $G^{\mathcal{P}_1,\mathcal{P}_2}$, the quotient of $G$ by partitions $\mathcal{P}_1,\mathcal{P}_2$ of respectively $V_1,V_2$, is defined by:
      \[ G^{\mathcal{P}_1,\mathcal{P}_2} := \left(\mathcal{P}_1 \sqcup \mathcal{P}_2, \{(p_1,p_2) \in \mathcal{P}_1 \times \mathcal{P}_2: E \cap (p_1 \times p_2) \not= \emptyset\}\right) \ .\]
    \item $e_G(\mathcal{P}_1,\mathcal{P}_2) := |E^{G^{\mathcal{P}_1,\mathcal{P}_2}}|$ is the number of edges of $G^{\mathcal{P}_1,\mathcal{P}_2}$.
    \item $N_G^{\mathcal{P}_1,\mathcal{P}_2}(p) := N_{G^{\mathcal{P}_1,\mathcal{P}_2}}(p)$ is the set of neighbors of $p \in \mathcal{P}_1 \sqcup \mathcal{P}_2$ in the graph $G^{\mathcal{P}_1,\mathcal{P}_2}$.
    \item Similarly, $\deg_G^{\mathcal{P}_1,\mathcal{P}_2}(p) := \deg_{G^{\mathcal{P}_1,\mathcal{P}_2}}(p)$ is the degree, i.e. the number of neighbors, of $p$ in the graph $G^{\mathcal{P}_1,\mathcal{P}_2}$.
    \item We will use $V_1,V_2$ in previous notations when we do not partition on the left and right part respectively (or identify those to trivial partitions in singletons). For instance, $G^{V_1,V_2}=G$.
    \item We will use the notations $e(\mathcal{P}_1,\mathcal{P}_2)$, $N_{\mathcal{P}_1,\mathcal{P}_2}(p)$  and $\deg_{\mathcal{P}_1,\mathcal{P}_2}(p)$ when the graph $G$ considered is clear from context.
  \end{enumerate}
\end{definition}

Now, let us remark that a deterministic channel $W$, up to a permutation of elements of $\mathcal{X}$, is characterized by the following bipartite graph:

\begin{definition}[Bipartite Graph $G_W$ associated with the deterministic channel $W$]
  \label{defi:graphGW}
\[ G_W:=(\mathcal{Y}_1 \sqcup \mathcal{Y}_2, E = \{(y_1,y_2) \in \mathcal{Y}_1 \times \mathcal{Y}_2: \exists x \in \mathcal{X}, y_1=W_1(x) \text{ and } y_2=W_2(x)\}) \ . \]
\end{definition}

Indeed, permuting the elements of $\mathcal{X}$ does not change $G_W$ nor $\mathrm{S}(W,k_1,k_2)$. As a consequence, up to a multiplicative factor $k_1k_2$, we will show that \textsc{DetBCC} is equivalent to the following bipartite graph problem:

\begin{definition}[\textsc{DensestQuotientGraph}]
Given a bipartite graph $G  = (V_1 \sqcup V_2, E)$ and integers $k_1,k_2$, the problem \textsc{DensestQuotientGraph} entails maximizing $e_G(\mathcal{P}_1,\mathcal{P}_2)$, the number of edges of the quotient graph of $G$ by $\mathcal{P}_1,\mathcal{P}_2$, over all partitions $\mathcal{P}_1$ of $V_1$ in $k_1$ parts and $\mathcal{P}_2$ of $V_2$ in $k_2$ parts.
\end{definition}

\begin{proposition}
  \label{prop:BCCisDQG}
  Given a deterministic channel $W$ and integers $k_1,k_2$, it is equivalent to solve \textsc{DetBCC} on $W,k_1,k_2$ or \textsc{DensestQuotientGraph} on $G_W,k_1,k_2$. That is to say, given an optimal solution of one of those problems, one can efficiently construct an optimal solution of the other. Furthermore, their optimal values satisfy $k_1k_2\textsc{DetBCC}(W,k_1,k_2) = \textsc{DensestQuotientGraph}(G_W,k_1,k_2)$.
\end{proposition}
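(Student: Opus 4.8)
The plan is to exhibit an explicit, efficient, and bijective-up-to-equivalence correspondence between codes $(e, d_1, d_2)$ for \textsc{DetBCC} and pairs of partitions $(\mathcal{P}_1, \mathcal{P}_2)$ for \textsc{DensestQuotientGraph} on $G_W$, and to check that the objective values match exactly up to the factor $k_1 k_2$. The key observation driving everything is that since $W$ is deterministic, a code succeeds on the message pair $(i_1, i_2)$ precisely when $d_1$ correctly identifies $i_1$ from $W_1(e(i_1 i_2))$ \emph{and} $d_2$ correctly identifies $i_2$ from $W_2(e(i_1 i_2))$; so $k_1 k_2 \cdot \mathrm{S}(W, k_1, k_2, e, d_1, d_2)$ simply counts the number of pairs $(i_1, i_2)$ on which both decoders are simultaneously correct.

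First I would describe the map from partitions to codes. Given $\mathcal{P}_1 = \{p_1^{(1)}, \ldots, p_1^{(k_1)}\}$ of $\mathcal{Y}_1$ and $\mathcal{P}_2 = \{p_2^{(1)}, \ldots, p_2^{(k_2)}\}$ of $\mathcal{Y}_2$, define the decoders $d_1 : \mathcal{Y}_1 \to [k_1]$ by sending $y_1$ to the index of the part of $\mathcal{P}_1$ containing it, and likewise $d_2$. For the encoder: for each $(i_1, i_2) \in [k_1] \times [k_2]$ such that $(p_1^{(i_1)}, p_2^{(i_2)})$ is an edge of $G_W^{\mathcal{P}_1, \mathcal{P}_2}$, there exists by definition of the quotient a pair $(y_1, y_2) \in p_1^{(i_1)} \times p_2^{(i_2)}$ with $(y_1, y_2) \in E(G_W)$, i.e. an $x \in \mathcal{X}$ with $W_1(x) = y_1$, $W_2(x) = y_2$; set $e(i_1 i_2) := x$. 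For the (at most $k_1 k_2$) remaining pairs $(i_1, i_2)$, set $e(i_1 i_2)$ to be an arbitrary element of $\mathcal{X}$. Then for every pair $(i_1, i_2)$ corresponding to an edge of the quotient graph, $d_1(W_1(e(i_1 i_2))) = d_1(y_1) = i_1$ and $d_2(W_2(e(i_1 i_2))) = d_2(y_2) = i_2$, so that pair contributes to the success count; hence $k_1 k_2 \cdot \mathrm{S}(W, k_1, k_2, e, d_1, d_2) \geq e_{G_W}(\mathcal{P}_1, \mathcal{P}_2)$.

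Conversely, from a code $(e, d_1, d_2)$ I would build partitions: let $\mathcal{P}_1$ be the partition of $\mathcal{Y}_1$ into the fibers $d_1^{-1}(1), \ldots, d_1^{-1}(k_1)$ (padding with empty parts, or merging into nonempty ones, to get exactly $k_1$ parts — a minor bookkeeping point I would handle by noting empty parts can only be harmful and can be absorbed), and similarly $\mathcal{P}_2$ from $d_2$. If the code succeeds on $(i_1, i_2)$, then with $x = e(i_1 i_2)$, $y_1 = W_1(x)$, $y_2 = W_2(x)$ we have $d_1(y_1) = i_1$ and $d_2(y_2) = i_2$, so $y_1 \in p_1^{(i_1)}$, $y_2 \in p_2^{(i_2)}$, and $(y_1, y_2) \in E(G_W)$; thus $(p_1^{(i_1)}, p_2^{(i_2)})$ is an edge of $G_W^{\mathcal{P}_1, \mathcal{P}_2}$. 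Since distinct successful pairs $(i_1, i_2)$ yield distinct quotient-edges $(p_1^{(i_1)}, p_2^{(i_2)})$, we get $e_{G_W}(\mathcal{P}_1, \mathcal{P}_2) \geq k_1 k_2 \cdot \mathrm{S}(W, k_1, k_2, e, d_1, d_2)$. Combining the two directions at the optimum gives $k_1 k_2 \cdot \textsc{DetBCC}(W, k_1, k_2) = \textsc{DensestQuotientGraph}(G_W, k_1, k_2)$, and the constructions above are clearly polynomial-time, so an optimal solution of either problem efficiently yields an optimal solution of the other.

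The main obstacle, such as it is, is not mathematical depth but careful handling of two degeneracies: partitions are required to have \emph{exactly} $k_1$ (resp. $k_2$) nonempty parts, whereas decoder fibers may be empty, and the encoder must be total even though the quotient graph may have fewer than $k_1 k_2$ edges. Both are resolved by the observation that empty parts and "junk" message pairs never contribute positively, so one can freely split a nonempty part to reach the exact count (this only adds edges, never removes them) and assign leftover messages arbitrarily; I would state this explicitly but not belabor it. A further small point worth a sentence is the legitimacy of restricting to deterministic codes, which the paragraph preceding the definition of \textsc{DetBCC} already justifies via convexity, so I would simply cite that.
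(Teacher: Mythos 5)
Your proposal is correct and follows essentially the same route as the paper: decoder fibers give the partitions, an encoder picking a witness $x$ for each quotient edge gives the code, and the success count equals the quotient edge count. The only cosmetic difference is that you prove the two inequalities for arbitrary solutions and combine them at the optimum, whereas the paper argues directly that an optimal encoder must realize the $\max_x$; your handling of the empty-part and leftover-message degeneracies is a welcome extra bit of care.
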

\begin{proof}
 Consider an optimal solution $e,d_1,d_2$ of \textsc{DetBCC}. Note that $d_1$ defines a partition $\mathcal{P}_1$ of $\mathcal{Y}_1$ in $k_1$ parts and $d_2$ defines a partition $\mathcal{P}_2$ of $\mathcal{Y}_2$ in $k_2$ parts, with $\mathcal{P}_b^{i_b} := \{y_b \in \mathcal{Y}_b : d_b(y) = i_b \}$ for $b \in \{1,2\}$. Then we have:
\begin{equation}
  \begin{aligned}
    k_1k_2\mathrm{S}(W,k_1,k_2,e,d_1,d_2) &= \sum_{i_1,i_2} \mathbbm{1}_{i_1=d_1(W_1(e(i_1,i_2)))}\mathbbm{1}_{i_2=d_2(W_2(e(i_1,i_2)))}\\
    &= \sum_{i_1,i_2} \mathbbm{1}_{W_1(e(i_1,i_2)) \in \mathcal{P}_1^{i_1}}\mathbbm{1}_{W_2(e(i_1,i_2)) \in \mathcal{P}_2^{i_2}} \ .
    \end{aligned}
\end{equation}

However, since we consider an optimal solution, we have that:
\[ \mathbbm{1}_{W_1(e(i_1,i_2)) \in \mathcal{P}_1^{i_1}}\mathbbm{1}_{W_2(e(i_1,i_2)) \in \mathcal{P}_2^{i_2}} = \max_{x \in \mathcal{X}}\mathbbm{1}_{W_1(x) \in \mathcal{P}_1^{i_1}}\mathbbm{1}_{W_2(x) \in \mathcal{P}_2^{i_2}} \ , \]
as $e(i_1,i_2)$ appears only here in the objective value. Thus:
\begin{equation}
  \begin{aligned}   
    k_1k_2\mathrm{S}(W,k_1,k_2,e,d_1,d_2) &=  \sum_{i_1,i_2} \max_{x \in \mathcal{X}}\mathbbm{1}_{W_1(x) \in \mathcal{P}_1^{i_1}}\mathbbm{1}_{W_2(x) \in \mathcal{P}_2^{i_2}}\\
    &= \sum_{i_1,i_2} \mathbbm{1}_{\exists (y_1,y_2) \in E^{G_W}: y_1 \in \mathcal{P}_1^{i_1}\text{ and } y_2 \in \mathcal{P}_2^{i_2}}\\
    &= \sum_{i_1,i_2} \mathbbm{1}_{E^{G_W} \cap \left(\mathcal{P}_1^{i_1} \times \mathcal{P}_2^{i_2}\right) \not= \emptyset}\\
    &=  e_{G_W}(\mathcal{P}_1,\mathcal{P}_2) \ ,\\
    \end{aligned}
\end{equation}
which proves that given an optimal solution of \textsc{DetBCC}, one can efficiently construct a solution $\mathcal{P}_1,\mathcal{P}_2$ of \textsc{DensestQuotientGraph} such that:
\[ e_{G_W}(\mathcal{P}_1,\mathcal{P}_2) =  k_1k_2\textsc{DetBCC}(W,k_1,k_2) \ . \]

For the other direction, consider an optimal solution $\mathcal{P}_1,\mathcal{P}_2$ of \textsc{DensestQuotientGraph}. We have as before that:
\[ e_{G_W}(\mathcal{P}_1,\mathcal{P}_2) =  \sum_{i_1,i_2} \max_{x \in \mathcal{X}}\mathbbm{1}_{W_1(x) \in \mathcal{P}_1^{i_1}}\mathbbm{1}_{W_2(x) \in \mathcal{P}_2^{i_2}} \ . \]

Now, let us define $e(i_1,i_2) \in \text{argmax}_{x \in \mathcal{X}}\mathbbm{1}_{W_1(x) \in \mathcal{P}_1^{i_1}}\mathbbm{1}_{W_2(x) \in \mathcal{P}_2^{i_2}}$ and $d_b(y_b)$ the index $i_b$ such that $y_b \in \mathcal{P}_b^{i_b}$, for $b \in \{1,2\}$. With those definitions, we get again that:
\begin{equation}
  \begin{aligned}
    \max_{x \in \mathcal{X}}\mathbbm{1}_{W_1(x) \in \mathcal{P}_1^{i_1}}\mathbbm{1}_{W_2(x) \in \mathcal{P}_2^{i_2}} &= \mathbbm{1}_{W_1(e(i_1,i_2)) \in \mathcal{P}_1^{i_1}}\mathbbm{1}_{W_2(e(i_1,i_2)) \in \mathcal{P}_2^{i_2}}\\
    &= \mathbbm{1}_{i_1=d_1(W_1(e(i_1,i_2)))}\mathbbm{1}_{i_2=d_2(W_2(e(i_1,i_2)))} \ ,
    \end{aligned}
\end{equation}
and thus we have:
\begin{equation}
  \begin{aligned}
e_{G_W}(\mathcal{P}_1,\mathcal{P}_2) &=  \sum_{i_1,i_2} \max_{x \in \mathcal{X}}\mathbbm{1}_{W_1(x) \in \mathcal{P}_1^{i_1}}\mathbbm{1}_{W_2(x) \in \mathcal{P}_2^{i_2}}\\
&= \sum_{i_1,i_2} \mathbbm{1}_{i_1=d_1(W_1(e(i_1,i_2)))}\mathbbm{1}_{i_2=d_2(W_2(e(i_1,i_2)))}\\
&= k_1k_2\mathrm{S}(W,k_1,k_2,e,d_1,d_2) \ ,
    \end{aligned}
\end{equation}
which proves that given an optimal solution of \textsc{DensestQuotientGraph}, one can efficiently construct a solution $e,d_1,d_2$ of \textsc{DetBCC} such that:
\[ k_1k_2\mathrm{S}(W,k_1,k_2,e,d_1,d_2) = \textsc{DensestQuotientGraph}(G_W,k_1,k_2) \ . \]

In particular, this implies that the optimal objective values satisfy:
\[ k_1k_2\textsc{DetBCC}(W,k_1,k_2) = \textsc{DensestQuotientGraph}(G_W,k_1,k_2) \ .\]

Therefore, the solutions of both problems constructed throughout the proof are in fact optimal.
\end{proof}

\begin{rk}
  Note that all bipartite graphs can be written as $G_W$ for some deterministic broadcast channel $W$, with $W$ unique up to a permutation of $\mathcal{X}$.
\end{rk}
\subsection{Approximation Algorithm for \textsc{DensestQuotientGraph}}
In this section, we will sort out how hard is \textsc{DensestQuotientGraph}, and thanks to Proposition~\ref{prop:BCCisDQG}, how hard is it to solve \textsc{DetBCC}.

\begin{theorem}
  \label{theo:DQGapprox}
  There exists a polynomial-time $(1-e^{-1})^2$-approximation algorithm for \textsc{DensestQuotientGraph}. Furthermore, it is \textrm{NP}-hard to solve exactly \textsc{DensestQuotientGraph}.
\end{theorem}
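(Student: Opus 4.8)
The plan is to prove both halves of Theorem~\ref{theo:DQGapprox} separately: first the positive algorithmic result, then the NP-hardness.

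\textbf{The approximation algorithm.} The idea is to reduce \textsc{DensestQuotientGraph} to a social-welfare-type maximization and apply a two-stage greedy/continuous-greedy argument, losing a factor $(1-e^{-1})$ at each stage. First I would observe that $e_G(\mathcal{P}_1,\mathcal{P}_2)$ can be written as a sum over the parts of $\mathcal{P}_1$: namely $e_G(\mathcal{P}_1,\mathcal{P}_2) = \sum_{p_1 \in \mathcal{P}_1} \deg_G^{\mathcal{P}_1,\mathcal{P}_2}(p_1) = \sum_{p_1 \in \mathcal{P}_1} |\{p_2 \in \mathcal{P}_2 : E \cap (p_1 \times p_2) \neq \emptyset\}|$. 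For a \emph{fixed} partition $\mathcal{P}_2$ of $V_2$ into $k_2$ parts, the function that maps a subset $S \subseteq V_1$ to the number of parts of $\mathcal{P}_2$ it "hits" is a monotone submodular coverage function; hence choosing $\mathcal{P}_1$ to maximize $\sum_{p_1} (\text{parts hit by } p_1)$ over partitions into $k_1$ parts is a separable submodular welfare problem, for which the greedy algorithm (assigning each vertex of $V_1$ to the part where its marginal gain is largest, or equivalently running the standard welfare-maximization greedy) gives a $(1-e^{-1})$-approximation to the best response against $\mathcal{P}_2$. The second stage: we do not know a good $\mathcal{P}_2$, but by symmetry the same coverage structure holds when we fix $\mathcal{P}_1$ and optimize $\mathcal{P}_2$. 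The clean way to combine the two is to set up the whole problem as maximizing $f(\mathcal{P}_1,\mathcal{P}_2)$ where, after fixing one side, the other side's subproblem is a $(1-e^{-1})$-approximable welfare problem, and to argue that alternating/greedily building both partitions compounds the losses multiplicatively, yielding $(1-e^{-1})^2$. Concretely I expect the cleanest route is: treat $V_1 \sqcup V_2$ together, process vertices greedily, and show the value of the greedy quotient graph is at least $(1-e^{-1})$ times the value of the "half-optimal" configuration (optimal $\mathcal{P}_1$ against greedy $\mathcal{P}_2$, or vice versa), which is itself at least $(1-e^{-1})$ times the true optimum. Each greedy step and each marginal-gain computation is polynomial in $|V_1|,|V_2|,k_1,k_2$, so the whole algorithm runs in polynomial time.

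\textbf{The NP-hardness.} Here I would reduce from a known NP-hard problem — the natural candidate is \textsc{Graph $k$-Colorability} or, more directly, the problem of deciding whether a bipartite graph's vertex set on one side can be partitioned into $k$ parts so that the quotient is complete, which encodes a covering/partition constraint. Actually the cleanest reduction is from \textsc{Set Cover} or \textsc{Minimum Partition into Bicliques}: given an instance, build $G$ so that a quotient graph with $e_G(\mathcal{P}_1,\mathcal{P}_2) = k_1 k_2$ (the maximum possible, i.e. the quotient is the complete bipartite graph $K_{k_1,k_2}$) exists if and only if the original instance is a YES-instance. Since deciding whether the optimum of \textsc{DensestQuotientGraph} equals $k_1k_2$ would then solve an NP-hard problem, solving \textsc{DensestQuotientGraph} exactly is NP-hard, and by Proposition~\ref{prop:BCCisDQG} so is \textsc{DetBCC}. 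Alternatively, one can reduce directly through the channel-coding picture using the equivalence with zero-error-type coding questions.

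\textbf{Main obstacle.} The delicate point is the analysis showing the two $(1-e^{-1})$ losses genuinely \emph{multiply} rather than, say, compose additively or degrade to $(1-e^{-1})/2$. This requires care in defining the intermediate "benchmark" configuration and in verifying that fixing the greedily-chosen partition on one side still leaves a genuine monotone submodular welfare instance on the other side (the coverage function must remain submodular after quotienting, which uses that "number of distinct parts hit" is submodular in the underlying vertex set). Handling edge cases — parts forced to be empty when $k_b$ exceeds $|V_b|$, or isolated vertices — is routine but must be done so that the partition-size constraints are met exactly. The hardness direction is comparatively standard once the right source problem is identified, the only subtlety being to ensure the reduction is parsimonious enough that \emph{exact} solvability (not just approximation) is what gets transferred.
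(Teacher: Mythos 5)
Your proposal assembles the right building blocks (coverage/submodular-welfare structure, one $(1-e^{-1})$ loss per side, hardness via forcing the quotient to be complete), but the step you yourself flag as delicate --- making the two losses multiply --- is genuinely missing, and the alternating/greedy combination you sketch does not obviously work. The problem is circular: an approximate best response for $\mathcal{P}_1$ against a greedily built $\mathcal{P}_2$ is only guaranteed relative to $\max_{\mathcal{P}_1} e(\mathcal{P}_1,\mathcal{P}_2)$ for \emph{that} $\mathcal{P}_2$, and you have no handle relating this to $\max_{\mathcal{P}_1,\mathcal{P}_2} e(\mathcal{P}_1,\mathcal{P}_2)$; alternating approximate best responses for two-sided partition problems does not in general compound ratios multiplicatively. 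The paper's resolution is to introduce a surrogate objective that depends on $\mathcal{P}_2$ alone: $F(\mathcal{P}_2) := \sum_{i_2=1}^{k_2}\min\bigl(k_1,\deg_{V_1,\mathcal{P}_2}(\mathcal{P}_2^{i_2})\bigr)$. This quantity (i) upper-bounds $\max_{\mathcal{P}_1} e(\mathcal{P}_1,\mathcal{P}_2)$ for every $\mathcal{P}_2$, hence $\max_{\mathcal{P}_2}F(\mathcal{P}_2) \geq \mathrm{OPT}$; (ii) is a submodular welfare objective with identical monotone submodular valuations $h(S)=\min(k_1,\deg(S))$, so Vondr\'ak's continuous greedy gives a $\mathcal{P}_2$ with $F(\mathcal{P}_2)\geq(1-e^{-1})\max_{\mathcal{P}_2}F$; and (iii) for that fixed $\mathcal{P}_2$, a \emph{uniformly random} $k_1$-partition $\mathcal{P}_1$ satisfies $\mathbb{E}[e(\mathcal{P}_1,\mathcal{P}_2)]\geq(1-e^{-1})F(\mathcal{P}_2)$, via the computation $\mathbb{E}[\deg_{\mathcal{P}_1,\mathcal{P}_2}(\mathcal{P}_2^{i_2})]=k_1\bigl(1-(1-1/k_1)^{\deg(\mathcal{P}_2^{i_2})}\bigr)$ and concavity of $x\mapsto 1-(1-1/k_1)^x$. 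Chaining (i)--(iii) is what makes the factors multiply; without a $\mathcal{P}_1$-free benchmark of this kind your argument has no anchor to $\mathrm{OPT}$.

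Two secondary points. First, the plain item-by-item greedy for submodular welfare only guarantees a ratio of $1/2$; the $(1-e^{-1})$ ratio requires the continuous-greedy algorithm of Vondr\'ak, which is what the paper invokes for the $\mathcal{P}_2$ stage (the $\mathcal{P}_1$ stage is handled by the random partition, not by greedy). Second, your hardness sketch is in the right spirit but uncommitted: the paper's reduction is exactly your "quotient is complete" idea instantiated as \textsc{SetSplitting}, taking $k_1=2$ and $k_2=|V_2|$ and reading the neighborhood of each $v_2\in V_2$ as a set to be split; the optimum equals $2|V_2|$ iff the instance is splittable. Pinning down one such reduction is needed to close that half.
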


\begin{corollary}
  \label{theo:DetBCCapprox}
  There exists a polynomial-time $(1-e^{-1})^2$-approximation algorithm for \textsc{DetBCC}. Furthermore, it is \textrm{NP}-hard to solve exactly \textsc{DetBCC}.
\end{corollary}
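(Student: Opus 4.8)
The final statement has two halves — a polynomial‑time $(1-e^{-1})^2$‑approximation and \textrm{NP}‑hardness of the exact problem — and Corollary~\ref{theo:DetBCCapprox} then follows word for word from Proposition~\ref{prop:BCCisDQG}, so I only discuss \textsc{DensestQuotientGraph}; I also assume $k_1\le|V_1|$ and $k_2\le|V_2|$, the problem being otherwise degenerate. Two elementary facts drive everything. First, coarsening a partition never increases the quotient edge count (merging parts $p,p'$ turns $\deg(p)+\deg(p')$ into $|N(p)\cup N(p')|\le\deg(p)+\deg(p')$); hence $e_G(\mathcal P_1,\mathcal P_2)\le e_G(\mathcal P_1,\{\text{singletons of }V_2\})$, and a partition with too few nonempty parts may be split up to the prescribed size for free. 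Second, for a \emph{fixed} left partition $\mathcal P_1$ the set function $q\mapsto|\{p_1\in\mathcal P_1:p_1\cap N(q)\ne\emptyset\}|$ on $q\subseteq V_2$ is monotone and a \emph{coverage} function (vertex $v_2$ covers the $\mathcal P_1$‑parts adjacent to it), and symmetrically with $V_1,V_2$ exchanged; so, for fixed $\mathcal P_1$, finding the best $\mathcal P_2$ into $k_2$ parts is exactly welfare maximization with $k_2$ identical coverage valuations over $V_2$, which has a polynomial‑time $(1-e^{-1})$‑approximation via the assignment LP, independent randomized rounding, and the splitting step above. Call this subroutine \textsc{CoverPart}.

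\textbf{NP-hardness.} I would reduce from \textsc{Max-Cut}. From a graph $H=(V_H,E_H)$, form the bipartite incidence graph $G=(V_1\sqcup V_2,E)$ with $V_1=V_H$, $V_2=E_H$, and $v\sim f$ in $G$ whenever $v$ is an endpoint of $f$ (by the remark after Proposition~\ref{prop:BCCisDQG} this $G$ is $G_W$ for some deterministic $W$). Take $k_1=2$ and $k_2=|E_H|$, which forces $\mathcal P_2$ to be the singleton partition, so for a bipartition $\mathcal P_1=\{A,B\}$ of $V_H$ one computes $e_G(\mathcal P_1,\mathcal P_2)=|\{f:f\cap A\ne\emptyset\}|+|\{f:f\cap B\ne\emptyset\}|=|E_H|+\mathrm{cut}_H(A,B)$, since an edge of $H$ lying inside $A$ or inside $B$ is counted once and a crossing edge twice. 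Hence the optimum of \textsc{DensestQuotientGraph} on $(G,2,|E_H|)$ equals $|E_H|+\mathrm{MaxCut}(H)$, so an exact algorithm for it would solve \textsc{Max-Cut}.

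\textbf{Approximation algorithm.} The plan is: solve a linear relaxation, round one side, then apply \textsc{CoverPart} to the other. By Proposition~\ref{prop:BCCisDQG} and the fact that non‑signaling assistance cannot decrease the success probability, the polynomial‑size linear program computing $\mathrm S^{\mathrm{NS}}(W,k_1,k_2)$ for the deterministic $W$ with $G_W=G$ is, after rescaling by $k_1k_2$, a relaxation of \textsc{DensestQuotientGraph}; solving it (equivalently, its reformulation in terms of fractional assignments $x_{v_1,s}$ of the left vertices to the $k_1$ left parts) yields such $x_{v_1,s}$ with $\sum_sx_{v_1,s}=1$ and LP value $\ge\mathrm{OPT}$. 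Assign each $v_1$ independently to part $s$ with probability $x_{v_1,s}$, giving $\mathcal P_1$; run \textsc{CoverPart} on $G^{\mathcal P_1,V_2}$ to get $\mathcal P_2$; output $(\mathcal P_1,\mathcal P_2)$. Since \textsc{CoverPart} returns $e_G(\mathcal P_1,\mathcal P_2)\ge(1-e^{-1})\max_{\mathcal P_2'}e_G(\mathcal P_1,\mathcal P_2')\ge(1-e^{-1})\,e_G(\mathcal P_1,\mathcal P_2^{*})$ for an optimal right partition $\mathcal P_2^{*}=\{V_2^{*1},\dots,V_2^{*k_2}\}$, and since
\[
\mathbb{E}\big[e_G(\mathcal P_1,\mathcal P_2^{*})\big]=\sum_{s,t}\Big(1-\prod_{v_1\in N(V_2^{*t})}(1-x_{v_1,s})\Big)\ \ge\ (1-e^{-1})\sum_{s,t}\min\!\Big(1,\ \sum_{v_1\in N(V_2^{*t})}x_{v_1,s}\Big)
\]
by $1-\prod_i(1-p_i)\ge(1-e^{-1})\min(1,\sum_ip_i)$, the whole bound reduces to showing that the relaxation returns left marginals with $\sum_{s,t}\min(1,\sum_{v_1\in N(V_2^{*t})}x_{v_1,s})\ge\mathrm{OPT}$, i.e. that the fractional left solution fractionally covers the cells of the optimal right partition. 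A standard conditional‑expectation derandomization of the left‑vertex rounding then makes the algorithm deterministic and polynomial‑time.

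\textbf{Main obstacle.} The crux is exactly this last inequality: the two rounding steps do not decouple for free, because a cell $(s,t)$ of the quotient becomes an edge only when a \emph{single} edge of $G$ has its left endpoint in part $s$ and its right endpoint in part $t$ — unlike an ordinary one‑level coverage problem, where naively chaining two $(1-e^{-1})$‑roundings either costs a spurious third $(1-e^{-1})$ (e.g.\ if one commits the left side by maximizing the truncated‑coverage surrogate $\sum_{p_1\in\mathcal P_1}\min(k_2,|N(p_1)|)$, which sandwiches $\max_{\mathcal P_2}e_G(\mathcal P_1,\mathcal P_2)$ within $(1-e^{-1})$ but is itself only $(1-e^{-1})$‑approximable) or breaks down on positive correlations between the edge events. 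Getting the clean ``$(1-e^{-1})$ per side'' therefore requires choosing the relaxation — and possibly the rounding order, or replacing independent rounding by a pipage‑type scheme — so that its fractional optimum genuinely splits into a controllable left‑adjacency coverage layer and a right‑cell coverage layer; this is the mechanism reused to show that non‑signaling assistance does not enlarge the capacity region of deterministic broadcast channels (Theorem~\ref{theo:NSdet}).
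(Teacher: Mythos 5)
Your hardness half is fine: the reduction from \textsc{Max-Cut} via the incidence bipartite graph with $k_1=2$, $k_2=|E_H|$ and the identity $e_G(\{A,B\},\mathcal P_2)=|E_H|+\mathrm{cut}_H(A,B)$ is correct, and is a legitimate alternative to the paper's reduction from \textsc{SetSplitting} (also with $k_1=2$, $k_2=|V_2|$). The corollary does indeed follow verbatim from Proposition~\ref{prop:BCCisDQG} once Theorem~\ref{theo:DQGapprox} is in hand, so the only question is the approximation half of that theorem.

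There the proposal has a genuine gap, and it is exactly the one you flag yourself: you never prove that the relaxation yields left marginals with $\sum_{s,t}\min\bigl(1,\sum_{v_1\in N(V_2^{*t})}x_{v_1,s}\bigr)\ge\mathrm{OPT}$. Worse, the non-signaling LP you invoke has variables indexed by channel inputs and outputs ($p_x,r_{x,y_1,y_2},r^1_{x,y_1},r^2_{x,y_2}$), not fractional assignments of left vertices to the $k_1$ parts, so it is not even clear how to extract the $x_{v_1,s}$ you round. The paper avoids the LP and the decoupling issue entirely by reversing the roles of the two sides and using the combinatorial surrogate $\Phi(\mathcal P_2):=\sum_{i_2=1}^{k_2}\min\bigl(k_1,\deg_{V_1,\mathcal P_2}(\mathcal P_2^{i_2})\bigr)$, which is an \emph{exact upper bound} on $\max_{\mathcal P_1}e(\mathcal P_1,\mathcal P_2)$ for every $\mathcal P_2$ (each right part meets at most $k_1$ left parts and at most $\deg$ of them), hence $\max_{\mathcal P_2}\Phi(\mathcal P_2)\ge\mathrm{OPT}$. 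Since $S_2\mapsto\min(k_1,\deg_{V_1,\{S_2,V_2-S_2\}}(S_2))$ is monotone submodular, $\max_{\mathcal P_2}\Phi$ is a submodular welfare instance with identical valuations, $(1-e^{-1})$-approximable by Vondr\'ak's algorithm; a uniformly random $\mathcal P_1$ then achieves in expectation $(1-e^{-1})\Phi(\mathcal P_2)$ for the returned $\mathcal P_2$, and the two factors chain with no third loss precisely because the rounding is compared to $\Phi$ itself, not to $\max_{\mathcal P_1}e(\mathcal P_1,\mathcal P_2)$. Note that your own scheme also closes if you drop the LP and set $x_{v_1,s}=1/k_1$: then $\sum_{s,t}\min\bigl(1,\deg_{V_1,\mathcal P_2^*}(V_2^{*t})/k_1\bigr)=\sum_t\min\bigl(k_1,\deg_{V_1,\mathcal P_2^*}(V_2^{*t})\bigr)\ge\mathrm{OPT}$, which is exactly the paper's upper bound — so the missing ingredient is this counting bound, not a better relaxation.
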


The approximation algorithm is a two-step process. First, we consider the problem of maximizing $\sum_{i_2=1}^{k_2}\min\left(k_1,\deg_{V_1,\mathcal{P}_2}(\mathcal{P}_2^{i_2})\right)$ over all partitions $\mathcal{P}_2$ of $V_2$ in $k_2$ parts. We will show that this is a special case of the submodular welfare problem, which can be approximated within a factor $1-e^{-1}$ in polynomial time~\cite{Vondrak08}. We then choose the partition $\mathcal{P}_1$ on $V_1$ in $k_1$ parts uniformly at random. This partition pair will give an objective value $e(\mathcal{P}_1,\mathcal{P}_2)$ within a $(1-e^{-1})^2$ factor from the optimal solution in expectation.

\begin{proof}[Proof of Theorem~\ref{theo:DQGapprox}]
  Consider first the hardness result. Let us show that the decision version of \textsc{DensestQuotientGraph} is \textrm{NP}-complete. It is in \textrm{NP}, the certificate being the two partitions and the selection of edges between those partitions. It is \textrm{NP}-hard as one of its particular cases is the \textsc{SetSplitting} problem (see for instance~\cite{GJ79}), in the case where $k_1=2$ and $k_2=|V_2|$, by interpreting the neighbors of $v_2 \in V_2$ as a set covering elements of $V_1$.

We will show nonetheless that this problem can be approximated within a factor $(1-e^{-1})^2$ in polynomial time. First we consider the case where $k_2=|V_2|$. We can then always assume that the right partition is $\mathcal{P}_2 := \{ \{v_2\} : v_2 \in V_2 \}$, which leads necessarily to a greater or equal number of edges in the quotient graph that with any other right partition. So, in that setting, we only need to find a partition of $V_1$ in $k_1$ parts maximizing the number of edges between vertices in the right part and the quotient of the left vertices.

First, the maximum value we can get is upper bounded by $\sum_{v_2 \in V_2}\min\left(k_1,\deg(v_2)\right)$. Indeed, each vertex of $v_2$ can be connected at most to the $k_1$ parts of $V_1$, so its contribution is bounded by $k_1$, and there needs to be an edge to each part it is connected, so its contribution is also bounded by $\deg(v_2)$. Let us show that if we take a partition $\mathcal{P}_1$ of $V_1$ uniformly at random, we get:
\begin{equation}
  \begin{aligned}
  \mathbb{E}_{\mathcal{P}_1}[e(\mathcal{P}_1,V_2)] &\geq \left(1-\left(1-\frac{1}{k_1}\right)^{k_1}\right)\sum_{v_2 \in V_2}\min\left(k_1,\deg(v_2)\right)\\
  &\geq (1-e^{-1})\max_{\mathcal{P}_1}e(\mathcal{P}_1,V_2) \ .
  \end{aligned}
\end{equation}  

We have $e(\mathcal{P}_1,V_2)=\sum_{v_2 \in V_2}\deg_{\mathcal{P}_1,V_2}(v_2)$, so by linearity of expectation $\mathbb{E}_{\mathcal{P}_1}[e(\mathcal{P}_1,V_2)] = \sum_{v_2 \in V_2}\mathbb{E}_{\mathcal{P}_1}[\deg_{\mathcal{P}_1,V_2}(v_2)]$. However $\deg_{\mathcal{P}_1,V_2}(v_2)=|\{i_1 \in [k_1]: N(v_2) \cap \mathcal{P}_1^{i_1} \not= \emptyset  \}|$. Recall also that for any $v_1$, $\mathbb{P}\left(v_1 \in \mathcal{P}_1^{i_1}\right) = \frac{1}{k_1}$ since the partition is taken uniformly at random. Thus, we get:
\begin{equation}
  \begin{aligned}
    \mathbb{E}_{\mathcal{P}_1}[\deg_{\mathcal{P}_1,V_2}(v_2)] &= \mathbb{E}_{\mathcal{P}_1}\left[|\{i_1 \in [k_1]: N(v_2) \cap \mathcal{P}_1^{i_1} \not= \emptyset  \}|\right]
    = \mathbb{E}_{\mathcal{P}_1}\left[\sum_{i_1=1}^{k_1} \mathbbm{1}_{N(v_2) \cap \mathcal{P}_1^{i_1} \not= \emptyset}\right]\\
    &= \sum_{i_1=1}^{k_1} \mathbb{E}_{\mathcal{P}_1}\left[\mathbbm{1}_{N(v_2) \cap \mathcal{P}_1^{i_1} \not= \emptyset}\right]
    = \sum_{i_1=1}^{k_1}\mathbb{P}\left(N(v_2) \cap \mathcal{P}_1^{i_1} \not= \emptyset\right)\\
    &= \sum_{i_1=1}^{k_1}\left(1 - \mathbb{P}\left(N(v_2) \cap \mathcal{P}_1^{i_1} = \emptyset\right)\right)\\
    &= \sum_{i_1=1}^{k_1}\left(1 - \prod_{v_1 \in N(v_2)}\mathbb{P}\left(v_1 \not\in \mathcal{P}_1^{i_1}\right)\right)\\
    &= \sum_{i_1=1}^{k_1}\left(1 - \prod_{v_1 \in N(v_2)}\mathbb{P}\left(v_1 \not\in \mathcal{P}_1^{i_1}\right)\right) = k_1\left(1-\left(1-\frac{1}{k_1}\right)^{\deg(v_2)}\right) \ ,
  \end{aligned}
\end{equation}
since $\mathbb{P}\left(v_1 \not\in \mathcal{P}_1^{i_1}\right)=1-\frac{1}{k_1}$ and $|N(v_2)|=\deg(v_2)$. So, in all:
\[ \mathbb{E}_{\mathcal{P}_1}[e(\mathcal{P}_1,V_2)] = \sum_{v_2 \in V_2}\mathbb{E}_{\mathcal{P}_1}[\deg_{\mathcal{P}_1,V_2}(v_2)] = k_1\sum_{v_2 \in V_2}\left(1-\left(1-\frac{1}{k_1}\right)^{\deg(v_2)}\right) \ . \]

However, the function $f : x \mapsto 1-\left(1-\frac{1}{k_1}\right)^x$ is nondecreasing concave with $f(0)=0$, so $\frac{f(x)}{x} \geq \frac{f(y)}{y}$ for $x \leq y$. In particular, we have that:
\[ f(\min(k_1,\deg(v_2))) \geq \frac{\min(k_1,\deg(v_2))}{k_1}f(k_1) \ , \]
and thus:
\begin{equation}
  \begin{aligned}
    \mathbb{E}_{\mathcal{P}_1}[e(\mathcal{P}_1,V_2)] &\geq k_1\sum_{v_2 \in V_2}\left(1-\left(1-\frac{1}{k_1}\right)^{\min(k_1,\deg(v_2))}\right)\\
    &\geq k_1\frac{\sum_{v_2 \in V_2}\min(k_1,\deg(v_2))}{k_1}\left(1-\left(1-\frac{1}{k_1}\right)^{k_1}\right)\\
    &\geq \left(1-\left(1-\frac{1}{k_1}\right)^{k_1}\right)\sum_{v_2 \in V_2}\min\left(k_1,\deg(v_2)\right)\\
    &\geq (1-e^{-1})\max_{\mathcal{P}_1}e(\mathcal{P}_1,V_2) \ .
  \end{aligned}
\end{equation}

Let us now consider the general case with $k_2$ unconstrained. We apply the previous discussion on the graph $G^{V_1,\mathcal{P}_2}$ for some fixed partition $\mathcal{P}_2$ of $V_2$. Since $e_{G^{V_1,\mathcal{P}_2}}(\mathcal{P}_1,\mathcal{P}_2) = e(\mathcal{P}_1,\mathcal{P}_2)$, we have the upper bound:

\[ \max_{\mathcal{P}_1} e(\mathcal{P}_1,\mathcal{P}_2) \leq \sum_{i_2=1}^{k_2}\min\left(k_1,\deg_{V_1,\mathcal{P}_2}(\mathcal{P}_2^{i_2})\right) \ , \]
and the previous algorithm gives us a partition $\mathcal{P}_1$ of $V_1$ such that:
\[ e(\mathcal{P}_1,\mathcal{P}_2) \geq (1-e^{-1})\sum_{i_2=1}^{k_2}\min\left(k_1,\deg_{V_1,\mathcal{P}_2}(\mathcal{P}_2^{i_2})\right)\ . \]

Therefore, let us focus on the following optimization problem:
\[\max_{\mathcal{P}_2}\sum_{i_2=1}^{k_2}\min\left(k_1,\deg_{V_1,\mathcal{P}_2}(\mathcal{P}_2^{i_2})\right) \ , \]

We will give a $(1-e^{-1})$-approximation algorithm running in polynomial time for this problem. In all, this will allow us to get in polynomial time a partition pair $(\mathcal{P}_1,\mathcal{P}_2)$ such that:
\begin{equation}
  \begin{aligned}
    e(\mathcal{P}_1,\mathcal{P}_2) &\geq (1-e^{-1})\sum_{i_2=1}^{k_2}\min\left(k_1,\deg_{V_1,\mathcal{P}_2}(\mathcal{P}_2^{i_2})\right)\\
    &\geq (1-e^{-1})^2\max_{\mathcal{P}_2}\sum_{i_2=1}^{k_2}\min\left(k_1,\deg_{V_1,\mathcal{P}_2}(\mathcal{P}_2^{i_2})\right)\\
    &\geq (1-e^{-1})^2\max_{\mathcal{P}_1,\mathcal{P}_2} e(\mathcal{P}_1,\mathcal{P}_2) \ .
  \end{aligned}
\end{equation}

The problem $\max_{\mathcal{P}_2}\sum_{i_2=1}^{k_2}\min\left(k_1,\deg_{V_1,\mathcal{P}_2}(\mathcal{P}_2^{i_2})\right)$ is a particular instance of the submodular welfare problem from~\cite{Vondrak08}. Note that $\deg_{V_1,\mathcal{P}_2}(\mathcal{P}_2^{i_2}) = \deg_{V_1,\{\mathcal{P}_2^{i_2}, V_2-\mathcal{P}_2^{i_2}\}}(\mathcal{P}_2^{i_2})$, as the degree of $\mathcal{P}_2^{i_2}$ does not depend on the rest of the partition $\mathcal{P}_2$. Then, $h(S_2) := \min\left(k_1,\deg_{V_1,\{S_2,V_2-S_2\}}(S_2)\right)$, for $S_2 \subseteq V_2$, is a nondecreasing submodular function, as $S_2 \mapsto \deg_{V_1,\{S_2,V_2-S_2\}}(S_2)$ is a nondecreasing submodular function on $V_2$ and $\min(k_1,\cdot)$ is nondecreasing concave. Thus, we want to maximize $\sum_{i_2=1}^{k_2}h(S_{i_2})$ where $(S_{i_2})_{i_2 \in [k_2]}$ is a partition of items in $V_2$ among $k_2$ bidders. It is a particular case of the submodular welfare problem where each nondecreasing submodular utility weight is the same for all bidders and equal to $h$. Thus, thanks to~\cite{Vondrak08}, there exists a polynomial-time $(1-e^{-1})$-approximation of $\max_{\mathcal{P}_2}\sum_{i_2=1}^{k_2}\min\left(k_1,\deg_{V_1,\mathcal{P}_2}(\mathcal{P}_2^{i_2})\right)$.
\end{proof}

\subsection{Non-Signaling Assisted Capacity Region for Deterministic Channels}
Thanks to Theorem~\ref{theo:DQGapprox} and Proposition~\ref{prop:BCCisDQG}, there exists a constant-factor approximation algorithm for the broadcast channel coding problem running in polynomial time. We aim to show here that the non-signaling assisted value is linked by a constant factor to the unassisted one. Indeed, the hope is that the non-signaling assisted program is linked to the linear relaxation of the unassisted problem, thus is likely a good approximation since the broadcast channel coding problem can be approximated in polynomial time.

This turns out to be true, and will be proved through the following theorem:

\begin{theorem}
  \label{theo:NSdet}
  If $W$ is a deterministic broadcast channel, then for all $\ell_1 \leq k_1$ and $\ell_2 \leq k_2$:
  \[ \left(1 - \frac{k_1^{k_1}e^{-k_1}}{k_1!}\right)\left(1-\left(1-\frac{1}{\ell_1}\right)^{k_1}\right)\left(1-\left(1-\frac{1}{\ell_2}\right)^{k_2}\right)\mathrm{S}^{\mathrm{NS}}(W,k_1,k_2) \leq \mathrm{S}(W,\ell_1,\ell_2)\ . \]
\end{theorem}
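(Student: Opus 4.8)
The plan is to reduce to the bipartite graph reformulation, take an optimal non-signaling solution, and round it to honest partitions in three stages, each stage accounting for one of the three factors on the left-hand side.

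By Proposition~\ref{prop:BCCisDQG} and the identification $\mathrm{S}(W,\ell_1,\ell_2)=\tfrac1{\ell_1\ell_2}\textsc{DensestQuotientGraph}(G_W,\ell_1,\ell_2)$, the inequality amounts to exhibiting a partition $\mathcal{P}_1$ of $\mathcal{Y}_1$ into $\ell_1$ parts and a partition $\mathcal{P}_2$ of $\mathcal{Y}_2$ into $\ell_2$ parts with
\[
e_{G_W}(\mathcal{P}_1,\mathcal{P}_2)\ \ge\ \ell_1\ell_2\Bigl(1-\tfrac{k_1^{k_1}e^{-k_1}}{k_1!}\Bigr)\Bigl(1-(1-\tfrac1{\ell_1})^{k_1}\Bigr)\Bigl(1-(1-\tfrac1{\ell_2})^{k_2}\Bigr)\mathrm{S}^{\mathrm{NS}}(W,k_1,k_2).
\]
I would fix an optimal solution $(p_x,r_{x,y_1,y_2},r^1_{x,y_1},r^2_{x,y_2})$ of the reformulated linear program for $\mathrm{S}^{\mathrm{NS}}(W,k_1,k_2)$. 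Since $W$ is deterministic, the objective equals $\tfrac1{k_1k_2}\sum_{(y_1,y_2)\in E(G_W)}R_{y_1,y_2}$ with $R_{y_1,y_2}:=\sum_{x:\,(W_1(x),W_2(x))=(y_1,y_2)}r_{x,y_1,y_2}\in[0,1]$; grouping the $p,r^1,r^2$ variables and the constraints $0\le r\le r^1,r^2\le p$, $p-r^1-r^2+r\ge 0$, $\sum_xp_x=k_1k_2$, $\sum_xr^1_{x,y_1}=k_2$, $\sum_xr^2_{x,y_2}=k_1$ exhibits $R$ as a fractional selection of edges of $G_W$ that comes from collapsing a $k_1\times k_2$ grid of message pairs onto $\mathcal{Y}_1\times\mathcal{Y}_2$.

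Then I would round this fractional object in three stages consuming disjoint randomness, so that the guarantees multiply in expectation and any outcome meeting the expectation furnishes $\mathcal{P}_1,\mathcal{P}_2$. The first factor is the loss in passing from the fractional selection to an integral one: reading off, for each $y_2\in\mathcal{Y}_2$, a fractional choice of $k_1$ codewords from the $r^2$-marginals and sampling them independently, the expected number of distinct realised left-endpoints at that $y_2$ is at least $\bigl(1-\tfrac{k_1^{k_1}e^{-k_1}}{k_1!}\bigr)$ times its fractional value — this is precisely the $m$-list-decoding rounding of~\cite{BF18,BFGG20} with list size $m=k_1$, whose constant comes from the identity $\mathbb{E}\bigl[(m-\Poi(m))^+\bigr]=m\,\mathbb{P}[\Poi(m)=m]$ together with the fact that $\lambda\mapsto\mathbb{E}[\min(\Poi(\lambda),m)]/\min(\lambda,m)$ is minimised at $\lambda=m$. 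The other two factors come from the uniform-random-partition step in the proof of Theorem~\ref{theo:DQGapprox}, applied on each side: assigning the $k_1$ virtual left-classes to uniformly random parts among $\ell_1$ keeps a given slot's edge unless all its at most $k_1$ left-classes land elsewhere, which yields $1-(1-\tfrac1{\ell_1})^{k_1}$ after the same concavity inequality $f(\min(\ell_1,d))\ge\tfrac{\min(\ell_1,d)}{\ell_1}f(\ell_1)$ used there; symmetrically on $\mathcal{Y}_2$ this gives $1-(1-\tfrac1{\ell_2})^{k_2}$.

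The delicate point — and where the machinery of Section~\ref{section:prelim} enters — is that each stage lower-bounds a quantity of the form $\mathbb{E}[\phi(N)]$ with $\phi$ nondecreasing concave and $N$ a sum of $0/1$ variables, so having $\mathbb{E}[N]$ large is not enough: one needs $N$ concentrated so that $\mathbb{E}[(N-\lambda)^+]$ is small. Performing the samplings and the partitions as balanced assignments makes the relevant indicators negatively associated, via the permutation-distribution Proposition~\ref{prop:permNA} and the closure Propositions~\ref{prop:NA_P5}, \ref{prop:NA_P6} and~\ref{prop:NA_P7}, so the Chernoff–Hoeffding bound for negatively associated variables (Proposition~\ref{prop:chernoff2}) applies and controls $\mathbb{E}[(N-\lambda)^+]$, converting ``$\mathbb{E}[N]$ large'' into ``$\mathbb{E}[\phi(N)]$ large''. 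The hard part will be (i) making the dictionary between the linear-program variables $(p,r,r^1,r^2)$ and the fractional $k_1\times k_2$ grid fully rigorous — in particular how the inclusion–exclusion constraint $p-r^1-r^2+r\ge 0$ feeds the Poissonization — and (ii) arranging the balanced randomised rounding so that all three negative-association arguments go through at once; the remainder is the bookkeeping of multiplying $\bigl(1-\tfrac{k_1^{k_1}e^{-k_1}}{k_1!}\bigr)$, $\bigl(1-(1-\tfrac1{\ell_1})^{k_1}\bigr)$ and $\bigl(1-(1-\tfrac1{\ell_2})^{k_2}\bigr)$.
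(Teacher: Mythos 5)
Your three-factor architecture, the reduction via Proposition~\ref{prop:BCCisDQG}, and your identification of the Poisson identity behind the constant $1-\frac{k_1^{k_1}e^{-k_1}}{k_1!}$ all match the paper in spirit, but two of the mechanisms you propose do not work as stated, and the step that makes the argument close is the one you leave open. The paper never rounds the non-signaling solution. Its key move is to bound $\mathrm{S}^{\mathrm{NS}}(W,k_1,k_2)$ from above by the purely combinatorial quantity $\frac{1}{k_1k_2}\min\left(k_1k_2,\sum_{y_1}\min(k_2,\deg(y_1))\right)$ using only the LP constraints ($r_{x,y_1,y_2}\leq p_x$ with $\sum_x p_x=k_1k_2$, and $r_{x,y_1,y_2}\leq r^1_{x,y_1}$ with $\sum_x r^1_{x,y_1}=k_2$); your ``dictionary between the LP variables and a fractional $k_1\times k_2$ grid'' and the inclusion--exclusion constraint $p-r^1-r^2+r\geq 0$ play no role. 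Once the NS value is replaced by this degree functional, the rounding is sequential rather than ``three stages with disjoint randomness whose guarantees multiply in expectation'': one first draws $\mathcal{P}_2$ uniformly among partitions of $\mathcal{Y}_2$ into $\ell_2$ parts and shows that $\mathbb{E}\left[\sum_{i_2}\min(k_1,\deg_{\mathcal{Y}_1,\mathcal{P}_2}(\mathcal{P}_2^{i_2}))\right]$ captures a $\left(1-\frac{k_1^{k_1}e^{-k_1}}{k_1!}\right)\left(1-(1-\frac{1}{\ell_2})^{k_2}\right)$ fraction of that bound, fixes a $\mathcal{P}_2$ attaining its expectation, and only then draws $\mathcal{P}_1$ among partitions into $\ell_1$ parts, losing the remaining factor $1-(1-\frac{1}{\ell_1})^{k_1}$ exactly as in the proof of Theorem~\ref{theo:DQGapprox}.

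The more serious gap is your mechanism for the factor $1-\frac{k_1^{k_1}e^{-k_1}}{k_1!}$. In the paper it comes from the convex-order comparison $\mathbb{E}[\varphi(\sum_i\Ber(p_i))]\geq\mathbb{E}[\varphi(\Poi(\sum_i p_i))]$ (Proposition~\ref{prop:ConvexOrder}, Lemma 2.2 of~\cite{BFF21}) applied to $\varphi=\min(k_1,\cdot)$ and to $\deg_{\mathcal{Y}_1,\mathcal{P}_2}(\mathcal{P}_2^{i_2})$, viewed as a sum of Bernoullis with parameters $1-(1-\frac{1}{\ell_2})^{\deg(y_1)}$, followed by the exact evaluation of the Poisson concavity ratio $\inf_\lambda\mathbb{E}[\varphi(\Poi(\lambda))]/\varphi(\lambda)$. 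No concentration inequality enters anywhere in this proof. Your proposed route --- enforce negative association and control $\mathbb{E}[(N-\lambda)^+]$ with the Chernoff--Hoeffding bound of Proposition~\ref{prop:chernoff2} --- cannot recover the stated constant: that bound is not tight near the mean, so integrating its tail overshoots $k_1\,\mathbb{P}[\Poi(k_1)=k_1]$ and yields a strictly worse factor. (Negative association and Proposition~\ref{prop:chernoff2} are used in the paper only for the value-query hardness argument of Section~\ref{section:HardnessBC}.) Likewise, your first stage --- sampling $k_1$ codewords per $y_2$ from the $r^2$-marginals in the style of list-decoding rounding --- is unnecessary given the LP upper bound above, and it is precisely the component you defer as ``the hard part''; as it stands, the proposal does not contain a complete argument for any of the three factors.
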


\begin{corollary}
  \label{cor:NSdet}
  For any deterministic broadcast channel $W$, $\mathcal{C}^{\mathrm{NS}}(W)=\mathcal{C}(W)$.
\end{corollary}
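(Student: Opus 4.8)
I first note that Corollary~\ref{cor:NSdet} follows from Theorem~\ref{theo:NSdet} by tensorization, and then describe how I would prove the theorem. The inclusion $\mathcal{C}(W)\subseteq\mathcal{C}^{\mathrm{NS}}(W)$ is immediate since a code is a (product, hence) non-signaling box. For the converse, fix $(R_1,R_2)\in\mathcal{C}^{\mathrm{NS}}(W)$ with $R_1,R_2>0$ (a rate pair with a zero coordinate reduces to point-to-point coding over the corresponding marginal channel, where non-signaling assistance does not change the capacity~\cite{Matthews12}), and fix $\delta\in(0,\min(R_1,R_2))$; it suffices to show $(R_1-\delta,R_2-\delta)\in\mathcal{C}(W)$ and let $\delta\to 0$, using that $\mathcal{C}(W)$ is closed. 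Applying Theorem~\ref{theo:NSdet} to the deterministic channel $W^{\otimes n}$ with $k_b:=\ceil{2^{R_b n}}$ and $\ell_b:=\ceil{2^{(R_b-\delta)n}}\leq k_b$ gives
\[
\Bigl(1-\tfrac{k_1^{k_1}e^{-k_1}}{k_1!}\Bigr)\Bigl(1-\bigl(1-\tfrac1{\ell_1}\bigr)^{k_1}\Bigr)\Bigl(1-\bigl(1-\tfrac1{\ell_2}\bigr)^{k_2}\Bigr)\mathrm{S}^{\mathrm{NS}}(W^{\otimes n},k_1,k_2)\leq\mathrm{S}(W^{\otimes n},\ell_1,\ell_2).
\]
As $n\to\infty$ the first prefactor tends to $1$ by Stirling (since $k_1\to\infty$), and the other two tend to $1$ because $k_b/\ell_b\geq 2^{\delta n}\to\infty$ forces $(1-1/\ell_b)^{k_b}\leq e^{-k_b/\ell_b}\to 0$; since $\mathrm{S}^{\mathrm{NS}}(W^{\otimes n},k_1,k_2)\to 1$ by $\mathrm{S}^{\mathrm{NS}}$-achievability, $\mathrm{S}(W^{\otimes n},\ell_1,\ell_2)\to 1$, i.e. $(R_1-\delta,R_2-\delta)\in\mathcal{C}(W)$. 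So the whole content is Theorem~\ref{theo:NSdet}.

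For the theorem, the plan is to run the two-step randomized construction from the proof of Theorem~\ref{theo:DQGapprox} — choose good clusters on $\mathcal{Y}_2$, then a uniformly random partition of $\mathcal{Y}_1$ — but seeded from an optimal fractional solution of the reduced non-signaling linear program rather than from the submodular-welfare routine, with one extra rounding stage accounting for the gap between a fractional solution and an integral code; this extra stage is what I expect to produce the factor $1-\tfrac{k_1^{k_1}e^{-k_1}}{k_1!}$, which is the $\ell$-list-decoding constant of~\cite{BFGG20} with $\ell=k_1$. By Proposition~\ref{prop:BCCisDQG}, $\mathrm{S}(W,\ell_1,\ell_2)=\tfrac1{\ell_1\ell_2}\textsc{DensestQuotientGraph}(G_W,\ell_1,\ell_2)$, so it suffices to exhibit partitions $\mathcal{P}_1$ of $\mathcal{Y}_1$ into $\ell_1$ parts and $\mathcal{P}_2$ of $\mathcal{Y}_2$ into $\ell_2$ parts with $e_{G_W}(\mathcal{P}_1,\mathcal{P}_2)\geq C\,\ell_1\ell_2\,\mathrm{S}^{\mathrm{NS}}(W,k_1,k_2)$, where $C$ is the product of the three prefactors and, for deterministic $W$, $\mathrm{S}^{\mathrm{NS}}(W,k_1,k_2)=\tfrac1{k_1k_2}\sum_x r_{x,W_1(x),W_2(x)}$ at the LP optimum. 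I would read the LP variables $(p,r^1,r^2,r)$ — constrained by $r\leq r^1,r^2\leq p$ and $p-r^1-r^2+r\geq 0$ — as fractional profiles of clusterings of $\mathcal{Y}_1$ and $\mathcal{Y}_2$ into $k_1$ and $k_2$ blocks, together with a fractional choice, per block-pair, of an edge of $G_W$ realizing it.

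The rounding then proceeds in stages, fixing at each randomized stage an outcome at least as good as its expectation. Stage (i): sample inputs according to the weights $r$ to make the edge-choice layer integral; I expect the expected number of surviving block-pair edges to be at least $\bigl(1-\tfrac{k_1^{k_1}e^{-k_1}}{k_1!}\bigr)k_1k_2\,\mathrm{S}^{\mathrm{NS}}(W,k_1,k_2)$, via a balls-into-bins/Poisson occupancy computation identical to the one behind the bound of~\cite{BFGG20}. Stage (ii): merge the $k_1$ left blocks into $\ell_1$ parts and the $k_2$ right blocks into $\ell_2$ parts by two independent uniformly random partitions, exactly as in Theorem~\ref{theo:DQGapprox}; since the quotient degrees after stage (i) are bounded by $k_1$ and $k_2$, the concavity estimate used there yields multiplicative factors $1-(1-1/\ell_1)^{k_1}$ and $1-(1-1/\ell_2)^{k_2}$. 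As all losses are multiplicative and stage (ii) holds pointwise over the outcome of stage (i), the expectation of $e_{G_W}(\mathcal{P}_1,\mathcal{P}_2)$ over all the randomness is at least $C\,\ell_1\ell_2\,\mathrm{S}^{\mathrm{NS}}(W,k_1,k_2)$, so a good outcome exists. The negative-association material of Section~\ref{section:prelim} enters in stage (i): the sampled inputs, or a random tie-breaking permutation, are negatively associated (Propositions~\ref{prop:NA_P5}, \ref{prop:permNA}, \ref{prop:NA_P6}), so the Chernoff bound of Proposition~\ref{prop:chernoff2} keeps the block sizes within the budget allotted by the LP, ensuring that overloaded configurations carry negligible weight.

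The crux, I expect, is stage (i): extracting from the four-constraint LP solution the precise occupancy model that produces the constant $1-\tfrac{k_1^{k_1}e^{-k_1}}{k_1!}$, and understanding why the bound is asymmetric in $k_1$ and $k_2$ — only one decoder pays the list-decoding penalty, the other being absorbed into the merge — which comes down to verifying that the inclusion-exclusion inequality $p-r^1-r^2+r\geq 0$ is exactly what lets the sampling preserve the objective up to that factor. A secondary point is getting the exponents in stage (ii) to be $k_1,k_2$ rather than $\ell_1,\ell_2$, which requires the sharp form of the concavity estimate from the proof of Theorem~\ref{theo:DQGapprox}. Once these are in place, concentration via negative association upgrades the in-expectation bound to the existence of a single good code, and Proposition~\ref{prop:BCCisDQG} converts it into the claimed inequality between $\mathrm{S}$ and $\mathrm{S}^{\mathrm{NS}}$.
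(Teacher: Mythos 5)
Your derivation of the corollary from Theorem~\ref{theo:NSdet} is correct and essentially identical to the paper's: apply the theorem to $W^{\otimes n}$ with $k_b$ at rate $R_b$ and $\ell_b$ at a slightly smaller rate, observe that all three prefactors tend to $1$ (the paper uses $\ell_b = 2^{nR_b}/n$ where you use $\ceil{2^{(R_b-\delta)n}}$, an immaterial difference), and conclude by closure of $\mathcal{C}(W)$. Your remark about rate pairs with a zero coordinate is a sensible precaution the paper glosses over, since for $k_1=1$ the factor $1-k_1^{k_1}e^{-k_1}/k_1!$ does not tend to $1$.

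However, your sketch of Theorem~\ref{theo:NSdet} itself diverges substantially from the paper's proof, and in ways that matter. The paper does \emph{not} round the non-signaling LP solution into a code. It uses the LP optimum only to establish a purely combinatorial upper bound, $\mathrm{S}^{\mathrm{NS}}(W,k_1,k_2)\leq \min\bigl(k_1k_2,\sum_{y_1}\min(k_2,\deg(y_1))\bigr)/(k_1k_2)$, obtained by chaining the constraints $r_{x,y_1,y_2}\leq p_x$, $\sum_x p_x=k_1k_2$, $r_{x,y_1,W_2(x)}\leq r^1_{x,y_1}$ and $\sum_x r^1_{x,y_1}=k_2$; the constraint $p-r^1-r^2+r\geq 0$, which you identify as the crux, plays no role. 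The code is then built with no reference to the LP: a \emph{uniformly random} partition of $\mathcal{Y}_2$ into $\ell_2$ parts followed by a uniformly random partition of $\mathcal{Y}_1$ into $\ell_1$ parts. The factor $1-k_1^{k_1}e^{-k_1}/k_1!$ does not come from an input-sampling stage; it is the Poisson concavity ratio $\alpha_\varphi$ of $\varphi=\min(k_1,\cdot)$ (Lemma 2.2 of~\cite{BFF21}), applied to $\deg_{\mathcal{Y}_1,\mathcal{P}_2}(\mathcal{P}_2^{i_2})$ written as a sum of Bernoulli variables $\Ber\bigl(1-(1-1/\ell_2)^{\deg(y_1)}\bigr)$ over $y_1$, and the asymmetry in $k_1,k_2$ simply reflects that $\varphi$ caps at $k_1$ while the $\mathcal{Y}_1$-side loss is absorbed by the concavity estimate $f(\min(k_1,d))\geq \frac{\min(k_1,d)}{k_1}f(k_1)$. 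Finally, the negative-association and Chernoff machinery of Section~\ref{section:prelim} is not used in this proof at all (it serves the value-query hardness argument of Section~\ref{section:HardnessBC}); no concentration or block-size control is needed, since a single expectation bound suffices to extract a good partition. Whether your LP-rounding plan could be completed is unclear, but as written it attributes the key constant to the wrong mechanism and relies on tools the argument does not need.
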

\begin{proof}
  We apply Theorem~\ref{theo:NSdet} on the deterministic broadcast channel $W^{\otimes n}$.

  We fix $k_1=2^{nR_1},k_2=2^{nR_2}$ and $\ell_1=\frac{2^{nR_1}}{n},\ell_2=\frac{2^{nR_2}}{n}$. Since $1-\left(1-\frac{1}{\ell}\right)^{k} \geq 1-e^{-\frac{k}{\ell}}$, we get:
\[ \left(1 - \frac{k_1^{k_1}e^{-k_1}}{k_1!}\right)\left(1-e^{-n}\right)^2\mathrm{S}^{\mathrm{NS}}(W^{\otimes n},2^{nR_1},2^{nR_1}) \leq \mathrm{S}\left(W^{\otimes n},\frac{2^{nR_1}}{n},\frac{2^{nR_2}}{n}\right)\ . \]

  As $\left(1 - \frac{k_1^{k_1}e^{-k_1}}{k_1!}\right)\left(1-e^{-n}\right)^2$ tends to $1$ when $n$ tends to infinity, we get $\forall \varepsilon > 0$, $\exists N \in \mathbb{N}$, $\forall n\geq N$:
\[ (1-\varepsilon)\mathrm{S}^{\mathrm{NS}}(W^{\otimes n},2^{nR_1},2^{nR_1}) \leq \mathrm{S}(W^{\otimes n},2^{n(R_1-\frac{\log(n)}{n})},2^{n(R_2-\frac{\log(n)}{n})}) \ . \]

Thus, if $\underset{n \rightarrow +\infty}{\lim} \mathrm{S}^{\mathrm{NS}}(W^{\otimes n},2^{nR_1},2^{nR_1})  = 1$, we have that for all $R_1'<R_1$ and $R_2'<R_2$:
\[ \underset{n \rightarrow +\infty}{\lim} \mathrm{S}(W^{\otimes n},2^{nR_1'},2^{nR_1'}) \geq 1-\varepsilon \ . \]

Since this is true for all $\varepsilon > 0$, we get in fact that $\underset{n \rightarrow +\infty}{\lim} \mathrm{S}(W^{\otimes n},2^{nR_1'},2^{nR_1'}) = 1$. This implies that $\mathcal{C}^{\mathrm{NS}}(W) \subseteq \mathcal{C}(W)$, and thus that the capacity regions are equal as the other inclusion is always satisfied.

\end{proof}

Let us now prove the main result:

\begin{proof}[Proof of Theorem~\ref{theo:NSdet}]
  The proof will be done in three parts. We will work on the graph $G_W$ (see Definition~\ref{defi:graphGW}).
  \begin{enumerate}
  \item First, we prove that for any partition $\mathcal{P}_2$ of $\mathcal{Y}_2$ in $\ell_2$ parts:
    \[ \mathrm{S}(W,\ell_1,\ell_2) \geq \left(1-\left(1-\frac{1}{\ell_1}\right)^{k_1}\right)\frac{\sum_{i_2=1}^{\ell_2}\min\left(k_1,\deg_{\mathcal{Y}_1,\mathcal{P}_2}(\mathcal{P}_2^{i_2})\right)}{k_1\ell_2} \ . \]
  \item Then, we show that there exists a partition $\mathcal{P}_2$ such that:
    \begin{equation}
      \begin{aligned}
        &\frac{\sum_{i_2=1}^{\ell_2}\min\left(k_1,\deg_{\mathcal{Y}_1,\mathcal{P}_2}(\mathcal{P}_2^{i_2})\right)}{k_1\ell_2}\\
        &\geq \left(1 - \frac{k_1^{k_1}e^{-k_1}}{k_1!}\right)\left(1-\left(1-\frac{1}{\ell_2}\right)^{k_2}\right)\frac{\min\left(k_1k_2,\sum_{y_1}\min(k_2,\deg(y_1))\right)}{k_1k_2} \ .
      \end{aligned}
    \end{equation}
  \item Finally, we prove that:
    \[ \frac{\min\left(k_1k_2,\sum_{y_1}\min(k_2,\deg(y_1))\right)}{k_1k_2} \geq \mathrm{S}^{\textrm{NS}}(W,k_1,k_2) \ . \]
  \end{enumerate}

  By combining these three inequalities, we get precisely the claimed result.
  
  \begin{enumerate}
  \item This part shares a lot of similarities with the proof of Theorem~\ref{theo:DQGapprox}, which we will adapt to this particular situation. Let us show that if we take a partition $\mathcal{P}_1$ of $\mathcal{Y}_1$ of size $\ell_1$ uniformly at random, we get, for some fixed $\mathcal{P}_2$ of size $\ell_2$:
    \[ \mathbb{E}_{\mathcal{P}_1}[e_{G_W}(\mathcal{P}_1,\mathcal{P}_2)] \geq \frac{\ell_1}{k_1}\left(1-\left(1-\frac{1}{\ell_1}\right)^{k_1}\right)\sum_{i_2=1}^{\ell_2}\min\left(k_1,\deg_{\mathcal{Y}_1,\mathcal{P}_2}(\mathcal{P}_2^{i_2})\right) \ .\]

    Since $\ell_1\ell_2 \mathrm{S}(W,\ell_1,\ell_2) = \underset{\mathcal{P}_1 \text{ in $\ell_1$ parts},\mathcal{P}_2 \text{ in $\ell_2$ parts}}{\maxi} \ e_{G_W}(\mathcal{P}_1,\mathcal{P}_2)$ by Proposition~\ref{prop:BCCisDQG}, this will imply that:
    \begin{equation}
      \begin{aligned}
        \mathrm{S}(W,\ell_1,\ell_2) &\geq \frac{1}{\ell_1\ell_2}\mathbb{E}_{\mathcal{P}_1}[e_{G_W}(\mathcal{P}_1,\mathcal{P}_2)]\\
        &\geq \left(1-\left(1-\frac{1}{\ell_1}\right)^{k_1}\right)\frac{\sum_{i_2=1}^{\ell_2}\min\left(k_1,\deg_{\mathcal{Y}_1,\mathcal{P}_2}(\mathcal{P}_2^{i_2})\right)}{k_1\ell_2} \ .
      \end{aligned}
    \end{equation}

We have that $e_{G_W}(\mathcal{P}_1,\mathcal{P}_2) = \sum_{i_2=1}^{\ell_2}\deg_{\mathcal{P}_1,\mathcal{P}_2}(\mathcal{P}_2^{i_2})$, so by linearity of expectation, we have that $\mathbb{E}_{\mathcal{P}_1}[e_{G_W}(\mathcal{P}_1,\mathcal{P}_2)] = \sum_{i_2=1}^{\ell_2}\mathbb{E}_{\mathcal{P}_1}[\deg_{\mathcal{P}_1,\mathcal{P}_2}(\mathcal{P}_2^{i_2})]$, so we will focus on the contribution of one particular $\mathcal{P}_2^{i_2}$.

Then, we have that $\deg_{\mathcal{P}_1,\mathcal{P}_2}(\mathcal{P}_2^{i_2})=|\{i_1 \in [\ell_1]: N_{\mathcal{Y}_1,\mathcal{P}_2}(\mathcal{P}_2^{i_2}) \cap \mathcal{P}_1^{i_1} \not= \emptyset  \}|$. Recall that $\mathbb{P}\left(v_1 \in \mathcal{P}_1^{i_1}\right) = \frac{1}{\ell_1}$ for any $v_1$ since the partition is taken uniformly at random. Thus:
\begin{equation}
  \begin{aligned}
    &\mathbb{E}_{\mathcal{P}_1}[\deg_{\mathcal{P}_1,\mathcal{P}_2}(\mathcal{P}_2^{i_2})] = \mathbb{E}_{\mathcal{P}_1}\left[|\{i_1 \in [\ell_1]: N_{\mathcal{Y}_1,\mathcal{P}_2}(\mathcal{P}_2^{i_2}) \cap \mathcal{P}_1^{i_1} \not= \emptyset  \}|\right]\\
    &= \mathbb{E}_{\mathcal{P}_1}\left[\sum_{i_1=1}^{\ell_1} \mathbbm{1}_{N_{\mathcal{Y}_1,\mathcal{P}_2}(\mathcal{P}_2^{i_2}) \cap \mathcal{P}_1^{i_1} \not= \emptyset}\right]
    = \sum_{i_1 = 1}^{\ell_1} \mathbb{E}_{\mathcal{P}_1}\left[\mathbbm{1}_{N_{\mathcal{Y}_1,\mathcal{P}_2}(\mathcal{P}_2^{i_2}) \cap \mathcal{P}_1^{i_1} \not= \emptyset}\right]\\
    &= \sum_{i_1 = 1}^{\ell_1}\mathbb{P}\left(N_{\mathcal{Y}_1,\mathcal{P}_2}(\mathcal{P}_2^{i_2}) \cap \mathcal{P}_1^{i_1} \not= \emptyset\right)
    = \sum_{i_1 = 1}^{\ell_1}\left(1 - \mathbb{P}\left(N_{\mathcal{Y}_1,\mathcal{P}_2}(\mathcal{P}_2^{i_2}) \cap \mathcal{P}_1^{i_1} = \emptyset\right)\right)\\
    &= \sum_{i_1 = 1}^{\ell_1}\left(1 - \prod_{v_1 \in N(\mathcal{P}_2^{i_2})}\mathbb{P}\left(v_1 \not\in \mathcal{P}_1^{i_1})\right)\right) = \ell_1\left(1-\left(1-\frac{1}{\ell_1}\right)^{\deg_{\mathcal{Y}_1,\mathcal{P}_2}(\mathcal{P}_2^{i_2})}\right) \ .
  \end{aligned}
\end{equation}

So, in all we have that:
\begin{equation}
  \begin{aligned}
    \mathbb{E}_{\mathcal{P}_1}[e_{G_W}(\mathcal{P}_1,\mathcal{P}_2)] &= \sum_{i_2=1}^{\ell_2}\mathbb{E}_{\mathcal{P}_1}[\deg_{\mathcal{P}_1,\mathcal{P}_2}(\mathcal{P}_2^{i_2})]\\
    &= \ell_1\sum_{i_2=1}^{\ell_2}\left(1-\left(1-\frac{1}{\ell_1}\right)^{\deg_{\mathcal{Y}_1,\mathcal{P}_2}(\mathcal{P}_2^{i_2})}\right) \ .
  \end{aligned}
\end{equation}

However the function $f : x \mapsto 1-\left(1-\frac{1}{\ell_1}\right)^x$ is nondecreasing concave with $f(0)=0$, so $\frac{f(x)}{x} \geq \frac{f(y)}{y}$ for $x \leq y$. In particular, we have that:
\[ f(\min(k_1,\deg_{\mathcal{Y}_1,\mathcal{P}_2}(\mathcal{P}_2^{i_2}))) \geq \frac{\min(k_1,\deg_{\mathcal{Y}_1,\mathcal{P}_2}(\mathcal{P}_2^{i_2})))}{k_1}f(k_1) \ , \]
and thus:
\begin{equation}
  \begin{aligned}
    \mathbb{E}_{\mathcal{P}_1}[e_{G_W}(\mathcal{P}_1,\mathcal{P}_2)] &\geq \ell_1\sum_{i_2=1}^{\ell_2}\left(1-\left(1-\frac{1}{\ell_1}\right)^{\min(k_1,\deg_{\mathcal{Y}_1,\mathcal{P}_2}(\mathcal{P}_2^{i_2}))}\right)\\
    &\geq \ell_1\frac{\sum_{i_2=1}^{\ell_2}\min(k_1,\deg_{\mathcal{Y}_1,\mathcal{P}_2}(\mathcal{P}_2^{i_2}))}{k_1}\left(1-\left(1-\frac{1}{\ell_1}\right)^{k_1}\right)\\
    &= \frac{\ell_1}{k_1}\left(1-\left(1-\frac{1}{\ell_1}\right)^{k_1}\right)\sum_{i_2=1}^{\ell_2}\min\left(k_1,\deg_{\mathcal{Y}_1,\mathcal{P}_2}(\mathcal{P}_2^{i_2})\right) \ ,
  \end{aligned}
\end{equation} 
which concludes the first part of the proof.
  
\item Let us take $\mathcal{P}_2$ a partition of $\mathcal{Y}_2$ of size $\ell_2$ uniformly at random, and let us prove that:
  \[ \mathbb{E}\left[\sum_{i_2=1}^{\ell_2}\min\left(k_1,\deg_{\mathcal{Y}_1,\mathcal{P}_2}(\mathcal{P}_2^{i_2})\right)\right] \]
  is greater than or equal to:
  \[ \frac{\ell_2}{k_2}\left(1 - \frac{k_1^{k_1}e^{-k_1}}{k_1!}\right)\left(1-\left(1-\frac{1}{\ell_2}\right)^{k_2}\right)\min\left(k_1k_2,\sum_{y_1}\min(k_2,\deg(y_1))\right) \ . \]

  First, $\sum_{i_2=1}^{\ell_2}\min\left(k_1,\deg_{\mathcal{Y}_1,\mathcal{P}_2}(\mathcal{P}_2^{i_2})\right) = \sum_{i_2=1}^{\ell_2}\varphi(\deg_{\mathcal{Y}_1,\mathcal{P}_2}(\mathcal{P}_2^{i_2}))$ with $\varphi(j):=\min(k_1,j)$ which is a concave function. The Poisson concavity ratio, introduced in~\cite{BFF21}, is defined by $\alpha_{\varphi} = \inf_{x \in \mathbb{R}^+} \frac{\mathbb{E}[\varphi(\Poi(x))]}{\varphi(x)}$ and is equal to $1 - \frac{k_1^{k_1}e^{-k_1}}{k_1!}$ for that particular function~\cite{BFF21}. We will use the following property from~\cite{BFF21}:
  \begin{proposition}[Lemma 2.2 from~\cite{BFF21}]
    For $\varphi$ concave, and $p \in [0,1]^m$, we have:
    \[\mathbb{E}\left[\varphi\left(\sum_{i=1}^m\Ber(p_i)\right)\right] \geq \mathbb{E}\left[\varphi\left(\Poi\left(\sum_{i=1}^m p_i\right)\right)\right]\ .\]
  \label{prop:ConvexOrder}
  \end{proposition}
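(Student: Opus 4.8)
The plan is to recognize Proposition~\ref{prop:ConvexOrder} as a statement about the stochastic convex order. Since $\varphi$ is concave, $-\varphi$ is convex, so the desired inequality is equivalent to $\sum_{i=1}^m \Ber(p_i) \leq_{\mathrm{cx}} \Poi\!\left(\sum_{i=1}^m p_i\right)$, where the Bernoulli variables are independent; here $X \leq_{\mathrm{cx}} Y$ means $\mathbb{E}[\psi(X)] \leq \mathbb{E}[\psi(Y)]$ for every convex $\psi$, which forces $\mathbb{E}[X] = \mathbb{E}[Y]$, as holds here since both sides have mean $\sum_i p_i$.

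To establish this, I would first reduce to a single coordinate using two standard closure properties. The Poisson law is infinitely divisible, so $\Poi\!\left(\sum_i p_i\right)$ has the same distribution as $\sum_{i=1}^m \Poi(p_i)$ with independent summands; and the convex order is stable under convolution of independent families, i.e.\ if $X_i \leq_{\mathrm{cx}} Y_i$ for each $i$ with $(X_i)_i$ independent and $(Y_i)_i$ independent, then $\sum_i X_i \leq_{\mathrm{cx}} \sum_i Y_i$. Hence it is enough to prove $\Ber(p) \leq_{\mathrm{cx}} \Poi(p)$ for a single $p \in [0,1]$.

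For that single-coordinate comparison I would invoke the Karlin--Novikoff cut criterion: if two real distributions have the same mean and their cumulative distribution functions cross exactly once, the first lying above the second to the left of the crossing and below it to the right, then the first precedes the second in the convex order. Both $\Ber(p)$ and $\Poi(p)$ have mean $p$. On $[0,1)$ one has $\mathbb{P}(\Ber(p) \leq x) = 1-p \leq e^{-p} = \mathbb{P}(\Poi(p) \leq x)$, using $1-p \leq e^{-p}$; on $[1,\infty)$ one has $\mathbb{P}(\Poi(p) \leq x) = e^{-p}\sum_{k=0}^{\lfloor x \rfloor} p^k/k! < 1 = \mathbb{P}(\Ber(p) \leq x)$. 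Thus the two CDFs cross exactly once, in the correct orientation, which together with the equality of means yields $\Ber(p) \leq_{\mathrm{cx}} \Poi(p)$. (An alternative is to construct an explicit mean-preserving coupling with $\mathbb{E}[\Poi(p) \mid \Ber(p)] = \Ber(p)$ and appeal to Strassen's theorem, but the cut criterion avoids any coupling bookkeeping.) Combining with the reduction above gives $\sum_i \Ber(p_i) \leq_{\mathrm{cx}} \Poi(\sum_i p_i)$, and applying the definition of the convex order to the convex function $-\varphi$ — integrable against a Poisson law since a concave function on $\mathbb{R}^+$ grows at most linearly and the Poisson distribution has finite mean — produces exactly the inequality in the statement.

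The step I expect to be the main obstacle is the single-coordinate comparison $\Ber(p) \leq_{\mathrm{cx}} \Poi(p)$: one must state the cut criterion with the correct orientation and verify the crossing carefully, handling the degenerate endpoints $p \in \{0,1\}$ separately, where the claim is trivial. The reduction to one coordinate and the final passage back to $\varphi$ are routine invocations of standard properties of the convex order.
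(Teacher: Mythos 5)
The paper does not prove this proposition at all: it is imported verbatim as Lemma~2.2 of~\cite{BFF21}, so there is no internal argument to compare against. Your proof is a correct, self-contained derivation along the standard lines (and, as far as I can tell, essentially the route taken in the cited reference): reduce to the convex order, use infinite divisibility of the Poisson law together with stability of $\leq_{\mathrm{cx}}$ under independent convolution to isolate a single coordinate, and settle $\Ber(p) \leq_{\mathrm{cx}} \Poi(p)$ by a single-crossing argument. One slip to fix: your prose statement of the Karlin--Novikoff cut criterion has the orientation reversed. The correct criterion is that if the means agree and $F_X \leq F_Y$ to the \emph{left} of the unique crossing and $F_X \geq F_Y$ to the \emph{right}, then $X \leq_{\mathrm{cx}} Y$ (the more spread-out law $Y$ puts more mass in both tails, hence has the larger CDF for small arguments). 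Your statement says the first CDF lies \emph{above} the second on the left, which would give the reverse conclusion; fortunately the inequalities you actually verify ($1-p \leq e^{-p}$ on $[0,1)$ and $F_{\Ber}=1 \geq F_{\Poi}$ on $[1,\infty)$) are exactly the ones required by the correctly oriented criterion, so the argument goes through once the wording is corrected. A cosmetic remark on integrability: a concave $\varphi$ on $\mathbb{R}_+$ need not have at most linear growth in absolute value (only its positive part does), so $\mathbb{E}[\varphi(\Poi(\lambda))]$ is a priori only well defined in $[-\infty,+\infty)$; the inequality is then trivially true when the right-hand side is $-\infty$, and in the paper's application $\varphi = \min(k_1,\cdot)$ is bounded, so nothing is lost.
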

  
  Let us find the law of $\deg_{\mathcal{Y}_1,\mathcal{P}_2}(\mathcal{P}_2^{i_2})$:
  \begin{equation}
    \begin{aligned}
      \deg_{\mathcal{Y}_1,\mathcal{P}_2}(\mathcal{P}_2^{i_2}) &= \sum_{y_1} \mathbbm{1}_{N(y_1) \cap \mathcal{P}_2^{i_2} \not= \emptyset} = \sum_{y_1} \left(1-\mathbbm{1}_{N(y_1) \cap \mathcal{P}_2^{i_2} = \emptyset}\right)\\
      &= \sum_{y_1} \left(1-\mathbbm{1}_{\forall y_2 \in N(y_1), y_2 \not\in \mathcal{P}_2^{i_2}}\right) = \sum_{y_1} \Ber\left(1-\left(1-\frac{1}{\ell_2}\right)^{\deg(y_1)}\right)
    \end{aligned}
  \end{equation}

Thus:
\begin{equation}
  \begin{aligned}
    &\mathbb{E}\left[\varphi(\deg_{\mathcal{Y}_1,\mathcal{P}_2}(\mathcal{P}_2^{i_2}))\right] = \mathbb{E}\left[\varphi\left(\sum_{y_1}\Ber\left(1-\left(1-\frac{1}{\ell_2}\right)^{\deg(y_1)}\right)\right)\right]\\
    &\geq  \mathbb{E}\left[\varphi\left(\Poi\left(\sum_{y_1}\left(1-\left(1-\frac{1}{\ell_2}\right)^{\deg(y_1)}\right)\right)\right)\right] \text{ by Proposition~\ref{prop:ConvexOrder}}\\
    &\geq \alpha_{\varphi}\varphi\left(\sum_{y_1}\left(1-\left(1-\frac{1}{\ell_2}\right)^{\deg(y_1)}\right)\right) \text{ by definition of $\alpha_{\varphi}$.}
  \end{aligned}
\end{equation}

But:
\begin{equation}
  \begin{aligned}
    \sum_{y_1}\left(1-\left(1-\frac{1}{\ell_2}\right)^{\deg(y_1)}\right) &\geq \sum_{y_1}\left(1-\left(1-\frac{1}{\ell_2}\right)^{\min\left(k_2,\deg(y_1)\right)}\right)\\
    &\geq \left(1-\left(1-\frac{1}{\ell_2}\right)^{k_2}\right)\frac{1}{k_2}\sum_{y_1}\min\left(k_2,\deg(y_1)\right) \ ,
  \end{aligned}
\end{equation}
as before. Since $\varphi$ is concave and $\varphi(0)=0$, we have in particular that for all $0 \leq c \leq 1$ and $x \in \mathbb{R}, \varphi(cx)\geq c\varphi(x)$. We know also that $\varphi$ is nondecreasing. This implies that:
\begin{equation}
  \begin{aligned}
    &\varphi\left(\sum_{y_1}\left(1-\left(1-\frac{1}{\ell_2}\right)^{\deg(y_1)}\right)\right)\\
    &\geq \varphi\left(\left(1-\left(1-\frac{1}{\ell_2}\right)^{k_2}\right)\frac{1}{k_2}\sum_{y_1}\min\left(k_2,\deg(y_1)\right)\right) \\
&\geq \left(1-\left(1-\frac{1}{\ell_2}\right)^{k_2}\right)\varphi\left(\frac{1}{k_2}\sum_{y_1}\min\left(k_2,\deg(y_1)\right)\right) \ ,
  \end{aligned}
\end{equation}
as $0 \leq 1-\left(1-\frac{1}{\ell_2}\right)^{k_2} \leq 1$. Thus $\mathbb{E}\left[\varphi(\deg_{\mathcal{Y}_1,\mathcal{P}_2}(\mathcal{P}_2^{i_2})))\right]$ is larger than or equal to:
\begin{equation}
  \begin{aligned}
    &\alpha_{\varphi}\left(1-\left(1-\frac{1}{\ell_2}\right)^{k_2}\right)\min\left(k_1,\frac{1}{k_2}\sum_{y_1}\min\left(k_2,\deg(y_1)\right)\right)\\
    &= \frac{1}{k_2}\left(1 - \frac{k_1^{k_1}e^{-k_1}}{k_1!}\right)\left(1-\left(1-\frac{1}{\ell_2}\right)^{k_2}\right)\min\left(k_1k_2,\sum_{y_1}\min\left(k_2,\deg(y_1)\right)\right)\\
  \end{aligned}
\end{equation}

since $\alpha_{\varphi} = 1 - \frac{k_1^{k_1}e^{-k_1}}{k_1!}$.

Finally, $\mathbb{E}\left[\sum_{i_2=1}^{\ell_2}\min\left(k_1,\deg_{\mathcal{Y}_1,\mathcal{P}_2}(\mathcal{P}_2^{i_2})\right)\right] = \sum_{i_2=1}^{\ell_2}\mathbb{E}\left[\varphi(\deg_{\mathcal{Y}_1,\mathcal{P}_2}(\mathcal{P}_2^{i_2}))\right]$, so we get that:
\[ \mathbb{E}\left[\sum_{i_2=1}^{\ell_2}\min\left(k_1,\deg_{\mathcal{Y}_1,\mathcal{P}_2}(\mathcal{P}_2^{i_2})\right)\right] \]
is larger than or equal to:
\[ \frac{\ell_2}{k_2}\left(1 - \frac{k_1^{k_1}e^{-k_1}}{k_1!}\right)\left(1-\left(1-\frac{1}{\ell_2}\right)^{k_2}\right)\min\left(k_1k_2,\sum_{y_1}\min\left(k_2,\deg(y_1)\right)\right) \ . \]

Thus, in particular, there exists some partition $\mathcal{P}_2$ that satisfies the same inequality, which concludes the second part of the proof.

\item Let us consider an optimal solution $r_{x,y_1,y_2},p_x,r^1_{x,y_1},r^2_{x,y_2}$ of the program computing $\mathrm{S}^{\textrm{NS}}(W,k_1,k_2)$, so that $\mathrm{S}^{\textrm{NS}}(W,k_1,k_2) = \frac{1}{k_1k_2}\sum_x r_{x,W_1(x),W_2(x)}$.
  \begin{enumerate}
  \item It comes directly from $r_{x,y_1,y_2} \leq p_x$ that:
    \[ \sum_x r_{x,W_1(x),W_2(x)} \leq \sum_x p_x = k_1k_2 \ . \]
  \item $\sum_x r_{x,W_1(x),W_2(x)} = \sum_{y_1}\sum_{x:W_1(x)=y_1} r_{x,y_1,W_2(x)}$ and we have that:
    \begin{enumerate}
    \item $\sum_{x:W_1(x)=y_1} r_{x,y_1,W_2(x)} \leq \sum_{x:W_1(x)=y_1} 1 = \deg(y_1)$ ,
    \item $\sum_{x:W_1(x)=y_1} r_{x,y_1,W_2(x)} \leq \sum_{x:W_1(x)=y_1} r^1_{x,y_1} \leq \sum_x r^1_{x,y_1} = k_2$ ,
    \end{enumerate}
    so $\sum_{x:W_1(x)=y_1} r_{x,y_1,W_2(x)} \leq \min(k_2,\deg(y_1))$, and thus:
    \[ \sum_x r_{x,W_1(x),W_2(x)} \leq \sum_{y_1}\min(k_2,\deg(y_1)) \ .\]
  \end{enumerate}

  In all, we get that:
  \[ \mathrm{S}^{\textrm{NS}}(W,k_1,k_2) = \frac{1}{k_1k_2}\sum_x r_{x,W_1(x),W_2(x)} \leq \frac{\min\left(k_1k_2,\sum_{y_1}\min(k_2,\deg(y_1))\right)}{k_1k_2} \ , \]
  which concludes the third and last part of the proof.
  \end{enumerate}
\end{proof}

\section{Hardness of Approximation for Broadcast Channel Coding}
\label{section:HardnessBC}
Since broadcast channels are more general than point-to-point channels (by defining $W_1(y_1|x):=\hat{W}(y_1|x)$ for $\hat{W}$ a point-to-point channel and taking $W_2(y_2|x)=\frac{1}{|\mathcal{Y}_2|}$ a completely trivial channel), computing a single value $\mathrm{S}(W,k_1,k_2)$ is \textrm{NP}-hard, and it is even \textrm{NP}-hard to approximate within a better factor than $1-e^{-1}$, as a consequence of the hardness result for point-to-point channels from~\cite{BF18}.

The goal of this section is to give some evidence for the hardness of approximation of the general broadcast channel coding problem, specifically that it cannot be approximated in polynomial time within a $\Omega(1)$ factor. This suggests that non-signaling assistance might enlarge the capacity region of the channel as discussed in the introduction.

Formally, one would want to show that it is \textrm{NP}-hard to approximate this problem within a $\Omega(1)$ factor in polynomial time. We were however unable to prove this. Instead, we will prove a $\Omega\left(\frac{1}{\sqrt{m}}\right)$-approximation hardness in the value query model.

First, let us introduce formally the problem:
\begin{definition}[\textsc{BCC}]
  Given a channel $W$ and integers $k_1,k_2$, the broadcast channel coding problem, which we call \textsc{BCC}, entails maximizing:
  \[ \mathrm{S}(W,k_1,k_2,e,d_1,d_2) := \frac{1}{k_1k_2}\sum_{i_1,i_2,y_1,y_2} W(y_1y_2|e(i_1,i_2))\mathbbm{1}_{d_1(y_1)=i_1, d_2(y_2)=i_2} \ ,\]
  over all functions $e : [k_1] \times [k_2] \rightarrow \mathcal{X}$, $d_1 : \mathcal{Y}_1 \rightarrow [k_1]$ and $d_2 : \mathcal{Y}_2 \rightarrow [k_2]$.
\end{definition}

As in the deterministic case, we restrict ourselves to deterministic encoders and decoders, which does not change the value nor the hardness of the problem. Also, it can be equivalently stated in terms of partitions corresponding to $d_1,d_2$ as:
\begin{proposition}[Equivalent formulation of \textsc{BCC}]
  Given a channel $W$ and integers $k_1$ and $k_2$, the broadcast channel coding problem, which we call \textsc{BCC}, entails maximizing:
  \[ \frac{1}{k_1k_2}\sum_{i_1,i_2} \max_x \sum_{y_1 \in \mathcal{P}_1^{i_1}, y_2 \in \mathcal{P}_1^{i_2}} W(y_1y_2|x) \ , \]
  over all partitions $\mathcal{P}_1$ of $\mathcal{Y}_1$ in $k_1$ parts and $\mathcal{P}_2$ of $\mathcal{Y}_2$ in $k_2$ parts.
\end{proposition}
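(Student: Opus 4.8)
The plan is to build an explicit dictionary between deterministic decoders and ordered partitions, and then observe that once the decoders (hence partitions) are fixed, the optimization over the encoder decouples over message pairs. This parallels the argument used in the proof of Proposition~\ref{prop:BCCisDQG}.

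\textbf{Step 1: decoders $\leftrightarrow$ partitions.} A deterministic decoder $d_1 : \mathcal{Y}_1 \to [k_1]$ is the same data as an ordered partition $\mathcal{P}_1 = (\mathcal{P}_1^1,\ldots,\mathcal{P}_1^{k_1})$ of $\mathcal{Y}_1$ into $k_1$ blocks, via $\mathcal{P}_1^{i_1} := d_1^{-1}(i_1) = \{y_1 \in \mathcal{Y}_1 : d_1(y_1) = i_1\}$; likewise $d_2 \leftrightarrow \mathcal{P}_2$. Under this identification, $\mathbbm{1}_{d_1(y_1) = i_1,\, d_2(y_2) = i_2} = \mathbbm{1}_{y_1 \in \mathcal{P}_1^{i_1}} \mathbbm{1}_{y_2 \in \mathcal{P}_2^{i_2}}$, so the objective of \textsc{BCC} rewrites as
\[ \mathrm{S}(W,k_1,k_2,e,d_1,d_2) = \frac{1}{k_1 k_2}\sum_{i_1,i_2}\ \sum_{y_1 \in \mathcal{P}_1^{i_1},\, y_2 \in \mathcal{P}_2^{i_2}} W(y_1 y_2 \mid e(i_1,i_2)) \ . \]

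\textbf{Step 2: decoupling the encoder.} Fix $\mathcal{P}_1, \mathcal{P}_2$ and maximize over $e$. The value $e(i_1,i_2)$ occurs only in the $(i_1,i_2)$-summand, and distinct summands involve distinct coordinates of $e$, so the maximum is attained by setting, independently for each $(i_1,i_2)$, $e(i_1,i_2) \in \text{argmax}_{x \in \mathcal{X}} \sum_{y_1 \in \mathcal{P}_1^{i_1},\, y_2 \in \mathcal{P}_2^{i_2}} W(y_1 y_2 \mid x)$, which gives
\[ \max_e \mathrm{S}(W,k_1,k_2,e,d_1,d_2) = \frac{1}{k_1 k_2}\sum_{i_1,i_2} \max_{x \in \mathcal{X}} \sum_{y_1 \in \mathcal{P}_1^{i_1},\, y_2 \in \mathcal{P}_2^{i_2}} W(y_1 y_2 \mid x) \ . \]

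\textbf{Step 3: take the outer maximum.} Maximizing over all decoder pairs $(d_1,d_2)$ is, by Step~1, the same as maximizing over all ordered $k_1$-block and $k_2$-block partitions $(\mathcal{P}_1,\mathcal{P}_2)$, which yields the claimed formula. The only point needing a word of care is that ``partition into $k_1$ parts'' must be understood as allowing empty blocks (to match arbitrary functions $d_1$, which is what the \textsc{BCC} definition ranges over); equivalently, since a block $\mathcal{P}_1^{i_1}=\emptyset$ contributes $\max_x 0 = 0$, deleting empty blocks never decreases the value, so when $k_1 \le |\mathcal{Y}_1|$ and $k_2 \le |\mathcal{Y}_2|$ one may restrict to genuine $k_1$- and $k_2$-part partitions without changing the optimum. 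There is no real obstacle here — the proof is routine once the decoder–partition correspondence and the per-pair optimality of the encoder are made explicit.
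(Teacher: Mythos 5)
Your proof is correct and follows exactly the argument the paper relies on (the paper states this proposition without a separate proof, but the decoder--partition correspondence and the per-pair decoupling of the encoder are precisely the steps used in the proof of Proposition~\ref{prop:BCCisDQG}). Your remark about empty blocks, and your implicit correction of the statement's typo ($y_2 \in \mathcal{P}_1^{i_2}$ should read $y_2 \in \mathcal{P}_2^{i_2}$), are both sound.
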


\subsection{Social Welfare Reformulation}
The social welfare maximization problem is defined as follows: given a set $M$ of $m$ items as well as $k$ bidders with their associated utilities $\left(v_i : 2^M \rightarrow \mathbb{R}_+\right)_{i \in [k]}$, the goal is to partition $M$ between the bidders to maximize the sum of their utilities. Formally, we want to compute:
 \[ \underset{\mathcal{P} \text{ partition in $k$ parts of } M}{\maxi}\sum_{i=1}^k v_i\left(\mathcal{P}^i\right) \ . \]

 Let us show that the subproblem of \textsc{BCC} restricted to $k_2=|\mathcal{Y}_2|$ can be reformulated as a particular instance of the social welfare maximization problem. In that case, it is easy to see that $\mathcal{P}_2 = (\{y_2\})_{y_2 \in \mathcal{Y}_2}$ is always an optimal solution. Indeed, for any partition $\mathcal{P}_2$, we have:
\begin{equation}
  \begin{aligned}
    \frac{1}{k_1|\mathcal{Y}_2|}\sum_{i_1,i_2}  \max_x \sum_{y_1 \in \mathcal{P}_1^{i_1}, y_2 \in \mathcal{P}_2^{i_2}} W(y_1y_2|x) &\leq \frac{1}{k_1|\mathcal{Y}_2|}\sum_{i_1,i_2}  \sum_{y_2 \in \mathcal{P}_2^{i_2}} \max_x \sum_{y_1 \in \mathcal{P}_1^{i_1}} W(y_1y_2|x) \\
    &= \frac{1}{k_1|\mathcal{Y}_2|}\sum_{i_1}  \sum_{y_2 \in \mathcal{Y}_2} \max_x \sum_{y_1 \in \mathcal{P}_1^{i_1}} W(y_1y_2|x) \ .
  \end{aligned}
\end{equation}

Therefore, the objective function becomes:
\[ \mathrm{S}^1(W,k_1,\mathcal{P}_1) := \frac{1}{k_1}\sum_{i_1=1}^{k_1} f_W^1(\mathcal{P}_1^{i_1}) \text{ with } f_W^1(S_1) := \frac{1}{|\mathcal{Y}_2|}\sum_{y_2} \max_x \sum_{y_1 \in S_1} W(y_1y_2|x)\ .\]

Hence, up to a multiplicative factor $k_1$, maximizing $\mathrm{S}^1(W,k_1,\mathcal{P}_1)$ over all partitions $\mathcal{P}_1$ of size $k_1$ is a particular case of the social welfare maximization problem with a common utility $f_W^1$ for all $k_1$ bidders.

\subsection{Value Query Hardness}
Let us first introduce the value query model. As described in~\cite{DS06,MSV08}, a value query to a utility $v$ asks for the value of some input set $S \subseteq M$, and gets as response $v(S) \in \mathbb{R}_+$. In the value query model, we aim at solving the social welfare maximization problem accessing the data only through value queries to $(v_i)_{i \in [k]}$.

This is more restricted than using any algorithm, but in such a model, it is possible to show unconditional lower bounds on the number of queries needed to solve a given problem within an approximation rate. In the case of the social welfare maximization problem with \textrm{XOS} utility functions, the approximation rate achievable in polynomial time has been proved in~\cite{DS06,MSV08} to be of the order of $\Theta\left(\frac{1}{\sqrt{m}}\right)$. Specifically, in~\cite{DS06}, a $\Omega\left(\frac{1}{m^{\frac{1}{2}}}\right)$-approximation in polynomial time was given, and in~\cite{MSV08}, it has been shown that any $\Omega\left(\frac{1}{m^{\frac{1}{2}-\varepsilon}}\right)$-approximation for $\varepsilon > 0$ requires an exponential number of value queries. We will adapt their proof in the particular case of one common \textrm{XOS} utility function of the form $f_W^1$ for some broadcast channel $W$. But first, let us introduce the definition of \textrm{XOS} functions and prove that $f_W^1$ is one of those.

\begin{definition}
  A linear valuation function (also known as additive) is a set function $a : 2^M \rightarrow \mathbb{R}_+$ that assigns a nonnegative value to every singleton $\{ j\}$ for $j \in M$, and for all $S \subseteq M$ it holds that $a(S) = \sum_{j \in S} a(\{ j\})$.

  A fractionally sub-additive function (\textrm{XOS}) is a set function $f : 2^M \rightarrow \mathbb{R}_+$, for which there is a finite set of linear valuation functions $A = \{ a_1, \ldots, a_{\ell} \}$ such that $f(S) = \max_{i \in [\ell]} a_i(S)$ for every $S \subseteq M$.
\end{definition}

\begin{rk}
  Note that the size of $A$ is not bounded in the definition. 
\end{rk}

\begin{proposition}
  $f_W^1$ is \textrm{XOS}.
\end{proposition}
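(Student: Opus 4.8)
The plan is to exhibit $f_W^1$ as a finite maximum of linear valuation functions on the ground set $M = \mathcal{Y}_1$, which is exactly what the definition of \textrm{XOS} requires. The building block is the observation that for each fixed $y_2 \in \mathcal{Y}_2$ and each fixed $x \in \mathcal{X}$, the map $a_{x,y_2} : S_1 \mapsto \sum_{y_1 \in S_1} W(y_1y_2|x)$ is a linear (additive) valuation function: it is additive by construction and assigns to each singleton $\{y_1\}$ the nonnegative value $W(y_1y_2|x) \geq 0$. In particular, $\max_x \sum_{y_1 \in S_1} W(y_1y_2|x) = \max_{x \in \mathcal{X}} a_{x,y_2}(S_1)$, so each inner term of $f_W^1$ is itself \textrm{XOS}.

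Next I would use that a nonnegative linear combination of \textrm{XOS} functions is again \textrm{XOS}, which follows from the separability of the maximum over independent coordinates. Concretely, one writes
\[ f_W^1(S_1) = \frac{1}{|\mathcal{Y}_2|} \sum_{y_2 \in \mathcal{Y}_2} \max_{x \in \mathcal{X}} a_{x,y_2}(S_1) = \max_{(x_{y_2})_{y_2} \in \mathcal{X}^{\mathcal{Y}_2}} \ \frac{1}{|\mathcal{Y}_2|} \sum_{y_2 \in \mathcal{Y}_2} a_{x_{y_2},y_2}(S_1) \ , \]
where the last equality holds because choosing, independently for each $y_2$, an index $x_{y_2}$ maximizing $a_{x_{y_2},y_2}(S_1)$ simultaneously maximizes the sum (for any tuple the sum is at most $\sum_{y_2}\max_x a_{x,y_2}(S_1)$, with equality attained at coordinatewise maximizers). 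For each tuple $\vec{x} = (x_{y_2})_{y_2 \in \mathcal{Y}_2} \in \mathcal{X}^{\mathcal{Y}_2}$, the function $a_{\vec{x}} := \frac{1}{|\mathcal{Y}_2|}\sum_{y_2} a_{x_{y_2},y_2}$ is again a linear valuation function, being a nonnegative linear combination of linear valuation functions; explicitly its value on $\{y_1\}$ is $\frac{1}{|\mathcal{Y}_2|}\sum_{y_2} W(y_1y_2|x_{y_2}) \geq 0$. Hence $f_W^1 = \max_{\vec{x} \in \mathcal{X}^{\mathcal{Y}_2}} a_{\vec{x}}$ is a finite maximum of linear valuation functions, so $f_W^1$ is \textrm{XOS}.

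There is essentially no obstacle here; the only point I would state carefully is the interchange of the sum over $y_2$ and the maxima over $x$ in the displayed equation, justified by the elementary ``$\max$ distributes over independent coordinates'' argument above. I would also note that the defining family $\{a_{\vec{x}} : \vec{x} \in \mathcal{X}^{\mathcal{Y}_2}\}$ has size $|\mathcal{X}|^{|\mathcal{Y}_2|}$, which is finite but possibly exponential in the input size, consistent with the preceding remark that the definition of \textrm{XOS} places no bound on the size of the family $A$.
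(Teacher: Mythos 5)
Your proof is correct and follows essentially the same route as the paper: both exhibit $f_W^1$ as a maximum over assignments $\lambda : \mathcal{Y}_2 \to \mathcal{X}$ (your tuples $\vec{x} \in \mathcal{X}^{\mathcal{Y}_2}$) of the linear valuation functions $S \mapsto \frac{1}{|\mathcal{Y}_2|}\sum_{y_2} \sum_{y_1 \in S} W(y_1 y_2 | \lambda(y_2))$, using the coordinatewise exchange of sum and max. Your write-up is in fact slightly more careful than the paper's in justifying that exchange and in noting the (possibly exponential) size of the defining family.
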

\begin{proof}
  \begin{equation}
    \begin{aligned}
      &&f_W^1(S) &= \frac{1}{|\mathcal{Y}_2|}\sum_{y_2} \max_x \sum_{y_1 \in S_1} W(y_1y_2|x) = \max_{\lambda : \mathcal{Y}_2 \rightarrow \mathcal{X}} a_{\lambda}(S) \text{ , where}\\
      &&a_{\lambda}(S) &= \frac{1}{|\mathcal{Y}_2|}\sum_{y_2}  \sum_{y_1 \in S} W(y_1y_2|\lambda(y_2)) = \sum_{y_1 \in S} \left[\frac{1}{|\mathcal{Y}_2|}\sum_{y_2}  W(y_1y_2|m(y_2))\right]\\
      &&&= \sum_{y_1 \in S} a_{\lambda}(\{ y_1 \})) \text{ with } a_{\lambda}(\{ y_1 \})) = \frac{1}{|\mathcal{Y}_2|}\sum_{y_2}  W(y_1y_2|\lambda(y_2)) \in \mathbb{R}_+
      \end{aligned}
  \end{equation}

  So $f_W^1$ is the maximum of the set of $a_{\lambda}$ for $\lambda \in \mathcal{X}^{\mathcal{Y}_1}$, which are linear valuation functions, thus $f_W^1$ is \textrm{XOS}.
\end{proof}

Let us now state the value query hardness of approximation of the broadcast channel problem:
\begin{theorem}
  \label{theo:VQhardnessBC}
  In the value query model, for any fixed $\varepsilon > 0$, a $\Omega\left(\frac{1}{m^{\frac{1}{2}-\varepsilon}}\right)$-approximation algorithm for the broadcast channel coding problem on $W,k_1,k_2$, restricted to the case of $|\mathcal{Y}_2| = k_2$ and $m = |\mathcal{Y}_1| = k_1^2$, requires exponentially many value queries to $f_W^1$.
\end{theorem}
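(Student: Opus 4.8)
The plan is to reduce the value-query hardness of \textsc{BCC} to the value-query hardness of the social welfare problem from~\cite{DS06,MSV08}, via the reformulation of the previous subsection. Recall that when $k_2 = |\mathcal{Y}_2|$ the partition $\mathcal{P}_2 = (\{y_2\})_{y_2\in\mathcal{Y}_2}$ is optimal, so maximizing $\mathrm{S}(W,k_1,k_2)$ over all codes is, up to the multiplicative factor $k_1$, the social welfare problem over the $m := |\mathcal{Y}_1|$ items of $\mathcal{Y}_1$ with $k_1$ bidders all sharing the single \textrm{XOS} utility $f_W^1$: indeed $k_1\,\mathrm{S}(W,k_1,k_2) = \max_{\mathcal{P}_1}\sum_{i_1} f_W^1(\mathcal{P}_1^{i_1})$, the maximum over partitions $\mathcal{P}_1$ of $\mathcal{Y}_1$ into $k_1$ parts. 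Consequently, an algorithm that accesses $W$ only through value queries to $f_W^1$, makes $q$ such queries, and outputs a partition with \textsc{BCC}-value within a factor $\beta$ of the optimum is precisely an algorithm solving that social welfare instance with $q$ value queries and ratio $\beta$. It therefore suffices to construct, for each $k_1$ and with $m := k_1^2$, a broadcast channel $W$ with $|\mathcal{Y}_1| = m$ and $|\mathcal{Y}_2| = k_2$ whose associated common-utility social welfare instance inherits the $\Omega(1/m^{1/2-\varepsilon})$ query lower bound.

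The second step is to produce such a hard instance. I would start from the lower-bound construction of~\cite{MSV08} (refining~\cite{DS06}): a distribution over \textrm{XOS} valuations on $m$ items built from a hidden combinatorial object, for which the optimal welfare with $\sqrt m$ bidders is $\Omega(\sqrt m)$ while every algorithm making $\mathrm{poly}(m)$ value queries outputs an allocation of welfare $O(m^{\varepsilon})$, by a union-bound/concentration adversary argument, so that no such algorithm beats ratio $O(1/m^{1/2-\varepsilon})$. The adaptation I need is to turn this into hardness for one \emph{common} utility $f$ shared by all $k_1 = \sqrt m$ bidders. Naively identifying all bidders' valuations is not enough (it collapses the planted welfare), so I would instead take $f$ defined from a single hidden partition of $M$ into $k_1$ parts of size $\sqrt m$, with $f$ built in the style of~\cite{MSV08} — as a max of scaled indicators of a large random family of near-balanced size-$\sqrt m$ sets containing this planted partition — so that the planted allocation still has welfare $\Omega(\sqrt m)$, while recovering any allocation of welfare $\omega(m^{\varepsilon})$ still forces an algorithm to locate one of the hidden size-$\sqrt m$ sets, which requires exponentially many value queries. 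The delicate point here is exactly that the boosted $m^{1/2-\varepsilon}$ gap survives the restriction to a single common utility; the weaker ``max of block-coverages'' version of $f$ only yields a polynomially weaker gap, so the recursive/random structure of~\cite{MSV08} has to be carried over carefully.

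The crux of the transfer is to realize this common utility $f$ exactly as $f_W^1$ for a genuine broadcast channel with the prescribed parameters. Writing $f = \max_{x\in\mathcal{X}} a_x$ with the $a_x$ additive, in the construction above each $a_x$ is a scaled indicator of a size-$\sqrt m$ subset of $M$, hence, after a single global rescaling of $f$, a probability distribution on $\mathcal{Y}_1 := M$ with a common total. Taking $\mathcal{Y}_2$ to be a single element, indexing the channel inputs by $\mathcal{X}$, and setting $W(y_1 y_2\mid x) := a_x(y_1)$ gives $\sum_{y_1,y_2} W(y_1 y_2\mid x) = a_x(\mathcal{Y}_1) = 1$ and $f_W^1(S) = \max_x \sum_{y_1\in S} W(y_1 y_2\mid x) = \max_x a_x(S) = f(S)$, so the hard instance becomes \textsc{BCC} on $(W,k_1,k_2)$ with $|\mathcal{Y}_1| = m = k_1^2$ and $|\mathcal{Y}_2| = k_2$. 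If, for the precise construction one uses, the additive clauses do not already share a common total, one absorbs the deficit into a single extra dummy element of $\mathcal{Y}_1$ (designing the construction on $k_1^2-1$ genuine items so that $|\mathcal{Y}_1| = k_1^2$ is preserved), and one checks that this, and the passage to a common utility, create no shortcut for value queries. Combining the three steps, a subexponential-query $\Omega(1/m^{1/2-\varepsilon})$-approximation algorithm for \textsc{BCC} restricted to $|\mathcal{Y}_2| = k_2$, $m = |\mathcal{Y}_1| = k_1^2$ would, through the reduction, contradict the lower bound of~\cite{MSV08}, proving the theorem. I expect the main obstacle to be the second step — extracting from~\cite{MSV08} a common-utility variant retaining the full $m^{1/2-\varepsilon}$ gap — with the normalization bookkeeping of the realizability step a secondary but necessary check.
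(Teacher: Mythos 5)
Your overall strategy coincides with the paper's: reduce to a single-common-utility social welfare instance via the optimality of the singleton partition $\mathcal{P}_2$, plant a hidden equi-partition of $\mathcal{Y}_1=[m]$ into $k_1=\sqrt m$ parts of size $\sqrt m$, and argue that value queries cannot distinguish the planted utility from a ``flat'' one in which the planted clauses are replaced by a dominated additive clause. The problem is that the step you yourself identify as ``the main obstacle'' --- producing a common-utility instance that retains the full $m^{1/2-\varepsilon}$ gap and proving the query indistinguishability --- is precisely the content of the proof, and you leave it as an unexecuted adaptation of~\cite{MSV08}. The paper does not invoke~\cite{MSV08} as a black box; it writes down the utility explicitly as a max of three families of additive clauses (singletons scaled by $m^{2\delta}$, a uniform clause $|S|/m^{1/2-\delta}$, and the indicators of the planted parts $T_j$; no ``large random family of near-balanced sets'' is needed), and then proves that for every fixed query $S$ the planted clauses are hidden under the singleton clause when $|S|\le m^{1/2+\delta}$ and under the uniform clause when $|S|>m^{1/2+\delta}$, except with probability $m^{1/2}e^{-m^{3\delta}/4}$. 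This concentration step is not routine: the indicators $\mathbbm{1}_{i\in T_j}$ are not independent because $(T_j)$ is an equi-partition, so one needs the Chernoff--Hoeffding bound for negatively associated variables (via the permutation-distribution property, Propositions~\ref{prop:permNA} and~\ref{prop:chernoff2}). None of this appears in your proposal, so as written it is a roadmap rather than a proof.

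A secondary but real issue is your realization of the utility as a genuine channel. You take $|\mathcal{Y}_2|=1$ and normalize the clauses by dumping the deficit $1-\lVert a_x\rVert_1$ onto a dummy item of $\mathcal{Y}_1$. Since the planted clauses, the uniform clause and the singleton clauses have very different totals, the dummy masses differ across clauses, and a query to a set containing the dummy shifts the competition between clauses by these differing offsets; a priori this can leak the hidden partition. (In this particular construction one can check it does not, because every singleton clause is supported on $\{i,d\}$ and saturates $f(S\cup\{d\})$ at $1$ for any nonempty $S$ --- but that is a non-obvious verification you have not done, and it depends on the clause structure.) The paper avoids the issue entirely with a different normalization: it takes $\mathcal{Y}_2=\mathcal{X}=[m+k_1+1]$ and defines $W(y_1y_2|x):=W(y_11|t_{y_2-1}(x))$ via cyclic translations, so that every row sums to the same constant and $f_W^1$ is exactly the intended max over clauses, with no dummy item and no extra cases to check. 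You should either adopt that device or supply the missing verification for the dummy-item variant.
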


\begin{rk}
  As our problem is a particular instance of the social welfare maximization problem with \textrm{XOS} functions, the polynomial-time $\Omega\left(\frac{1}{m^{\frac{1}{2}}}\right)$-approximation from~\cite{DS06} works also here.
\end{rk}

\begin{proof}
  The proof is inspired by Theorem 3.1 of~\cite{MSV08}. We will show using probabilistic arguments that any $\Omega\left(\frac{1}{m^{\frac{1}{2}-\varepsilon}}\right)$-approximation algorithm requires an exponential number of value queries.
  Let us fix a small constant $\delta > 0$. We choose $k_1 \in \mathbb{N}$ as the number of messages (the bidders) and the output space $\mathcal{Y}_1 := [m]$ with $m := k_1^2$ (the items). Then, we choose uniformly at random an equi-partition of $\mathcal{Y}_1$ in $k_1$ parts of size $k_1$, which we name $T_1, \ldots, T_{k_1}$.

  Let us define now $\mathcal{Y}_2 := [m+k_1+1]$. We take $\mathcal{X} := \mathcal{Y}_2 = [m+1+k_1]$ as well. We can now define our broadcast channel $W$, with some positive constant $C$ to be fixed later to guarantee that $W$ is a conditional probability distribution. Let us define its value for $y_2=1$:
  \[ W(y_1 1|x) := C\times\begin{cases}
    m^{2\delta}\mathbbm{1}_{y_1=x} & \text{when } 1 \leq x \leq m \ ,\\
  \frac{1}{m^{\frac{1}{2}-\delta}} & \text{when } x = m+1 \ ,\\
  \mathbbm{1}_{y_1 \in T_j} & \text{when } 1 \leq j := x-(m+1) \leq k_1 \ .\\
  \end{cases}
  \]

  Then, we define other $y_2$ inputs as translations of $W(y_1 1|x)$. Specifically, we define:
  \[ W(y_1y_2|x) := W(y_11|t_{y_2-1}(x)) \text{ with } t_s(x) := 1 + [(x-1+s) \mod (m+k_1+1)] \ .\] 

  All coefficients are nonnegative. So $W$ will be a channel if for all $x$, $\sum_{y_1,y_2} W(y_1 y_2|x) = 1$. However, one has, for some fixed $x_0$:
  \begin{equation}
    \begin{aligned}
      \sum_{y_1,y_2} W(y_1 y_2|x_0) &= \sum_{y_1} \sum_{y_2} W(y_1 y_2|x_0) = \sum_{y_1} \sum_{y_2}  W(y_1 1|t_{y_2-1}(x_0)) = \sum_{y_1} \sum_x  W(y_1 1|x) \\
      &= C \sum_{y_1}\left[\sum_{1 \leq i \leq m} m^{2\delta}\mathbbm{1}_{y_1=i} + \frac{1}{m^{\frac{1}{2}-\delta}} + \sum_{1 \leq j \leq k_1} \mathbbm{1}_{y_1 \in T_j}\right]\\
      &= C \left[\sum_{1 \leq i \leq m} m^{2\delta} + m\times\frac{1}{m^{\frac{1}{2}-\delta}} + \sum_{1 \leq j \leq k_1} k_1\right]\\
      &= 1 \ ,
    \end{aligned}
  \end{equation}
 by choosing $C = \frac{1}{m^{1+2\delta} + m^{\frac{1}{2}+\delta} + m}$, which does not depend on $x_0$. Thus, we have defined a correct instance of our problem. Note that on this instance, we have:
  \begin{equation}
    \begin{aligned}
      f_W^1(S) &= \frac{1}{|\mathcal{Y}_2|}\sum_{y_2} \max_x \sum_{y_1 \in S} W(y_1y_2|x) = \sum_{y_2} \max_x \sum_{y_1 \in S} W(y_11|t_{y_2-1}(x))\\
      &= \frac{m+k_1+1}{|\mathcal{Y}_2|}\max_x \sum_{y_1 \in S} W(y_11|x) \text{ since $t_{y_2-1}$ bijection}\\
      &= \frac{C(m+k_1+1)}{|\mathcal{Y}_2|} \times \max\begin{cases}
      m^{2\delta}|\{i\} \cap S| & \text{for } 1 \leq i \leq m\\
      \frac{1}{m^{\frac{1}{2}-\delta}}|S|\\
      |T_j \cap S| & \text{for } 1 \leq j \leq k_1\\
      \end{cases}
    \end{aligned}
  \end{equation}
      
  Let us also consider an alternate broadcast channel $W'$, with the only difference that $\mathbbm{1}_{y_1 \in T_j}$ is replaced by $\frac{1}{m^{\frac{1}{2}}}$, for $j \in [k_1]$. For that channel, the constant $C$ remains the same (since $\sum_j\sum_{y_1}\mathbbm{1}_{y_1 \in T_j} = k_1 \times k_1 = k_1 \times m \times \frac{1}{k_1} = \sum_j\sum_{y_1}\frac{1}{m^{\frac{1}{2}}}$), so we get that:
  \begin{equation}
    \begin{aligned}
      f_{W'}^1(S) &= \frac{C(m+k_1+1)}{|\mathcal{Y}_2|} \times \max\begin{cases}
      m^{2\delta}|\{i\} \cap S| & \text{for } 1 \leq i \leq m\\
      \frac{1}{m^{\frac{1}{2}-\delta}}|S|\\
      \frac{1}{m^{\frac{1}{2}}}|S| & \text{for } 1 \leq j \leq k_1\\
      \end{cases}\\
      &= \frac{C(m+k_1+1)}{|\mathcal{Y}_2|} \times \max\begin{cases}
      m^{2\delta}|\{i\} \cap S| & \text{for } 1 \leq i \leq m\\
      \frac{1}{m^{\frac{1}{2}-\delta}}|S|\\
      \end{cases}
    \end{aligned}
  \end{equation}

  since $\frac{1}{m^{\frac{1}{2}}}|S| \leq \frac{1}{m^{\frac{1}{2}-\delta}}|S|$. Let us consider normalized versions $v(S) := \frac{|\mathcal{Y}_2|}{C(m+k_1+1}f_W^1(S)$ and $v'(S) := \frac{|\mathcal{Y}_2|}{C(m+k_1+1}f_{W'}^1(S)$, so distinguishing between $v$ and $v'$ is the same as distinguishing between $f_W^1$ and $f_{W'}^1$. We will prove that it takes an exponential number of value queries to distinguish between $v$ and $v'$. On the one hand, one can easily show that the maximum value of the social welfare problem with $v'$ is $(k_1-1)m^{2\delta}+\frac{1}{m^{\frac{1}{2}-\delta}}(m-(k_1-1)) = O(m^{\frac{1}{2}+2\delta})$, obtained taking $(k_1-1)$ singletons as the first components of the partition (the bidders), giving the rest of $\mathcal{Y}_1$ (the items) to the last. On the other hand, the maximum value of the social welfare problem with $v$ is $k_1 \times k_1 = m$, obtained with the partition $T_1, \ldots, T_{k_1}$. The fact that it requires an exponential number of value queries to distinguish between the two situations will imply that one cannot get an approximation rate better than $\Omega\left(\frac{1}{m^{\frac{1}{2}-2\delta}}\right)$ in less than an exponential number of value queries.

  We will now prove that distinguishing between $v$ and $v'$ requires an exponential number of value queries. Note first that $v(\emptyset) = v'(\emptyset) = 0$, so we do not need to consider empty sets.

  Let us fix some non-empty set $S \subseteq [m]$. Let us define the random boolean variables $X_j^i := \mathbbm{1}_{i \in T_j}$ for $j \in [k_1]$ and $i \in [m]$. By construction of the random equi-partition $T_1, \ldots, T_{k_1}$, $(X_j^i)_{i \in [m]}$ is a permutation distribution (see Definition~\ref{defi:perm}) of $(0,\ldots,0,1,\ldots,1)$ with $m-k_1$ zeros and $k_1$ ones, each $X_j^i$ following a Bernouilli law of parameter $p := \frac{1}{k_1}$. Thus it is negatively associated by Proposition~\ref{prop:permNA}, and the sub-family $(X_j^i)_{i \in S}$ is negatively associated as well by Proposition~\ref{prop:NA_P4}. Note in particular that $|T_j \cap S| = \sum_{i \in S} X_j^i$ is a sum of negatively associated Bernouilli variables of the same parameter $p$, so the version of the Chernoff-Hoeffding bound from Proposition~\ref{prop:chernoff2} holds.
  
  Let us first assume that $S$ is of size $0 < |S| \leq m^{\frac{1}{2}+\delta}$. Then, we have that $\frac{1}{m^{\frac{1}{2}-\delta}}|S| \leq m^{2\delta}$, so we get that $v'(S) = m^{2\delta}$. On the other hand, we have that:
    \begin{equation}
    \begin{aligned}
      v(S) &= \max\begin{cases}
      m^{2\delta}\\
      |T_j \cap S| \text{ for } 1 \leq j \leq k_1\\
      \end{cases}
    \end{aligned}
    \end{equation}

    Thus, $v(S)$ is different from $v'(S)$ if and only if $\exists j \in k_1, |T_j \cap S| > m^{2\delta}$. But, we have:
    \begin{equation}
      \begin{aligned}
        &\mathbb{P}\left(\exists j \in k_1, |T_j \cap S| > m^{2\delta}\right) \leq \sum_{j \in [k_1]} \mathbb{P}\left(|T_j \cap S| > m^{2\delta}\right) \text{ by union bound}\\
        &=\sum_{j \in [k_1]} \mathbb{P}\left(\sum_{i \in S}X_j^i > m^{2\delta}\right)
        =\sum_{j \in [k_1]} \mathbb{P}\left(\frac{1}{|S|}\sum_{i \in S}X_j^i > \left(1 + \frac{\frac{m^{2\delta}}{[S|}-1}{p}\right)p \right)\\
        &\leq \sum_{j \in [k_1]} \mathbb{P}\left(\frac{1}{|S|}\sum_{i \in S}X_j^i > \left(1 + \frac{\frac{m^{2\delta}}{|S|}}{p}\right)p \right)\\
        &\leq \sum_{j \in [k_1]}\exp\left(-\frac{p|S|}{4}\left(\frac{m^{2\delta}}{p|S|}\right)^2\right) \text{ by Proposition~\ref{prop:chernoff2}}\\
        &= \sum_{j \in [k_1]}\exp\left(-\frac{1}{4p|S|}m^{4\delta}\right) \leq \sum_{j \in [k_1]}\exp\left(-\frac{m^{-\delta}}{4}m^{4\delta}\right) \text{ since } \frac{1}{p|S|} = \frac{k_1}{|S|} \geq m^{-\delta}\\
        &= m^{\frac{1}{2}}e^{-\frac{m^{3\delta}}{4}}\ .
      \end{aligned}
    \end{equation}

    Thus, this event occurs with exponentially small probability (on the choice of the partition $T_1,\ldots,T_{k_1}$).

    Let us now study the case of $S$ of size $|S| > m^{\frac{1}{2}+\delta}$. Then, we have that $\frac{1}{m^{\frac{1}{2}-\delta}}|S| > m^{2\delta}$, so we get that $v'(S) = \frac{1}{m^{\frac{1}{2}-\delta}}|S|$. On the other hand, we have that:   
    \begin{equation}
    \begin{aligned}
      v(S) &= \max\begin{cases}
      \frac{1}{m^{\frac{1}{2}-\delta}}|S|\\
      |T_j \cap S| \text{ for } 1 \leq j \leq k_1\\
      \end{cases}
    \end{aligned}
    \end{equation}
    
    Thus, $v(S)$ is different from $v'(S)$ if and only if $\exists j \in [k_1], |T_j \cap S| > \frac{1}{m^{\frac{1}{2}-\delta}}|S|$. But, we have:
    \begin{equation}
      \begin{aligned}
        &\mathbb{P}\left(\exists j \in [k_1], |T_j \cap S| > \frac{1}{m^{\frac{1}{2}-\delta}}|S|\right) \leq \sum_{j \in [k_1]} \mathbb{P}\left(|T_j \cap S| > \frac{1}{m^{\frac{1}{2}-\delta}}|S|\right) \text{ by union bound}\\
        &=\sum_{j \in [k_1]} \mathbb{P}\left(\sum_{i \in S}X_j^i > \frac{1}{m^{\frac{1}{2}-\delta}}|S|\right)
        =\sum_{j \in [k_1]} \mathbb{P}\left(\frac{1}{|S|}\sum_{i \in S}X_j^i > \left(1 + \frac{\frac{1}{|S|m^{\frac{1}{2}-\delta}}{|S|}-1}{p}\right)p \right)\\
        &\leq \sum_{j \in [k_1]} \mathbb{P}\left(\frac{1}{|S|}\sum_{i \in S}X_j^i > \left(1 + \frac{\frac{1}{m^{\frac{1}{2}-\delta}}}{p}\right)p \right)\\
        &= \sum_{j \in [k_1]} \mathbb{P}\left(\frac{1}{|S|}\sum_{i \in S}X_j^i > \left(1 + m^{\delta}\right)p \right) \text{ since } p = \frac{1}{k_1} = \frac{1}{m^{\frac{1}{2}}}\\
        &\leq \sum_{j \in [k_1]}\exp\left(-\frac{p|S|}{4}m^{2\delta}\right) \text{ by Proposition~\ref{prop:chernoff2}}\\
        &\leq \sum_{j \in [k_1]}\exp\left(-\frac{m^{\delta}}{4}m^{2\delta}\right) \text{ since } p|S| = \frac{|S|}{k_1} \geq m^{\delta}\\
        &= m^{\frac{1}{2}}e^{-\frac{m^{3\delta}}{4}}\ .
      \end{aligned}
    \end{equation}

    Thus, this event occurs with exponentially small probability as well. We have then that for all set $S$,  $\mathbb{P}\left(v(S) \not= v'(S) \right) \leq p_{\text{leak}} := m^{\frac{1}{2}}e^{-\frac{m^{3\delta}}{4}}$, which is an exponentially small bound that does not depend on $S$.

    Hence, for every set $S$, only with exponentially small probability $p_{\text{leak}}$ can one distinguish between $v$ and $v'$. For some fixed algorithm $\mathcal{A}$, let us consider the sequence $L$ of queries made by $\mathcal{A}$ before it is able to distinguish between $v$ and $v'$: $L := (S_1, \ldots, S_n)$, with $v(S_i) = v'(S_i)$ for $i \in [n]$ and $v'(S_{n+1}) > v(S_{n+1})$. $L$ is independent of $T_1, \ldots T_{k_1}$ as no information from this partition is leaked before $S_{n+1}$. Thus, for such an algorithm to be correct, it should work for any equi-partition $T_1, \ldots T_{k_1}$. We have:
    \[ \mathbb{P}\left( \exists i \in [n] : v(S_i) \not= v'(S_i) \right) \leq \sum_{i=1}^n\mathbb{P}\left(v(S_i) \not= v'(S_i) \right) = np_{\text{leak}} \text{ by union bound.}\]

    In particular, this implies that:
    \[ \mathbb{P}\left( \forall i \in [n] : v(S_i) = v'(S_i) \right) \geq 1 - np_{\text{leak}} \ . \]

    So, if $1 - np_{\text{leak}} > 0$, i.e. $n < \frac{1}{p_{\text{leak}}}$, then there exists some equi-partition $T_1, \ldots T_{k_1}$ such that our algorithm outputs a sequence $L$ of queries of length $n$ before being able to distinguish between $v$ and $v'$. In particular, we can take $n = \frac{1}{2p_{\text{leak}}}$ so that $L$ is of exponential size. Hence, for any algorithm $\mathcal{A}$, there exists some equi-partition $T_1, \ldots T_{k_1}$ such that $\mathcal{A}$ needs an exponential number of value queries to distinguish between $v$ and $v'$. This concludes the proof of the theorem for any deterministic algorithm.

    Finally, the hardness result holds also for randomized algorithms. Indeed, let us call $\mathcal{A}_s$, the running algorithm conditioned on its random bits being $s$. $\mathcal{A}_s$ is deterministic so the previous proof holds: with high probability $p$, the sequence of $\lfloor\frac{1-p}{p_{leak}}\rfloor$ queries does not reveal anything to distinguish between $v$ and $v'$, although it is of exponential size in $m$. Then, averaging over all its random bitstrings, the same result holds, as $p_{leak}$ is independent of the equi-partition $T_1, \ldots, T_{k_1}$.
\end{proof}

\subsection{Limitations of the Model}
The main weakness of the previous result is that it highly relies on the restriction that one has access to the data only through value queries. Indeed, if one has access to the full data, it is possible to read the partition $T_1, \ldots, T_{k_1}$ which gives the optimal solution directly. This weakness comes from the fact that our utility function $f_W^1$ can be described by polynomial-size data, as it is characterized by a broadcast channel $W$, whereas if we write it in the XOS form as a maximum of linear valuation functions, it will in general have an exponential-size defining set of linear valuation functions.


\section{Conclusion}
We have studied several algorithmic aspects and non-signaling assisted capacity regions of broadcast channels. We have shown that sum success probabilities of the broadcast channel coding problem are the same with and without non-signaling assistance between decoders, and that it implied that non-signaling resource shared between decoders does not change the capacity region. For the class of deterministic broadcast channels, we have described a $(1-e^{-1})^2$-approximation algorithm running in polynomial time, and we have shown that the capacity region for that class is the same with or without non-signaling assistance. Finally, we have shown that in the value query model, we cannot achieve a better approximation ratio than $\Omega\left(\frac{1}{\sqrt{m}}\right)$ in polynomial time for the general broadcast channel coding problem, with $m$ the size of one of the outputs of the channel.

Our results suggest that full non-signaling assistance between the three parties could improve the capacity region of general broadcast channels, which is left as a major open question. An intermediate result would be to show that it is \textrm{NP}-hard to approximate the broadcast channel coding problem within any constant ratio, strengthening our hardness result without relying on the value query model. Finally, one could also try to develop approximations algorithms for other sub-classes of broadcast channels, such as semi-deterministic or degraded ones. This could be a crucial step towards showing that the capacity region for those classes is the same with or without non-signaling assistance.

\section*{Acknowledgements}
We would like to thank Stéphan Thomassé and Siddharth Barman for discussions on graph algorithms and social welfare problems respectively. We would also like to thank Mario Berta and Andreas Winter for discussions on the broadcast channel and the anonymous referees for their constructive feedback. This work is funded by the European Research Council (ERC Grant AlgoQIP, Agreement No. 851716). It has also received funding from the European Union’s Horizon 2020 research and innovation programme within the QuantERA II Programme under Grant Agreement No 101017733.
  
\bibliographystyle{unsrturl}
\bibliography{Broadcast}
\end{document}